\newtheorem{lemma}{Lemma}
\newtheorem{theorem}{Theorem}
\DeclareMathOperator{\tr}{tr}
\newcommand{\id}{1}
\newcommand{\eps}{\varepsilon}
\newcommand{\cS}{\mathcal{S}}
\newcommand{\cP}{\mathcal{P}}
\newcommand{\cE}{\mathcal{E}}
\newcommand{\suppress}[1]{}
\newtheorem{definition}{Definition}
\begin{document}

\title{\Large Partially smoothed information measures}

\author{Anurag Anshu}
\affiliation{Center for Quantum Technologies, National University of Singapore, Singapore}
\author{Mario Berta}
\affiliation{Department of Computing, Imperial College London, London, United Kingdom}
\author{Rahul Jain}
\affiliation{Center for Quantum Technologies, National University of Singapore and MajuLab, UMI 3654, Singapore}
\author{Marco Tomamichel}
\affiliation{Centre for Quantum Software and Information, University of Technology Sydney, Sydney, Australia}


\begin{abstract}
Smooth entropies are a tool for quantifying resource trade-offs in (quantum) information theory and cryptography. In typical bi- and multi-partite problems, however, some of the sub-systems are often left unchanged and this is not reflected by the standard smoothing of information measures over a ball of close states. We propose to smooth instead only over a ball of close states which also have some of the reduced states on the relevant sub-systems fixed. This partial smoothing of information measures naturally allows to give more refined characterizations of various information-theoretic problems in the one-shot setting. In particular, we immediately get asymptotic second-order characterizations for tasks such as privacy amplification against classical side information or classical state splitting. For quantum problems like state merging the general resource trade-off is tightly characterized by partially smoothed information measures as well.
\end{abstract}

\maketitle


\section{Introduction}

One-shot information theory concerns itself with finding tight bounds on the resource trade-offs for various operational problems in information theory and cryptography (see, e.g., \cite{mybook} for an introduction). Smooth entropies and smooth mutual informations have in many cases proven to be adequate information measures in this context. On the one hand, smooth min-entropy was first introduced in the context of quantum cryptography~\cite{renner05}. More precisely, the smooth conditional min-entropy was introduced to characterize the amount of uniform and independent randomness that can be extracted from a correlated random variable. On the other hand, the smooth max-information has been introduced to quantify the communication requirements in quantum extensions of Slepian-Wolf coding~\cite{bertachristandl11}. Since then smooth entropy measures of various kinds have been used to characterize a plethora of other tasks as well.

Our main contributions can be summarized as follows:
\begin{itemize}
  \item We introduce a notion of partially smoothed mutual max-information and conditional min-entropy and establish some of their basic mathematical properties.
  \item We show that these new definitions are equivalent to their fully smoothed counterparts, up to terms that vanish in the first-order i.i.d.\ asymptotics. Moreover, for the fully classical case this equivalence even holds for the asymptotic second-order i.i.d.\ asymptotics.
  \item We give several examples of operational problems where the new quantities naturally appear to give tighter bounds for the one-shot problem. In particular, for classical problems this leads to asymptotic second-order i.i.d.\ expansions.
\end{itemize}

The remainder of this paper is organized as follows. In~\cref{sec:def} we introduce our new definition of smooth mutual max-information and conditional min-entropy with restricted smoothing. In~\cref{sec:equi} we show that these definitions are, up to small correction terms, equivalent to the standard definitions found in the literature. We are able to show stronger equivalences for the special case of classical distributions. Finally, \cref{sec:op} discusses various operational interpretations of the new measures in detail, both in the classical and quantum context.


\section{Definitions and basic properties}\label{sec:def}

\subsection{Basic notation}

For any finite-dimensional inner product space describing a quantum system\,---\,let us fix it to $A$ for the sake of clarity\,---\, we define the set of positive semi-definite (psd) operators acting on $A$ as $\cP(A)$. We also define two subsets: the set of quantum states (i.e.\ psd operators with unit trace), denoted $\cS_{\circ}(A)$, and sub-normalized states (i.e.\ psd operators with trace not exceeding unity), denoted $\cS_{\bullet}(A)$. We describe joint quantum systems using the shorthand $AB = A \otimes B$.

The L\"owner partial order of operators in $\cP(A)$, denoted by $'\geq'$, is given by the relation $A \geq B$ if and only if $A - B$ is psd. Moreover, we say that $A$ dominates $B$, denoted $A \gg B$, if and only if the support of $B$ is contained in the support of $A$. 


\subsection{Smooth entropy measures}

Let $\rho$ and $\sigma$ be two psd operators. If $\rho \ll \sigma$ we define the max-divergence~\cite{jain02,datta08} as
\begin{align}
  D_{\max}(\rho\|\sigma) := \inf \big\{ \lambda : \rho \leq \exp(\lambda) \sigma \big\} \,,
\end{align}
and otherwise it is defined as $+\infty$. This quantity can be used to define various notions of mutual max-information and conditional min-entropy, respectively. We will concern ourselves with the following two definitions~\cite{renner05,bertachristandl11}.
For any bipartite state $\rho_{AB} \in \cS_{\circ}(AB)$, we have
\begin{align}
  I_{\max}(A ; B)_{\rho} &:= \inf_{\sigma_B \in \cS_{\bullet}(B)} D_{\max}(\rho_{AB} \| \rho_A \otimes \sigma_B) \,,\label{eq:Imax}\\
  H_{\min}(A | B)_{\rho} &:= - D_{\max}(\rho_{AB} \| \id_A \otimes \rho_B) \,.\label{eq:Hmin}
\end{align}
One immediate observation is that $\sigma_B \in \cS_\circ(B)$ would have worked equally well since the minimizer will always have maximal trace. Moreover, many variations of these definitions can be found in the literature. Most prominently, the min-entropy can be defined using a maximization over $\sigma_B \in \cS_{\bullet}(B)$ similar to the max-information, which yields a quantity with clear operational interpretation~\cite{renner05,koenig08} (see also~\cite{ciganovic13} about the max-information). The above choices are determined by the applications we discuss in Section~\ref{sec:op}.

Our goal is to define a smooth max-information and smooth min-entropy based on the above quantities, i.e.\ quantities for which $\rho_{AB}$ is replaced with a ball of states close to $\rho_{AB}$. In particular, we want the states in this ball to have the property that the $A$ subsystem is (essentially) left intact. To do this we will need to use a metric on (sub-normalized) quantum states, i.e.\ positive semi-definite operators with trace not exceeding 1. This metric, let us denote it by $\Delta(\cdot, \cdot)$, needs to have the following properties (we assume $\rho, \sigma$ and $\tau$ are sub-normalized quantum states).
\begin{enumerate}
  \item Positive definiteness: $\Delta(\rho, \sigma) \geq 0$ with equality only if $\rho = \sigma$.
  \item Triangle inequality: $\Delta(\rho, \sigma) \leq \Delta(\rho, \tau) + \Delta(\tau, \sigma)$.
  \item Strong monotonicity: $\Delta(\cE(\rho), \cE(\sigma)) \leq \Delta(\rho, \sigma)$ for any completely positive and trace non-increasing map $\cE$.
\end{enumerate}

We in particular will consider two metrics satisfying the above properties. The purified distance~\cite{tomamichel09} based on the generalized fidelity, 
\begin{align}
  P(\rho,\tau) := \sqrt{1 - F^2(\rho,\tau)} \quad \textrm{with} \quad
  F(\rho,\tau) := \tr\Big[\big|\sqrt{\rho}\sqrt{\tau} \big|\Big] + \sqrt{(1-\tr[\rho])(1-\tr[\tau])} \,,
\end{align}
and the generalized trace distance~\cite{mybook},
\begin{align}
  T(\rho,\tau) := \frac12 \tr\Big[\big| \rho - \tau \big|\Big] + \frac12 \big| \tr[\rho] - \tr[\tau] \big| \,.
\end{align}
We have the equivalence
\begin{align}\label{eq:pt-equivalence}
T(\rho,\tau)\leq P(\rho,\tau)\leq\sqrt{2T(\rho,\tau)}
\end{align}
from~\cite[Lem.~3.5]{mybook}. We will also use the abbreviations for the standard measures
\begin{align}
\bar{F}(\rho,\tau):=\tr\Big[\big|\sqrt{\rho}\sqrt{\tau} \big|\Big]\quad \textrm{and} \quad \|\rho-\tau\|_1:=\tr\Big[\big| \rho - \tau \big|\Big]\,.
\end{align}
In the following, let $\Delta$ be a metric satisfying the above Properties 1--3. Moreover, we call a tuple $(\eps, \Delta)$ with $\eps \geq 0$ \emph{valid} for a state $\rho$ if $\Delta(\rho, 0) > \eps$, where $0$ denotes the additive identity.\footnote{This is simply to ensure that the all zero matrix is not in an $\eps$-neighbourhood of $\rho$.} The following two definitions are rather standard (see, e.g., \cite{mybook} for an overview):

\begin{definition}\label{def:oldsmooth}
  Let $\rho_{AB} \in \cS_{\circ}(AB)$ with valid $(\eps, \Delta)$. The {$(\eps,\Delta)$-smooth max-information of $A$ and $B$} is defined as\footnote{The original definition of the smooth max-information in~\cite[Eq.~12]{bertachristandl11} was slightly different and based on $D_{\max}(\tilde{\rho}_{AB} \| \tilde{\rho}_A \otimes \sigma_B)$.}
  \begin{align}
    && I_{\max}^{\eps, \Delta}( A ; B)_{\rho}\ :=\ \inf\quad & D_{\max}(\tilde{\rho}_{AB} \|\rho_A \otimes \sigma_B) &&&&\\
    && \textnormal{s.t.}\quad & \tilde{\rho}_{AB} \in \cS_{\circ}(AB), \notag\\
    &&& \Delta(\tilde{\rho}_{AB},\rho_{AB}) \leq \eps, \notag\\
    &&&  \sigma_B \in \cS_{\circ}(B) \notag\,.
  \end{align}
  Moreover, the {$(\eps,\Delta)$-smooth conditional min-entropy of $A$ given $B$} is defined as
  \begin{align}
    && H_{\min}^{\eps, \Delta}( A | B)_{\rho}\ :=\ \sup\quad & - D_{\max}(\tilde{\rho}_{AB} \| \id_A \otimes \rho_B) &&&&\\
    && \textnormal{s.t.}\quad & \tilde{\rho}_{AB} \in \cS_{\bullet}(AB), \notag\\
    &&& \Delta(\tilde{\rho}_{AB}, \rho_{AB}) \leq \eps\,.
  \end{align}
\end{definition}

As for the non-smooth case there are other definitions in use that we will not discuss here specifically, e.g.\ in the definition of the smooth max-information one can fix $\sigma_B$ to be $\rho_B$ to arrive at a different quantity, and similarly the min-entropy can be further optimized over $\sigma_B \in \cS_{\bullet}(B)$. Note that for the smooth min-entropy it is necessary to smooth over sub-normalized states as otherwise the quantity will not be invariant under the application of local embedding maps (see also~\cite[Sec.~6.2.3]{mybook}). 

Given this, we propose the following definition for the smooth max-information.\footnote{A similar locally smoothed quantity has previously made an appearance in~\cite[Def.~1.15]{Anshu17-2} as a proof tool.}

\begin{definition}\label{def:smoothmaxinfo}
  Let $\rho_{AB} \in \cS_{\circ}(AB)$ with valid $(\eps, \Delta)$. The \emph{$(\eps,\Delta)$-smooth max-information with fixed $A$ of $\rho_{AB}$} is defined as
  \begin{align}
     && I_{\max}^{\eps, \Delta}( \dot{A} ; B)_{\rho}\ :=\ \inf\quad & D_{\max}(\tilde{\rho}_{AB} \| \rho_A \otimes \sigma_B) &&&&\\
     && \textnormal{s.t.}\quad & \tilde{\rho}_{AB} \in \cS_{\circ}(AB), \notag\\
     &&& \Delta(\tilde{\rho}_{AB},\rho_{AB}) \leq \eps, \notag\\
     &&& \tilde{\rho}_A = \rho_A, \notag\\
     &&&  \sigma_B \in \cS_{\circ}(B) \notag\,.
  \end{align}
\end{definition}

For the purpose of this paper we also suggest the following definition of smooth conditional min-entropy.

\begin{definition}\label{def:smoothminentropy}
  Let $\rho_{AB} \in \cS_{\circ}(AB)$ with valid $(\eps, \Delta)$. Then, the \emph{$(\eps,\Delta)$-smooth min-entropy with fixed $B$ of $\rho_{AB}$} is defined as
  \begin{align}
    && H_{\min}^{\eps, \Delta}(A | \dot{B})_{\rho}\ :=\ \sup\quad & -D_{\max}(\tilde{\rho}_{AB} \| \id_A \otimes \rho_B) &&&&\\
    && \textnormal{s.t.}\quad & \tilde{\rho}_{AB} \in \cS_{\bullet}(AB), \notag\\
    &&& \Delta(\tilde{\rho}_{AB},\rho_{AB}) \leq \eps, \notag\\
    &&& \tilde{\rho}_B \leq \rho_B, \notag\,. 
  \end{align}
\end{definition}

Note that smooth versions of all conditional R\'enyi entropies (see, e.g.,~\cite{tomamichel13}) can be defined analogously. However, we will not explore these definitions further here.

If the input states are classical in a fixed basis all the definitions apply for this case as well. It is then immediate to see that the respective optimizations over $\tilde{\rho}_{AB}$ and $\sigma_B$ can without loss of generality be restricted to be diagonal in this fixed basis as well.\footnote{To see this, for example for the smooth max-information, note that if this were not so then the full dephasing map (in the classical basis) could be applied to both sides of the operator inequality
\begin{align}
    \tilde{\rho}_{AB} \leq \rho_A \otimes \sigma_B \,,
\end{align}
yielding a new feasible solution since the distance between $\rho_{AB}$ and $\tilde{\rho}_{AB}$ is also reduced when the dephasing map is applied due to Lem.~3.}


\subsection{Basic properties}

We will now discuss some basic properties of the quantities introduced above. In the following lemmas we assume that $\Delta$ satisfies Properties 1--3. Let us explore the above two definitions. The first property is an immediate consequence of the positive definiteness of $\Delta$.

\begin{lemma}
Let $\rho_{AB} \in \cS_{\circ}(AB)$. Then, we have
\begin{align}
I_{\max}^{0, \Delta}(\dot{A} ; B)_{\rho} = I_{\max}(A ; B)_{\rho}
\qquad \textrm{and} \qquad
H_{\min}^{0, \Delta}(A|\dot{B})_{\rho} = H_{\min}(A|B)_{\rho}\,.
\end{align}
\end{lemma}

The second lemma, on the other hand, is a direct consequence of the triangle inequality of $\Delta$.

\begin{lemma}
  Let $\rho_{AB}, \tilde{\rho}_{AB} \in \cS_{\circ}(AB)$ with $\Delta(\rho_{AB}, \tilde{\rho}_{AB}) \leq \eta$, and $(\eps\!+\!\eta, \Delta)$ valid for $\rho_{AB}$. Then, we have
  \begin{align}
    I_{\max}^{\eps+\eta, \Delta}(\dot{A} ; B)_{\rho} \leq I_{\max}^{\eps, \Delta}(\dot{A} ; B)_{\tilde{\rho}} 
    \qquad \textrm{and} \qquad
    H_{\min}^{\eps+\eta, \Delta}(A ; \dot{B})_{\rho} \geq H_{\min}^{\eps, \Delta}(A ; \dot{B})_{\tilde{\rho}} 
    \,.
  \end{align}  
\end{lemma}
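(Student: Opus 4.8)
The plan is to follow the usual template for continuity of smoothed quantities: exhibit an optimizer of the right-hand side and promote it to an admissible candidate for the left-hand side, letting the enlarged radius $\eps+\eta$ absorb the distance between $\rho_{AB}$ and $\tilde\rho_{AB}$. I would treat both inequalities in parallel. For the max-information, let $(\bar\rho_{AB},\sigma_B)$ attain $I_{\max}^{\eps,\Delta}(\dot A;B)_{\tilde\rho}$, so that $\bar\rho_A=\tilde\rho_A$, $\Delta(\bar\rho_{AB},\tilde\rho_{AB})\le\eps$, and $\bar\rho_{AB}\le\exp(\lambda)\,\tilde\rho_A\otimes\sigma_B$ with $\lambda:=I_{\max}^{\eps,\Delta}(\dot A;B)_{\tilde\rho}$. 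By the triangle inequality (Property~2) we get $\Delta(\bar\rho_{AB},\rho_{AB})\le\eps+\eta$, so the distance constraint of the $\rho$-problem is met for free; this is the step the preceding remark alludes to.

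The difficulty is that the remaining ingredients of \cref{def:smoothmaxinfo} are pinned to $\rho$ rather than to $\tilde\rho$: the definition demands $\tilde\rho_A=\rho_A$ and measures $D_{\max}$ against the reference $\rho_A\otimes\sigma_B$, whereas $\bar\rho_{AB}$ has marginal $\tilde\rho_A$ and only dominates correctly against $\tilde\rho_A\otimes\sigma_B$. To realign these I would re-marginalize through the congruence $W:=\rho_A^{1/2}\tilde\rho_A^{-1/2}$ acting on $A$, setting $\hat\rho_{AB}:=(W\otimes\id_B)\,\bar\rho_{AB}\,(W^\dagger\otimes\id_B)$. Because $\bar\rho_A=\tilde\rho_A$, this yields $\hat\rho_A=\rho_A$, restoring the marginal constraint, and since $D_{\max}$ is invariant under congruence by an invertible operator the dominance relation transports to $\hat\rho_{AB}\le\exp(\lambda)\,\rho_A\otimes\sigma_B$, so the objective value is unchanged. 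For the min-entropy in \cref{def:smoothminentropy} the symmetric construction on $B$, with $W=\rho_B^{1/2}\tilde\rho_B^{-1/2}$, turns $\bar\rho_B\le\tilde\rho_B$ into $\hat\rho_B\le\rho_B$ and transports $\bar\rho_{AB}\le\exp(-h)\,\id_A\otimes\tilde\rho_B$ into $\hat\rho_{AB}\le\exp(-h)\,\id_A\otimes\rho_B$, so that $-D_{\max}(\hat\rho_{AB}\|\id_A\otimes\rho_B)\ge h:=H_{\min}^{\eps,\Delta}(A|\dot B)_{\tilde\rho}$.

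The main obstacle is the remaining distance bound $\Delta(\hat\rho_{AB},\rho_{AB})\le\eps+\eta$. The congruence $W$ is in general neither a contraction nor trace non-increasing, so Property~3 does not apply to it and re-marginalization genuinely moves the state away from $\bar\rho_{AB}$. The natural way to control this is to observe that the partial trace is a channel, so Property~3 gives $\Delta(\tilde\rho_A,\rho_A)\le\eta$ (resp. $\Delta(\tilde\rho_B,\rho_B)\le\eta$), and to feed this into a triangle-inequality estimate that compares $\hat\rho_{AB}$ with the re-marginalized $\tilde\rho_{AB}$ and then with $\rho_{AB}$. Verifying that these corrections close up to the clean bound $\eps+\eta$, rather than some larger multiple of $\eta$, is the delicate point; in the special—and for the intended applications typical—case where the fixed marginals already agree ($\tilde\rho_A=\rho_A$, resp. $\tilde\rho_B=\rho_B$), the congruence is trivial, $\hat\rho_{AB}=\bar\rho_{AB}$, and the statement collapses to the pure triangle-inequality argument.
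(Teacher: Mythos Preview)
The paper offers no proof beyond the single remark that the lemma ``is a direct consequence of the triangle inequality of $\Delta$.'' You have in fact spotted the subtlety that this one-line justification glosses over: the triangle inequality certainly pushes an $\eps$-close optimizer $\bar\rho_{AB}$ of the $\tilde\rho$-problem into the $(\eps+\eta)$-ball around $\rho_{AB}$, but it says nothing about the marginal constraint ($\bar\rho_A=\tilde\rho_A$ versus the required $\rho_A$) nor about the reference operator in the objective ($\tilde\rho_A\otimes\sigma_B$ versus $\rho_A\otimes\sigma_B$). So the paper's intended argument is indeed just the pure triangle-inequality step you describe at the end, and it only goes through verbatim when the relevant fixed marginals coincide.

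Your congruence construction $W=\rho_A^{1/2}\tilde\rho_A^{-1/2}$ is a reasonable attempt to repair this, and it does correctly transport both the marginal constraint and the $D_{\max}$ reference. However, the gap you yourself flag is real: $W$ is generally neither a contraction nor trace non-increasing, so Property~3 does not control $\Delta(\hat\rho_{AB},\rho_{AB})$, and there is no evident reason the corrections should close up to exactly $\eps+\eta$ rather than some larger multiple of $\eta$. (For instance, even in the purified-distance case the best generic estimate for such a re-marginalization picks up an additive term of order $P(\rho_A,\tilde\rho_A)$ on top of the existing $\eps+\eta$; cf.\ the proof of \cref{lem:quantum_eq}, where a similar construction costs a factor $2$ in the smoothing radius.) Your proposal therefore does not establish the lemma in the generality stated, and neither does the paper's sketch; the clean statement appears to require the additional hypothesis that the fixed marginal is shared by $\rho$ and $\tilde\rho$.
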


We continue with the following observation that follows immediately from well-known properties of the max-divergence, namely that the smooth max-information is non-increasing under local completely positive trace-preserving (cptp) operations and the smooth min-entropy is non-decreasing under local operations.

\begin{lemma}\label{lm:cptp}
  Let $\rho_{AB} \in \cS_{\circ}(AB)$ with valid $(\eps, \Delta)$. For any two completely positive trace preserving maps $\mathcal{E}: \cP(A) \to \cP(A')$ and $\mathcal{F}: \cP(B) \to \cP(B')$, we have
  \begin{align}
    I_{\max}^{\eps}( \dot{A} ; B)_{\rho} \geq I_{\max}^{\eps}( \dot{A}' ; B')_{\tau} 
  \end{align}
  where $\tau_{A'B'} = (\mathcal{E} \otimes \mathcal{F})(\rho_{AB})$. Furthermore, if $\mathcal{E}$ is also sub-unital (i.e.\ it satisfies $\mathcal{E}(\id_A) \leq \id_{A'}$), then
  \begin{align}
    H_{\min}^{\eps}(A ; \dot{B})_{\rho} \leq H_{\min}^{\eps}(A' ; \dot{B}')_{\tau} \,.
  \end{align}
\end{lemma}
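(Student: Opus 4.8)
The plan is to prove both data-processing inequalities by transporting an arbitrary feasible point for the quantity on the left-hand side through the channel $\mathcal{E}\otimes\mathcal{F}$ and verifying that its image is feasible for the quantity on the right-hand side while not increasing (respectively not decreasing) the objective. The two workhorses are the data-processing inequality for the max-divergence, $D_{\max}(\mathcal{N}(\omega)\|\mathcal{N}(\xi))\leq D_{\max}(\omega\|\xi)$ for any completely positive trace-preserving map $\mathcal{N}$, and the strong monotonicity of $\Delta$ (Property 3), which guarantees that the smoothing ball around $\rho_{AB}$ is mapped into the smoothing ball around $\tau_{A'B'}$. Since $\tau_{A'B'}$ is again a normalized state, $\Delta(\tau_{A'B'},0)=1>\eps$, so $(\eps,\Delta)$ remains valid and the right-hand quantities are well defined.

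For the max-information, let $\tilde{\rho}_{AB}$ and $\sigma_B$ be feasible for $I_{\max}^{\eps,\Delta}(\dot{A};B)_{\rho}$, and set $\tilde{\tau}_{A'B'}:=(\mathcal{E}\otimes\mathcal{F})(\tilde{\rho}_{AB})$ and $\sigma'_{B'}:=\mathcal{F}(\sigma_B)$. Because $\mathcal{E}$ and $\mathcal{F}$ are trace preserving, $\tilde{\tau}_{A'B'}\in\cS_\circ(A'B')$ and $\sigma'_{B'}\in\cS_\circ(B')$, and strong monotonicity gives $\Delta(\tilde{\tau}_{A'B'},\tau_{A'B'})\leq\Delta(\tilde{\rho}_{AB},\rho_{AB})\leq\eps$. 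The crucial point is the fixed-marginal constraint: from $\tilde{\rho}_A=\rho_A$ and $\tr_{B'}\circ\mathcal{F}=\tr_B$ we obtain $\tilde{\tau}_{A'}=\mathcal{E}(\tilde{\rho}_A)=\mathcal{E}(\rho_A)=\tau_{A'}$, so the image stays in the restricted smoothing set. Applying data processing with $\mathcal{N}=\mathcal{E}\otimes\mathcal{F}$ and using $(\mathcal{E}\otimes\mathcal{F})(\rho_A\otimes\sigma_B)=\tau_{A'}\otimes\sigma'_{B'}$ then yields $D_{\max}(\tilde{\tau}_{A'B'}\|\tau_{A'}\otimes\sigma'_{B'})\leq D_{\max}(\tilde{\rho}_{AB}\|\rho_A\otimes\sigma_B)$, and taking the infimum over feasible points proves the claim.

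For the min-entropy, let $\tilde{\rho}_{AB}$ be feasible for $H_{\min}^{\eps,\Delta}(A|\dot{B})_{\rho}$ and set $\tilde{\tau}_{A'B'}:=(\mathcal{E}\otimes\mathcal{F})(\tilde{\rho}_{AB})$. Sub-normalization and the smoothing constraint transfer exactly as before. The marginal constraint now uses positivity of $\mathcal{F}$: since completely positive maps are monotone, $\tilde{\rho}_B\leq\rho_B$ implies $\tilde{\tau}_{B'}=\mathcal{F}(\tilde{\rho}_B)\leq\mathcal{F}(\rho_B)=\tau_{B'}$. For the objective, data processing applied to $(\mathcal{E}\otimes\mathcal{F})(\id_A\otimes\rho_B)=\mathcal{E}(\id_A)\otimes\tau_{B'}$ gives $D_{\max}(\tilde{\tau}_{A'B'}\|\mathcal{E}(\id_A)\otimes\tau_{B'})\leq D_{\max}(\tilde{\rho}_{AB}\|\id_A\otimes\rho_B)$. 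This is precisely where sub-unitality enters: from $\mathcal{E}(\id_A)\leq\id_{A'}$ we get $\mathcal{E}(\id_A)\otimes\tau_{B'}\leq\id_{A'}\otimes\tau_{B'}$, and since $D_{\max}(\omega\|\cdot)$ is non-increasing under enlarging its second argument in the L\"owner order, $D_{\max}(\tilde{\tau}_{A'B'}\|\id_{A'}\otimes\tau_{B'})\leq D_{\max}(\tilde{\tau}_{A'B'}\|\mathcal{E}(\id_A)\otimes\tau_{B'})$. Chaining both inequalities and negating yields the stated bound after taking the supremum.

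The only genuinely delicate step, and the one I expect to require the most care, is the min-entropy objective: the partial channel on $A$ replaces $\id_A$ by $\mathcal{E}(\id_A)$ in the reference operator, and it is only the sub-unitality hypothesis together with the antitonicity of $D_{\max}$ in its second slot that let us recover the identity. Everything else reduces to the data-processing inequality, the strong monotonicity of $\Delta$, and the elementary observation that the restricted marginal constraints $\tilde{\rho}_A=\rho_A$ and $\tilde{\rho}_B\leq\rho_B$ are preserved by $\mathcal{E}$ and $\mathcal{F}$ respectively.
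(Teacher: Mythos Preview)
Your proof is correct and is precisely the unpacking of what the paper has in mind: the paper does not give a detailed argument but simply states that the lemma ``follows immediately from well-known properties of the max-divergence,'' and your transport-a-feasible-point argument via the data-processing inequality for $D_{\max}$ together with Property~3 of $\Delta$ is exactly that. One small remark: the claim $\Delta(\tau_{A'B'},0)=1$ is not guaranteed by Properties~1--3 alone, but it does hold for both metrics $P$ and $T$ used in the paper, so the validity of $(\eps,\Delta)$ for $\tau$ is fine in this context.
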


Clearly every operational definition should be invariant under isometries as embeddings are essentially just a choice of modeling and should not effect operational quantities.

\begin{lemma}\label{lm:iso}
  Let $\rho_{AB} \in \cS_{\circ}(AB)$ with valid $(\eps, \Delta)$. For any two isometries $U: A \to A'$, $V: B \to B'$, it holds that
  \begin{align}
    I_{\max}^{\eps,\Delta}(\dot{A} ; B)_{\rho} = I_{\max}^{\eps,\Delta}(\dot{A}' ; B')_{\rho} 
    \qquad \textrm{and} \qquad  
    H_{\min}^{\eps,\Delta}(A ; \dot{B})_{\rho} = H_{\min}^{\eps,\Delta}(A' ; \dot{B}')_{\rho} \,,
  \end{align} 
  where $\rho_{A'B'} = (U \otimes V) \rho_{AB} (U \otimes V)^{\dag}$. 
\end{lemma}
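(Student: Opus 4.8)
The plan is to prove invariance under isometries by showing the two optimization problems defining the quantities on either side of each equation are in bijection via conjugation by $U \otimes V$. The key structural fact I would rely on is that an isometry embeds $A$ into $A'$ faithfully, so conjugation by $U \otimes V$ maps feasible points of one problem to feasible points of the other while preserving the objective value. Since the statement concerns equality (not merely an inequality), I expect to invoke \cref{lm:cptp} in both directions: the isometry $(U \otimes V)$ is a CPTP (indeed sub-unital) map, and its adjoint restricted to the image is also a valid CPTP map, so the data-processing inequalities of \cref{lm:cptp} applied both ways pinch the two quantities together.

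First I would establish the two core invariances of the ingredients. For the max-divergence, conjugation by an isometry preserves it exactly, i.e. $D_{\max}(\tilde{\rho}_{AB} \| \rho_A \otimes \sigma_B) = D_{\max}\big((U\otimes V)\tilde{\rho}_{AB}(U\otimes V)^{\dag} \| (U\rho_A U^{\dag}) \otimes (V\sigma_B V^{\dag})\big)$; this follows because $X \leq \exp(\lambda) Y$ iff $(U\otimes V) X (U\otimes V)^{\dag} \leq \exp(\lambda)(U\otimes V) Y (U\otimes V)^{\dag}$, using $W^{\dag}W = \id$ for an isometry $W$. For the metric constraint, Property~3 (strong monotonicity) gives $\Delta\big((U\otimes V)\tilde{\rho}(U\otimes V)^{\dag}, (U\otimes V)\rho(U\otimes V)^{\dag}\big) \leq \Delta(\tilde{\rho},\rho)$ since conjugation by an isometry is CPTP; applying monotonicity to the inverse partial isometry (which sends the image back and acts as identity there) yields the reverse inequality, so $\Delta$ is in fact preserved exactly under the isometric embedding.

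Next I would verify that the fixed-marginal constraints transform correctly: $\tilde{\rho}_A = \rho_A$ holds iff $(U \tilde{\rho}_A U^{\dag}) = (U \rho_A U^{\dag})$, i.e. $\tilde{\rho}_{A'} = \rho_{A'}$, because partial trace commutes with local isometric embedding on the other factor (one checks $\tr_{B'}\big[(U\otimes V)X(U\otimes V)^{\dag}\big] = U\,\tr_B[X]\,U^{\dag}$ using $V^{\dag}V = \id$). The analogous statement $\tilde{\rho}_B \leq \rho_B \iff \tilde{\rho}_{B'} \leq \rho_{B'}$ for the min-entropy follows the same way, using that operator inequalities are preserved under isometric conjugation. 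Combining these three observations, the map $(\tilde{\rho}_{AB}, \sigma_B) \mapsto \big((U\otimes V)\tilde{\rho}_{AB}(U\otimes V)^{\dag},\, V\sigma_B V^{\dag}\big)$ is a bijection between feasible sets that preserves the objective, yielding the two claimed equalities.

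The main subtlety, rather than a deep obstacle, is handling the ``$\geq$'' direction carefully: the image $A'$ may be strictly larger than the range of $U$, so not every feasible point for the primed problem is of the form $(U\otimes V)(\cdot)(U\otimes V)^{\dag}$. I would address this by using the partial isometry $(U\otimes V)^{\dag}$ as the pull-back map: it is trace non-increasing and CPTP, maps feasible points of the primed problem to feasible points of the unprimed one (restoring the marginals since $U^{\dag}U = \id_A$ and $V^{\dag}V = \id_B$), and does not increase the objective by the same $D_{\max}$ identity. Invoking \cref{lm:cptp} in both directions then closes the argument cleanly. The min-entropy case is entirely parallel, with the only change being that the min-entropy is non-decreasing under the relevant operations and the constraint $\tilde{\rho}_B \leq \rho_B$ replaces the marginal-fixing constraint.
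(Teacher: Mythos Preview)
Your overall plan—forward via the isometry and backward via some kind of inverse—matches the paper's, but the backward direction has two real gaps.

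\textbf{Max-information.} The partial-isometry pull-back $(U\otimes V)^{\dag}(\cdot)(U\otimes V)$ is completely positive and trace non-increasing (not CPTP, as you write), and it does \emph{not} send feasible points to feasible points. For a primed optimizer $\tilde\rho_{A'B'}$ the $A$-marginal of the pulled-back state is $\tr_B\big[(U\otimes V)^{\dag}\tilde\rho_{A'B'}(U\otimes V)\big]$, and this is not $U^{\dag}\tilde\rho_{A'}U$ in general: any weight of $\tilde\rho_{A'B'}$ lying outside the range of $\id_{A'}\otimes V$ is discarded, so you only obtain $\tilde\rho_A\leq\rho_A$ and the state may be sub-normalized. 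Neither is permitted in \cref{def:smoothmaxinfo}. The paper instead invokes \cref{lm:cptp} with the genuinely CPTP pseudo-inverse
\[
X\ \longmapsto\ U^{\dag}XU+\big(1-\tr U^{\dag}XU\big)\,\tau_A
\]
(and the analogous map on $B$), which preserves trace and hence the equality constraint on the $A$-marginal.

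\textbf{Min-entropy.} You cannot invoke \cref{lm:cptp} in both directions: the CPTP pseudo-inverse above is \emph{not} sub-unital (it sends $\id_{A'}$ to $\id_A+(\dim A'-\dim A)\tau_A$), and the paper explicitly flags this obstruction. Your partial-isometry pull-back is in fact the right tool here and is exactly what the paper uses, but your claim that ``$\tilde\rho_B\leq\rho_B\iff\tilde\rho_{B'}\leq\rho_{B'}$ follows the same way'' is not justified by $V^{\dag}V=\id_B$ alone. One needs the additional fact that applying a contraction on the $A'$ factor can only shrink the $B$-marginal, i.e.\ $\tr_A\big[(U^{\dag}\otimes\id)\,\hat\rho_{A'B}\,(U\otimes\id)\big]\leq\hat\rho_B$; this is the content of \cref{lm:someineq}. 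With that lemma in hand one gets $\tilde\rho_B\leq V^{\dag}\tilde\rho_{B'}V\leq V^{\dag}\rho_{B'}V=\rho_B$, and the remainder of your argument goes through.
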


\begin{proof}
  The result for the smooth max-information follows immediately from Lem.~\ref{lm:cptp}. To see this, note that the map $(U \otimes V) \cdot (U \otimes V)^{\dag}$ is cptp and we can define cptp inverse maps of the form 
\begin{align}
  \rho_A \mapsto U^{\dag} \rho_A U + \big(1 - \tr U^{\dag} \rho_A U \big) \tau_A, \qquad
  \rho_B \mapsto V^{\dag} \rho_B V + \big(1 - \tr V^{\dag} \rho_B V \big) \tau_B, \label{eq:inverseiso}
\end{align}
where $\tau_{A} \in \cS_{\circ}(A)$ and $\tau_{B} \in \cS_{\circ}(B)$ are arbitrary states.

  We now present the proof for the smooth min-entropy by proving inequalities in both directions. The direction `$\geq$' is guaranteed by Lem.~\ref{lm:cptp} with the same argument as above. However, the map in~\eqref{eq:inverseiso} is not sub-unital so we cannot
  employ the data-processing inequality to show `$\leq$'. Instead, consider the state $\tilde{\rho}_{A'B'}$ optimal for $H_{\min}^{\eps,\Delta}(A' | \dot{B}')_{\rho}$. We define
  \begin{align}
    \tilde{\rho}_{AB} := (U \otimes V)^{\dag} \tilde{\rho}_{A'B'} (U \otimes V)
  \end{align}
  and observe that $\rho_B= V^{\dag} \rho_{B'} V$.
  Note that the maps $U^{\dag} (\cdot) U$ and $V^{\dag} (\cdot) V$ are in general not trace-preserving as weight outside the range of $U$ and $V$ is discarded when we invert the isometries. However, the resulting state $\tilde{\rho}_{AB}$ and $\rho_B$ are feasible for the optimization in $H_{\min}^{\eps,\Delta}(A | \dot{B})_{\rho}$ since the following holds:

  \begin{itemize}
  \item We have $\tilde{\rho}_{AB} \in \cS_{\bullet}(AB)$ and $\rho_B \in \cS_{\bullet}(B)$\,---\,that for an isometry $V$ the operator $VV^{\dagger}$ is a projection onto the image of $V$;
  \item It holds that $\Delta(\tilde{\rho}_{AB}, \rho_{AB}) \leq \Delta(\tilde{\rho}_{A'B'}, \rho_{A'B'}) \leq \eps$ due to the fact that the metric is monotone under trace non-increasing completely positive maps;
  \item We have $\tilde{\rho}_B = V^{\dag} \tr_A \left( (U^{\dag}\otimes \id_B) \tilde{\rho}_{A'B'}  (U\otimes \id_B) \right) V \leq V^{\dag} \tilde{\rho}_{B'} V \leq V^{\dag} \rho_{B'} V = \rho_B$, where the first inequality is due to Lem.~\ref{lm:someineq} in App.~\ref{app:lemmas}.
  \end{itemize}
  Finally, the desired inequality follows from the implication:
    \begin{align}
    \tilde{\rho}_{A'B'} \leq \exp(\lambda) \id_{A'} \otimes \rho_{B'} \quad \implies \quad \tilde{\rho}_{AB} \leq \exp(\lambda) U^{\dag} \id_{A'} U \otimes \rho_B  \,,
  \end{align}
  that we yield from applying the completely positive map $(U \otimes V)^{\dag} \cdot (U \otimes V)$ on both sides of the operator inequality. Finally, note that $U^{\dag} \id_{A'} U = \id_A$.
\end{proof}

One could hope to replace $\tilde{\rho}_B \leq \rho_B$ in Def.~\ref{def:smoothmaxinfo} by an equality, thus forcing the state $\tilde{\rho}_{AB}$ to have the same trace as $\rho_{AB}$. However, for such a definition one would then need to show a property analogous to the above invariance under isometries, which does not obviously hold. The following argument gives also an indication that sub-normalized states are desirable in this context, although it does not conclusively show that they are necessary for our definition.

For the (unconditional) min-entropy, invariance under isometries can only hold if we allow sub-normalized states. To see this, consider the min-entropy of the state $\rho = \id / d$, which is maximal for normalized states of dimension $d$ and thus cannot be increased by smoothing over this set. However, if embedded into a larger space smoothing will yield a larger min-entropy. Allowing sub-normalized states introduces an alternative to moving weight out of the support of $\rho$ and it turns out that this is exactly what is needed to ensure the quantity is invariant under isometries.


\section{Relation to other Entropy Measures}\label{sec:equi}

\subsection{Classical Setting}\label{sec:equi/classical}

Since the (generalized) trace distance is directly connected to error probabilities it is often natural to stick to this distance measure for classical problems. We will do so in this section. We will also continue using the notations $\cP, \cS_{\circ}, \cS_{\bullet}$, although now we restrict to \textit{diagonal} matrices in some basis, interpreted as (potentially sub-normalized) probability distributions. In order to establish an asymptotic equipartition property for our locally smoothed information measures we relate them to other well-studied entropic quantities such as information spectrum divergences~\cite{han02}. Note that standard asymptotic equipartition proofs for mutual information and conditional entropy do not leave any of the marginals unchanged.

\begin{definition}
For $P_X,Q_X\in\cP(X)$ and $\eps\in[0,1]$, the max-information spectrum divergence is defined as
\begin{align}
D_s^{\eps}(P_X\|Q_X): = \inf\left\{a: \Pr_{x \leftarrow p_X}\left\{\frac{P_X(x)}{Q_{X}(x)}> 2^a\right\}<\eps\right\}\,.
\end{align}
\end{definition}

Importantly, the max-information spectrum divergence has the following asymptotic second-order expansion in for independent and identically distributed (i.i.d.) case~\cite{strassen62}
\begin{align}\label{eq:spectrum-divergence_second-order}
\frac{1}{n}D_s^{\eps}(P_X^{\times n}\|Q_X^{\times n})=D(P_X\|Q_X)+\sqrt{\frac{V(P_X\|Q_X)}{n}}\cdot\Phi^{-1}(\eps)+O\left(\frac{\log n}{n}\right)
\end{align}
with the Kullback-Leibler divergence $D(P_X\|Q_X):=\sum_xP_X(x)\log\left(\frac{P_X(x)}{Q_X(x)}\right)$, the information variance $V(P_X\|Q_X):=\mathbb{E}\left[(\log P_X-\log Q_X-D(P_X\|Q_X))^2\right]$, and the cumulative standard Gaussian distribution
\begin{align}
\Phi(x):=\int_{-\infty}^x\frac{1}{2\pi}\exp(x^2/2)\,\mathrm{d}x\,.
\end{align}
We then define the information spectrum max-information and conditional min-entropy as
\begin{align}
I_s^{\eps}(X;Y)_P:= D_s^{\eps}(P_{XY} \| P_X\times P_Y)
\qquad \textrm{and} \qquad  
H_s^{\eps}(X|Y)_P:= -D_s^{\eps}(P_{XY} \| 1_X\times P_Y)\,,
\end{align}
respectively. This leads to the following equivalence result.

\begin{theorem}\label{thm:classical_equivalence}
Let $P_{XY}\in\cS_{\circ}(XY)$ and $0<\eps+\delta\leq1$. Then, we have
\begin{align}\label{eq:IsImaxHsHmin}
&I_s^{\frac{\eps}{1-\delta}+\delta}(X;Y)_P - 2\log\frac{1}{\delta}\leq I_{\max}^{\eps, T}( \dot{X} ; Y)_P\leq I_s^{\eps}(X;Y)_P+1\\
&H_s^{\frac{\eps}{1-\delta}}(X|Y)_P + \log\frac{1}{\delta}\geq H_{\min}^{\eps, T}(X| \dot{Y})_P\geq H_s^{\eps}(X|Y)_P - 1\,.
\end{align}
In particular, this implies the asymptotic second-order expansions
\begin{align}
\frac{1}{n}I_{\max}^{\eps,T}(\dot{X};Y)_{P^{\times n}}&=I(X;Y)_P+\sqrt{\frac{V(X;Y)_P}{n}}\cdot\Phi^{-1}(\eps)+O\left(\frac{\log n}{n}\right)\\
\frac{1}{n}H_{\min}^{\eps, T}(X| \dot{Y})_{P^{\times n}}&=H(X|Y)_P+\sqrt{\frac{V(X|Y)_P}{n}}\cdot\Phi^{-1}(\eps)+O\left(\frac{\log n}{n}\right)
\end{align}
with the mutual information variance $V(X;Y)_P:=V(P_{XY}\|P_X\times P_Y)$ and the conditional information variance $V(X|Y)_P:=V(P_{XY}\|1_x\times P_Y)$.
\end{theorem}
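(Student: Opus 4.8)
The plan is to first prove the four one-shot inequalities in~\eqref{eq:IsImaxHsHmin} and then deduce the two asymptotic expansions by inserting~\eqref{eq:spectrum-divergence_second-order} and optimising the free parameter $\delta$. Throughout one works with diagonal sub-distributions, so that $D_{\max}(\mu\|\nu)$ is the logarithm of the largest pointwise likelihood ratio $\mu(z)/\nu(z)$ and $T$ reduces to total variation distance (plus the trace gap for sub-normalised arguments).

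\emph{Upper bound on the max-information.} Set $a:=I_s^{\eps}(X;Y)_P=D_s^{\eps}(P_{XY}\|P_X\times P_Y)$ and take $\sigma_Y=P_Y$. The naive truncation $\hat P_{XY}:=\min\{P_{XY},\,2^{a}\,P_X\times P_Y\}$ has bounded likelihood ratio but wrong marginal, $\hat P_X\le P_X$. The crucial step is to restore the marginal \emph{without inflating the ratio by more than a factor of two}: for each $x$ I add the deficit $c_x:=P_X(x)-\hat P_X(x)$ back, spread proportionally to $P_Y$, i.e.\ $\tilde P_{XY}(x,y):=\hat P_{XY}(x,y)+c_x P_Y(y)$. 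Then $\tilde P_X=P_X$ and $\tilde P_{XY}$ is normalised; since $c_x\le P_X(x)$ one gets $\tilde P_{XY}\le (2^{a}+1)P_X\times P_Y\le 2^{a+1}P_X\times P_Y$ (using $a\ge0$, the only nontrivial regime as $I_{\max}^{\eps,T}(\dot X;Y)_P\ge0$ always), hence $D_{\max}(\tilde P_{XY}\|P_X\times P_Y)\le a+1$. Finally, the relocated mass is $\sum_x c_x=P_{XY}(\{P_{XY}/(P_X P_Y)>2^{a}\})<\eps$, so the triangle inequality through $\hat P_{XY}$ gives $T(\tilde P_{XY},P_{XY})<\eps$ and $\tilde P_{XY}$ is feasible. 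Preserving the $X$-marginal exactly while controlling the ratio is precisely what the ordinary smooth max-information need not do, and I expect this redistribution to be the principal obstacle.

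\emph{The remaining three inequalities} follow from change-of-measure (set-splitting) arguments that do not use the marginal constraint. For the lower bound on the max-information take near-optimal $\tilde P_{XY},\sigma_Y$ with $\tilde P_{XY}\le 2^{\lambda}P_X\times\sigma_Y$ and factor
\[
\frac{P_{XY}(x,y)}{P_X(x)P_Y(y)}\ \le\ 2^{\lambda}\,\frac{P_{XY}(x,y)}{\tilde P_{XY}(x,y)}\,\frac{\sigma_Y(y)}{P_Y(y)}\,.
\]
A Markov-type bound gives $P_{XY}(\{\sigma_Y/P_Y>1/\delta\})<\delta$, while $T(\tilde P_{XY},P_{XY})\le\eps$ gives $P_{XY}(\{\tilde P_{XY}<\delta P_{XY}\})<\eps/(1-\delta)$; off both events the left-hand side stays below $2^{\lambda+2\log(1/\delta)}$, yielding $D_s^{\eps/(1-\delta)+\delta}(P_{XY}\|P_X\times P_Y)\le\lambda+2\log(1/\delta)$. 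The min-entropy lower bound is immediate: with $a:=D_s^{\eps}(P_{XY}\|1_X\times P_Y)$, deleting all mass on $\{P_{XY}/P_Y>2^{a}\}$ produces a sub-normalised $\tilde P_{XY}\le P_{XY}$ with $\tilde P_Y\le P_Y$, $T(\tilde P_{XY},P_{XY})<\eps$, and $D_{\max}(\tilde P_{XY}\|1_X\times P_Y)\le a$ (so the stated $-1$ is only a safe constant). The min-entropy upper bound uses the same single set-split, now with only the factor $P_{XY}/\tilde P_{XY}$, giving $P_{XY}(\{P_{XY}/P_Y>2^{\lambda+\log(1/\delta)}\})<\eps/(1-\delta)$ and hence $D_s^{\eps/(1-\delta)}(P_{XY}\|1_X\times P_Y)\le\lambda+\log(1/\delta)$.

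\emph{Second-order expansions.} I specialise to $P_{XY}^{\times n}$, divide the four inequalities by $n$, and insert~\eqref{eq:spectrum-divergence_second-order} with $D(P_{XY}\|P_X\times P_Y)=I(X;Y)_P$, $D(P_{XY}\|1_X\times P_Y)=-H(X|Y)_P$ and the matching variances, using $H_s^{\eps}=-D_s^{\eps}$. Choosing $\delta=\delta_n=1/\sqrt n$ makes the penalties $\tfrac{2\log(1/\delta_n)}{n}=O(\tfrac{\log n}{n})$, and since $\tfrac{\eps}{1-\delta_n}+\delta_n-\eps=O(\delta_n)$ a first-order expansion of $\Phi^{-1}$ (locally Lipschitz on $(0,1)$, using the uniformity of the $O(\log n/n)$ remainder in $\eps$) shifts the Gaussian term by only $\sqrt{V/n}\,O(\delta_n)=O(1/n)$. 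The upper and lower bounds then coincide to the claimed order, proving both expansions. The only point needing care is the behaviour of the infimum defining $D_s^{\eps}$ at the threshold, but the strict inequalities produced above make all the estimates go through.
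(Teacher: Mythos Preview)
Your argument follows the paper's line closely: the truncate-and-redistribute construction for the $I_{\max}$ upper bound, the two-bad-set factorisation for the $I_{\max}$ lower bound (the paper splits this into Eq.~\eqref{eq:arbitQimax} plus the auxiliary lemma~\eqref{eq:I_sQsmall}, but the content is identical), and the single-set argument for the $H_{\min}$ upper bound all match. One genuine simplification on your side: for $H_{\min}^{\eps,T}(X|\dot Y)_P\ge H_s^{\eps}(X|Y)_P$ you directly exploit that sub-normalised smoothing states are allowed and simply delete the offending mass, whereas the paper detours through the normalised quantity $I_{\max}^{\eps,T}(\dot Y;X)_{P|U}$ and thereby picks up an unnecessary $-1$. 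One small gap to fix: in the $I_{\max}$ upper bound, the triangle inequality through $\hat P_{XY}$ only gives $T(\tilde P_{XY},P_{XY})\le 2\sum_x c_x$, which need not be below $\eps$; compute directly instead (as the paper does), using that both $\tilde P_{XY}$ and $P_{XY}$ are normalised so $T(\tilde P_{XY},P_{XY})=\tfrac12\|\tilde P_{XY}-P_{XY}\|_1\le\tfrac12(\|\hat P_{XY}-P_{XY}\|_1+\sum_x c_x)=\sum_x c_x<\eps$. (Also note $\sum_x c_x\le P_{XY}(\{P_{XY}>2^a P_X P_Y\})$, not equality.)
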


For the proof of Thm.~\ref{thm:classical_equivalence} we first need to introduce some additional quantities and lemmas. Recall that for the classical special case we have
\begin{align}
I_{\max}^{\eps, T}(\dot{X}; Y)_{P} =  \inf_{P'_{XY}\in \cS_{\circ}(XY), Q_Y\in \cS_{\circ}(Y): P'_X = P_X, T(P'_{XY}, P_{XY})\leq \eps} D_{\max}(P'_{XY}\| P_X\times Q_Y)\,.
\end{align}
Now, with $Q_Y\in\cS_{\circ}(Y)$ we define the following intermediate quantities
\begin{align}
I_s^{\eps}(X;Y)_{P|Q}&:= D_s^{\eps}(P_{XY} \| P_X\times Q_Y)\\
I_{\max}^{\eps, T}(\dot{X}; Y)_{P|Q}&:= \inf_{P'_{XY} \in \cS_{\circ}(XY): P'_X = P_X, T(P'_{XY}, P_{XY})\leq \eps} D_{\max}(P'_{XY}\| P_X\times Q_Y)\,,
\end{align}
where their utility is that they roughly capture both the smooth max-information and the smooth min-entropy (as we will see). To continue, note that
\begin{align}\label{eq:reducetoImax}
&I_s^{\eps}(X;Y)_P = I_s^{\eps}(X;Y)_{P|P}\\
&H_s^{\eps}(X|Y)_P = \log|X| -  I_s^{\eps}(Y; X)_{P|U}\quad\text{with $U_X$ the uniform distribution}\\
&I_{\max}^{\eps, T}(\dot{X}; Y)_{P} = \inf_{Q_Y}I_{\max}^{\eps, T}(\dot{X}; Y)_{P|Q}\\
&H_{\min}^{\eps, T}(X| \dot{Y})_P \geq \log|X| - I_{\max}^{\eps, T}(\dot{Y}; X)_{P|U}\quad\text{with $U_X$ the uniform distribution.}\label{eq:reducetoImax-extra}
\end{align}
Here, we observe that $H_{\min}^{\eps,T}(X| \dot{Y})_P$ is only lower bounded since it involves a supremum over sub-normalized distributions. The last ingredient is the following lemma, which says that $I_s^{\eps}(X;Y)_{P|Q}$ cannot be too small in comparison to $I_s^{\eps}(X;Y)_{P}$.

\begin{lemma}
Let $P_{XY}\in \cS_{\circ}(XY)$, $Q_Y\in \cS_{\circ}(Y)$, and $0 < \eps+\delta <1$. Then, we have
\begin{align}\label{eq:I_sQsmall}
I_s^{\eps}(X;Y)_{P|Q} \geq I_s^{\eps + \delta}(X;Y)_{P} - \log\frac{1}{\delta}\,.
\end{align}
\end{lemma}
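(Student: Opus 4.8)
The plan is to unfold the definition of the max-information spectrum divergence and reduce the claim to a single tail bound. Writing $\log$ for $\log_2$ (so that the threshold $2^a$ in the definition of $D_s^{\eps}$ reads $2^{\log(\cdot)}$), set $a := I_s^{\eps}(X;Y)_{P|Q} = D_s^{\eps}(P_{XY}\|P_X\times Q_Y)$. Since $I_s^{\eps+\delta}(X;Y)_P = D_s^{\eps+\delta}(P_{XY}\|P_X\times P_Y)$ is an infimum over thresholds, to prove $I_s^{\eps+\delta}(X;Y)_P \leq a + \log\frac1\delta$ — which is exactly the claimed inequality after rearranging — it suffices to show that $a+\log\frac1\delta$ lies in the feasible set of that infimum, i.e. that
\begin{align}
\Pr_{(x,y)\leftarrow P_{XY}}\left\{\frac{P_{XY}(x,y)}{P_X(x)P_Y(y)} > 2^{a+\log\frac1\delta}\right\} < \eps+\delta \,.
\end{align}

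The key observation is the additive decomposition of the log-likelihood ratio,
\begin{align}
\log\frac{P_{XY}(x,y)}{P_X(x)P_Y(y)} = \log\frac{P_{XY}(x,y)}{P_X(x)Q_Y(y)} + \log\frac{Q_Y(y)}{P_Y(y)}\,,
\end{align}
which lets me split the tail event by a union bound. Indeed, if simultaneously $\frac{P_{XY}(x,y)}{P_X(x)Q_Y(y)} \leq 2^{a'}$ and $\frac{Q_Y(y)}{P_Y(y)} \leq \frac1\delta$, then $\frac{P_{XY}(x,y)}{P_X(x)P_Y(y)} \leq 2^{a'+\log\frac1\delta}$; hence the tail event above, with $a$ replaced by any $a' > a$, is contained in the union of $\big\{\frac{P_{XY}}{P_XQ_Y} > 2^{a'}\big\}$ and $\big\{\frac{Q_Y}{P_Y} > \frac1\delta\big\}$.

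It then remains to bound the two contributions. The first has $P_{XY}$-probability strictly below $\eps$ for every $a' > a$, directly by the definition of $a = D_s^{\eps}(P_{XY}\|P_X\times Q_Y)$ as an infimum. The second depends only on $y$, so I estimate it by a change-of-measure (Markov-type) argument: on the set $S = \{y : Q_Y(y) > \frac1\delta P_Y(y)\}$ one has $P_Y(y) < \delta\, Q_Y(y)$, whence
\begin{align}
\Pr_{y\leftarrow P_Y}\left\{\frac{Q_Y(y)}{P_Y(y)} > \frac1\delta\right\} = \sum_{y\in S} P_Y(y) < \delta\sum_{y\in S}Q_Y(y) \leq \delta\,,
\end{align}
using that $Q_Y\in\cS_{\circ}(Y)$ is normalized. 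Combining the two estimates gives total probability strictly below $\eps+\delta$ for every $a' > a$, so $a'+\log\frac1\delta$ is feasible for the infimum defining $I_s^{\eps+\delta}(X;Y)_P$; letting $a'\downarrow a$ yields $I_s^{\eps+\delta}(X;Y)_P \leq a+\log\frac1\delta$, as required.

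I do not anticipate a serious obstacle, as the argument is a routine change of measure. The only points requiring care are the strict-versus-nonstrict inequalities — using $a' > a$ together with the strict bound $<\delta$ on the second event is precisely what produces the strict $<\eps+\delta$ needed to conclude feasibility — and the harmless passage to the limit $a'\downarrow a$, which avoids assuming that the infimum in the definition of $D_s^{\eps}$ is attained.
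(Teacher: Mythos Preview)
Your proof is correct and follows essentially the same approach as the paper: both decompose the log-likelihood ratio as $\log\frac{P_{XY}}{P_XP_Y}=\log\frac{P_{XY}}{P_XQ_Y}+\log\frac{Q_Y}{P_Y}$, apply a union bound over the two corresponding tail events, and control the $Q_Y/P_Y$ event by the same change-of-measure (Markov-type) estimate. Your handling of strict versus non-strict inequalities via the auxiliary threshold $a'>a$ is in fact a bit more careful than the paper's version, which works directly at the infimum $c$.
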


\begin{proof}
Let $c:= I_s^{\eps}(X;Y)_{P|Q}$, $\mathrm{Bad}_1$ be the set of all $(x,y)$ for which $P_{XY}(x,y) \geq 2^cP_X(x)Q_Y(y)$, and $\mathrm{Bad}_2$ be the set of all $(x,y)$ for which $Q_Y(y) > \frac{1}{\delta}P_Y(y)$. Now, observe that
\begin{align}
\Pr_{(x,y) \leftarrow P_{XY}}\left\{\mathrm{Bad}_2\right\} =\Pr_{y \leftarrow P_Y}\left\{Q_Y(y) > \frac{1}{\delta}P_Y(y)\right\} = \sum_{y:Q_Y(y) > \frac{1}{\delta}P_Y(y)}P_Y(y) \leq \delta\sum_yQ_Y(y) \leq \delta\,.
\end{align}
For all $(x,y) \notin \mathrm{Bad}_1\cup \mathrm{Bad}_2$, we have
\begin{align}
P_{XY}(x,y) \leq \frac{2^c}{\delta}P_X(x)P_Y(y)\,.
\end{align}
Furthermore, we have
\begin{align}
\Pr_{(x,y)\leftarrow P_{XY}}\left\{\mathrm{Bad}_1\cup \mathrm{Bad}_1\right\} \leq \Pr_{(x,y)\leftarrow P_{XY}}\left\{\mathrm{Bad}_1 \right\} + \Pr_{(x,y) \leftarrow P_{XY}}\left\{\mathrm{Bad}_2\right\} \leq \eps+\delta\,,
\end{align}
which proves the claim.
\end{proof}

The proof of the equivalence result in Thm.~\ref{thm:classical_equivalence} is then as follows.

\begin{proof}[Proof of Thm.~\ref{thm:classical_equivalence}]
We first show that for every $Q_Y\in \cS_{\circ}(Y)$,
\begin{align}\label{eq:arbitQimax}
I_s^{\frac{\eps}{1-\delta}}(X;Y)_{P|Q} - \log\frac{1}{\delta}\leq I_{\max}^{\eps, T}( \dot{X} ; Y)_{P|Q}\leq I_s^{\eps}(X;Y)_{P|Q}+1\,.
\end{align}
\begin{itemize}
\item For the lhs the argument is similar to that given in \cite[Thm.~10]{Anshu17}. Let $d:= I_{\max}^{\eps, T}(\dot{X};Y)_{P|Q}$ and $P'_{XY}$ be the probability distribution achieving the infimum in the definition. Define the set
\begin{align}
\mathcal{A}:=\left\{(x,y): \frac{P_{XY}(x,y)}{P'_{XY}(x,y)} \geq \frac{1}{\delta}\right\}\,.
\end{align}
Using Lem.~\ref{lem:tracedistprob} (see Appendix \ref{app:lemmas}), we find
\begin{align}
\eps \geq T(P_{XY}, P'_{XY}) \geq P_{XY}(\mathcal{A}) - P'_{XY}(\mathcal{A}) \geq P_{XY}(\mathcal{A}) - \delta P_{XY}(\mathcal{A})= (1-\delta)P_{XY}(\mathcal{A})\,.
\end{align}
Thus, we get that $P_{XY}(\mathcal{A}) \leq \frac{\eps}{1-\delta}$. Moreover, for every $(x,y)\in \mathcal{A}^{\mathrm{c}}$, we have
\begin{align}
P_{XY}(x,y)\leq \frac{1}{\delta}P'_{XY}(x,y) \leq \frac{2^d}{\delta} P_X(x)Q_Y(y)\,.
\end{align}
Hence, we conclude that $I_s^{\frac{\eps}{1-\delta}}(X:Y)_{P|Q} \leq d + \log\frac{1}{\delta}$.
\item For the rhs let $c:= I_s^{\eps}(X;Y)_{P|Q}$. It holds that
\begin{align}
\Pr_{(x,y) \leftarrow P_{XY}}\left\{\frac{P_{Y\mid X=x}(y)}{Q_Y(y)}> 2^c\right\} \leq \eps\,.
\end{align}
Now, let $\eps_x := \Pr_{y \leftarrow P_{Y\mid X=x}}\left\{\frac{P_{Y\mid X=x}(y)}{Q_Y(y)}> 2^c\right\}$ and $\mathrm{Good}_x$ be the set of all $y$ that satisfy $\frac{P_{Y\mid X=x}(y)}{Q_Y(y)}\leq 2^c$. Then, we have $\eps = \sum_x P_X(x) \eps_x$. Define random variable $P'_Y$ jointly correlated with $P_X$ as
\begin{align}
P'_{Y\mid X=x}(y) = P_{Y\mid X=x}(y)\cdot \id(y\in \mathrm{Good}_x) + \eps_x Q_Y(y), \quad P'_{XY}(x,y)= P_X(x) P'_{Y\mid X=x}(y)\,.
\end{align}
We get $P'_{Y\mid X=x}(y) \leq (2^c + \eps_x) Q_Y(y) \leq (2^c+1) Q_Y(y)$, which implies $D_{\max}(P'_{XY}\| P_X\times Q_Y) \leq \log(2^c+1) \leq c+1$. Moreover, we calculate
\begin{eqnarray}
T(P_{XY}, P'_{XY}) &=& \frac{1}{2}\|P_{XY}-P'_{XY}\|_1 + \frac{1}{2}|P_{XY}(\id_{XY}) - P'_{XY}(\id_{XY})|\\
&=& \sum_x P_X(x)\frac{1}{2}\|P'_{Y\mid X=x} - P_{Y\mid X=x}\|_1 + 0 \\ &\leq& \sum_x P_X(x)\frac{1}{2}\left(\eps_x\|Q_Y(y)\|_1 + \|P_{Y\mid X=x}-P_{Y\mid X=x}\cdot\id(\mathrm{Good}_x)\|_1 \right) \\ &=& \sum_x P_X(x)\eps_x = \eps\,.
\end{eqnarray}
Since $P'_X = P_X$, we conclude that $I_{\max}^{\eps, T}(\dot{X};Y)_{P|Q}\leq I_s^{\eps}(X;Y)_{P|Q}+1$.
\end{itemize}
Eq.~\eqref{eq:IsImaxHsHmin} is now proved as follows:
\begin{itemize}
\item For the lhs let $Q_Y$ be the probability distribution achieving the infimum in $I_{\max}^{\eps, T}( \dot{X} ; Y)_P$. From Eq.~\eqref{eq:arbitQimax} and~\eqref{eq:I_sQsmall}, we obtain
\begin{align}
I_s^{\frac{\eps}{1-\delta}+\delta}(X;Y)_P - 2\log\frac{1}{\delta}\leq I_s^{\frac{\eps}{1-\delta}}(X;Y)_{P|Q} - \log\frac{1}{\delta} \leq I_{\max}^{\eps, T}( \dot{X} ; Y)_{P|Q}  = I_{\max}^{\eps, T}( \dot{X} ; Y)_P\,.
\end{align}
\item For the rhs we use Eq.~\eqref{eq:reducetoImax} and~\eqref{eq:arbitQimax} to conclude
\begin{align}
I_{\max}^{\eps, T}( \dot{X} ; Y)_P = \inf_{Q_Y}I_{\max}^{\eps, T}( \dot{X} ; Y)_{P|Q}\leq \inf_{Q_Y} I_s^{\eps}(X;Y)_{P|Q}+1 \leq  I_s^{\eps}(X;Y)_{P|P}+1 = I_s^{\eps}(X;Y)_{P}+1\,.
\end{align}
\end{itemize}
Now, we proceed to the min-entropy inequalities.  
\begin{itemize}
\item The first inequality is equivalent to first part of Eq.~\eqref{eq:arbitQimax}. Let $d:= H_{\min}^{\eps, T}(X|\dot{Y})_P$ and $P'_{XY}$ be the (sub normalized) random variables achieving the infimum in the definition. Let $\mathcal{A}:=\{(x,y): \frac{P_{XY}(x,y)}{P'_{XY}(x,y)} \geq \frac{1}{\delta}\}$. Then, using Lem.~\ref{lem:tracedistprob} we have
\begin{align}
\eps \geq T(P_{XY},P'_{XY}) \geq P_{XY}(\mathcal{A}) - P'_{XY}(\mathcal{A}) \geq P_{XY}(\mathcal{A}) - \delta P_{XY}(\mathcal{A})= (1-\delta)P_{XY}(\mathcal{A})\,.
\end{align}
Thus, we get $P_{XY}(\mathcal{A}) \leq \frac{\eps}{1-\delta}$. Moreover, for every $(x,y)\in \mathcal{A}^{\mathrm{c}}$, we have
\begin{align}
P_{XY}(x,y)\leq \frac{1}{\delta}P'_{XY}(x,y) \leq \frac{2^{-d}}{\delta} 1_X(x)P_Y(y)\,.
\end{align}
Thus, we find $H_s^{\frac{\eps}{1-\delta}}(X|Y)_P \geq d - \log\frac{1}{\delta}$.
\item For the second inequality let $c:= H^{\eps}_s(X|Y)_P$.  Then, we have
\begin{align}
\Pr_{(x,y) \leftarrow P_{XY}}\left\{\frac{P_{XY}(x,y)}{P_Y(y)}> 2^{-c}\right\} \leq \eps\,.
\end{align}
Let $U_X$ be uniform distribution over $X$ and let $c':= \log|X| - c$. Then, we find
\begin{align}
\Pr_{(x,y) \leftarrow P_{XY}}\left\{\frac{P_{X\mid Y=y}(x)}{U_X(x)}> 2^{c'}\right\} \leq \eps\,.
\end{align}
which implies $c' \geq I_{s}^{\eps}(Y; X)_{P|U}$. From Eq.~\eqref{eq:arbitQimax} and~\eqref{eq:reducetoImax-extra}, we find that
\begin{align}
c'\geq I_{s}^{\eps}(Y; X)_{P|U} \geq I_{\max}^{\eps, T}(\dot{Y}; X)_{P|U} -1 \geq \log|X| - H_{\min}^{\eps, T}(X| \dot{Y})_P -1\,,
\end{align}
and substituting the value of $c'$, the proof concludes.
\end{itemize}
\end{proof}

Instead of smoothing over nearby states that have one of the reduced states (essentially) left intact we could alternatively even smooth over nearby states that have both reduced states (essentially) left intact. This would follow the intuition to smooth the correlations between the systems while leaving the individual systems unchanged and naturally extends to multi-partite scenarios. By iteratively applying the methods from the proof of Thm.~\ref{thm:classical_equivalence} we then find similar equivalence statements with the same max-information spectrum divergence based measures. This again leads to an asymptotic equipartition property, however, the expansion only becomes asymptotically tight in first-order (and not in second-order). We note that for many operational problems it seems more adapted to only fix one of the reduced states (see \cref{sec:op}).


\subsection{Quantum setting}\label{sec:equi/quantum}

Uhlmann's theorem~\cite{uhlmann85} indicates that it is natural to work with fidelity based distance measures for fully quantum problems. As such, we will use the purified distance in this section. Now, the equivalence proof from \cref{sec:equi/classical} crucially uses the idea of conditioning on the classical side information and hence we cannot give a direct quantum analogue. Instead we find the following equivalence result with the standard smooth max-mutual information (based on a different proof technique).

\begin{theorem}\label{lem:quantum_eq}
Let $\rho_{AB}\in\cS_{\circ}(AB)$ and $0\leq2\eps+\delta\leq1$ with $\delta>0$. Then, we have
\begin{align}\label{eq:max_trace}
I^{2\eps+\delta, P}_{\max}( \dot{A} ; B)_\rho\leq I_{\max}^{\eps,P}(A;B)_\rho+\log\frac{8+\delta^2}{\delta^2}\,,
\end{align}
and by definition we also have the opposite inequality $I_{\max}^{\eps,P }(\dot{A} ; B)_\rho\geq I_{\max}^{\eps,P }(A;B)_\rho$.
\end{theorem}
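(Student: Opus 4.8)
The inequality $I_{\max}^{\eps,P}(\dot A;B)_\rho\ge I_{\max}^{\eps,P}(A;B)_\rho$ is immediate, since the partially smoothed quantity is an infimum over the smaller feasible set cut out by the extra constraint $\tilde\rho_A=\rho_A$. For the non-trivial bound I would first fix optimizers $\tilde\rho_{AB}\in\cS_{\circ}(AB)$ and $\sigma_B\in\cS_{\circ}(B)$ attaining $\lambda:=I_{\max}^{\eps,P}(A;B)_\rho$, so that $\tilde\rho_{AB}\le 2^\lambda\,\rho_A\otimes\sigma_B$ and $P(\tilde\rho_{AB},\rho_{AB})\le\eps$. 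By monotonicity of $P$ under $\tr_B$ this gives $P(\tilde\rho_A,\rho_A)\le\eps$, and $\lambda\ge 0$ because $\tilde\rho_{AB}$ is normalized. The task is then to massage $\tilde\rho_{AB}$ into a state whose $A$-marginal is exactly $\rho_A$, paying only $\log\frac{8+\delta^2}{\delta^2}$ in the max-divergence and $\eps+\delta$ in the distance.

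The construction I would use is a correction on $A$ followed by a re-filling of the lost weight. With $\eta:=\delta^2/8$, set $Y_A:=\rho_A^{1/2}(\tilde\rho_A+\eta\,\rho_A)^{-1/2}$ (inverse on the support of $\rho_A$), $\breve\rho_{AB}:=(Y_A\otimes\id)\,\tilde\rho_{AB}\,(Y_A^\dagger\otimes\id)$, and
\begin{align}
\hat\rho_{AB}:=\breve\rho_{AB}+(\rho_A-\breve\rho_A)\otimes\sigma_B\,.
\end{align}
It is essential that the regularization be taken relative to $\rho_A$ rather than to $\id$; this is what keeps all later estimates dimension-free. The operator inequalities $\tilde\rho_A\le\tilde\rho_A+\eta\rho_A$ and $\rho_A\le\eta^{-1}(\tilde\rho_A+\eta\rho_A)$ yield respectively $\breve\rho_A=Y_A\tilde\rho_AY_A^\dagger\le\rho_A$ (so the filler $(\rho_A-\breve\rho_A)\otimes\sigma_B$ is positive) and $Y_A\rho_AY_A^\dagger\le\eta^{-1}\rho_A$.

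These two inequalities make the marginal and max-divergence claims routine. Since $\tr\sigma_B=1$ we get $\hat\rho_A=\breve\rho_A+(\rho_A-\breve\rho_A)=\rho_A$, so the fixed-marginal constraint holds and $\hat\rho_{AB}$ is normalized. Moreover $\breve\rho_{AB}\le 2^\lambda(Y_A\rho_AY_A^\dagger)\otimes\sigma_B\le 2^\lambda\eta^{-1}\rho_A\otimes\sigma_B$ while the filler is $\le\rho_A\otimes\sigma_B$, whence $\hat\rho_{AB}\le(2^\lambda\eta^{-1}+1)\,\rho_A\otimes\sigma_B$ and therefore $D_{\max}(\hat\rho_{AB}\|\rho_A\otimes\sigma_B)\le\lambda+\log(\eta^{-1}+1)=\lambda+\log\frac{8+\delta^2}{\delta^2}$, using $\lambda\ge 0$. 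Thus, provided $\hat\rho_{AB}$ lies within purified distance $2\eps+\delta$ of $\rho_{AB}$, it is feasible for $I_{\max}^{2\eps+\delta,P}(\dot A;B)_\rho$ and the claimed bound follows.

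The distance estimate $P(\hat\rho_{AB},\rho_{AB})\le 2\eps+\delta$ is the main obstacle. I would prove it by lower bounding the fidelity through Uhlmann's theorem: pick purifications $|\omega\rangle,|\tau\rangle$ of $\tilde\rho_{AB},\rho_{AB}$ with $\langle\tau|\omega\rangle\ge\sqrt{1-\eps^2}$, note that $(Y_A\otimes\id)|\omega\rangle$ purifies $\breve\rho_{AB}$, adjoin a flag register to also purify the filler, and estimate the overlap with a suitably optimized purification of $\rho_{AB}$; the deviation of $Y_A$ from $\id$ on the bulk of the spectrum of $\rho_A$ is controlled by $P(\tilde\rho_A,\rho_A)\le\eps$ together with the regularization $\eta$. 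The delicate point, and the reason a naive route fails, is that one must retain the filler's contribution to the fidelity: discarding it via $\hat\rho_{AB}\ge\breve\rho_{AB}$ and bounding $\bar F(\breve\rho_{AB},\rho_{AB})$ instead loses a square root in $\eps$ and reintroduces a large error from precisely those directions where $\tilde\rho_A$ is much smaller than $\rho_A$. Those directions carry only $\rho_A$-weight $O(\eps^2)$ — this is exactly where $P(\tilde\rho_A,\rho_A)\le\eps$ enters — and they are the directions that the filler $(\rho_A-\breve\rho_A)\otimes\sigma_B$ repairs. Carrying out this bookkeeping so that the total error is $O(\eps+\delta)$ uniformly in dimension is the technical heart of the argument, and it is what pins down both the relative regularization and the constant $\eta=\delta^2/8$.
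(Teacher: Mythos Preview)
Your high-level strategy matches the paper's: build a state with exact $A$-marginal by applying a correction operator on $A$ to the optimizer $\tilde\rho_{AB}$ and topping up with a product filler $(\cdot)\otimes\sigma_B$, then bound the max-divergence and the purified distance separately. The trivial direction, the marginal check $\hat\rho_A=\rho_A$, and the max-divergence bound $D_{\max}(\hat\rho_{AB}\|\rho_A\otimes\sigma_B)\le\lambda+\log\frac{8+\delta^2}{\delta^2}$ are all correct and cleanly argued.

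The genuine gap is the purified-distance estimate, which you yourself flag as ``the main obstacle'' but do not actually carry out. This is not a minor omission: it is precisely the step that fixes the whole statement, and your sketch (``estimate the overlap with a suitably optimized purification'', ``carrying out this bookkeeping'') does not supply the missing idea. Concretely, the paper's construction differs from yours in one crucial respect: instead of the regularized inverse $Y_A=\rho_A^{1/2}(\tilde\rho_A+\eta\rho_A)^{-1/2}$, it first projects with $P_A^\gamma=\{\gamma^{-1}\tilde\rho_A-\rho_A\}_+$ to $\bar\rho_{AB}=P_A^\gamma\tilde\rho_{AB}P_A^\gamma$ and then applies $\rho_A^{1/2}V_A\bar\rho_A^{-1/2}$, where $V_A$ is the unitary from the polar decomposition of $\rho_A^{1/2}\bar\rho_A^{1/2}$. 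That polar unitary is what makes the key fidelity computation go through \emph{exactly}: for the natural purifications one gets
\[
\bar F\big(\tau_{ABC},\bar\rho_{ABC}\big)=\big|\tr[\bar\rho_A^{1/2}\rho_A^{1/2}V_A]\big|=F(\rho_A,\bar\rho_A)\,,
\]
which then feeds into $P(\hat\rho_{AB},\rho_{AB})\le P(\bar\rho_A,\rho_A)+P(\bar\rho_{AB},\rho_{AB})\le 2P(\bar\rho_{AB},\rho_{AB})\le 2\eps+\delta$ via a gentle-measurement bound on $P(\bar\rho_{AB},\tilde\rho_{AB})$. With your $Y_A$ the analogous overlap is $\tr[\rho_A^{1/2}(\tilde\rho_A+\eta\rho_A)^{-1/2}\tilde\rho_A]$, which is neither a fidelity nor obviously controlled by $P(\tilde\rho_A,\rho_A)$ when $\rho_A$ and $\tilde\rho_A$ do not commute. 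So either you need to insert a polar unitary into your correction (at which point you are essentially back to the paper's construction), or you need a genuinely different argument for the distance bound---and none is given. Also note that the paper \emph{does} drop the filler via concavity at the relevant step, contrary to your diagnosis; what saves the estimate is not retaining the filler but rather the exact identity above together with routing through the intermediate projected state $\bar\rho_{AB}$.
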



\begin{proof}
Let $\tilde{\rho}_{AB}$ and $\sigma_B$ be the optimizers on the right-hand side of Eq.~\eqref{eq:max_trace}. Moreover, for some $\gamma>0$ let
\begin{align}
P_A^\gamma:=\left\{\frac{1}{\gamma}\tilde{\rho}_A-\rho_A\right\}_+\quad\text{and}\quad\bar{\rho}_{AB}:=P_A^\gamma\tilde{\rho}_{AB}P_A^\gamma\,,
\end{align}
where $\{X\}_+$ denotes the projector onto the positive part of a Hermitian operator $X$. Let $V_A$ be the unitary from the polar decomposition of $\rho_A^{\frac{1}{2}}\bar{\rho}_A^{\frac{1}{2}}$ such that
\begin{align}
F(\rho_A,\bar{\rho}_A)=\mathrm{Tr}\Bigg[\left|\rho_A^{\frac{1}{2}}\bar{\rho}_A^{\frac{1}{2}}\right|\Bigg]=\mathrm{Tr}\left[\rho_A^{\frac{1}{2}}\bar{\rho}_A^{\frac{1}{2}}V_A\right]\,.
\end{align}
For $\gamma=\frac{\delta^2}{8}$ define the bipartite quantum state
\begin{align}
\hat{\rho}_{AB}:=\underbrace{\rho^{\frac{1}{2}}_AV_A\bar{\rho}_A^{-\frac{1}{2}}\bar{\rho}_{AB}\bar{\rho}_A^{-\frac{1}{2}}V_A^\dagger\rho^{\frac{1}{2}}_A}_{=:\tau_{AB}}+\underbrace{\left(\rho^{\frac{1}{2}}_A(1_A-V_AP^\gamma_AV_A^\dagger)\rho^{\frac{1}{2}}_A\right)\otimes\sigma_B}_{=:\sigma_{AB}}\,,
\end{align}
which by inspection has $\hat{\rho}_A=\rho_A$. We calculate
\begin{align}
\hat{\rho}_{AB}\leq\;&\left\|(\rho_A\otimes\sigma_B)^{-\frac{1}{2}}\tilde{\rho}_{AB}(\rho_A\otimes\sigma_B)^{-\frac{1}{2}}\right\|_{\infty}\cdot\left(\rho^{\frac{1}{2}}_AV_A\bar{\rho}_A^{-\frac{1}{2}}P_A^\gamma\rho_AP_A^\gamma\bar{\rho}_A^{-\frac{1}{2}}V_A^\dagger\rho^{\frac{1}{2}}_A\right)\otimes\sigma_B\notag\\
&+\left(\rho_A^{\frac{1}{2}}(1_A-V_AP^\gamma_AV_A^\dagger)\rho_A^{\frac{1}{2}}\right)\otimes\sigma_B\,,
\end{align}
and by the definition of $P_A^\gamma$ we have $P_A^\gamma\rho_AP_A^\gamma\leq\frac{8}{\delta^2}\cdot\bar{\rho}_A$ as well as $1_A-P^\gamma_A\leq1_A$ leading to
\begin{align}\label{eq:proof1}
\hat{\rho}_{AB}\leq\left(\frac{8}{\delta^2}\cdot\left\|(\rho_A\otimes\sigma_B)^{-\frac{1}{2}}\tilde{\rho}_{AB}(\rho_A\otimes\sigma_B)^{-\frac{1}{2}}\right\|_{\infty}+1\right)\cdot \rho_A\otimes\sigma_B\,.
\end{align}
Using that $\left\|(\rho_A\otimes\sigma_B)^{-\frac{1}{2}}\tilde{\rho}_{AB}(\rho_A\otimes\sigma_B)^{-\frac{1}{2}}\right\|_\infty=2^{D_{\max}(\tilde{\rho}_{AB}\|\rho_A\otimes\sigma_B)}\geq1$~\cite[Lem.~6]{datta08} we get
\begin{align}\label{eq:proof2}
\frac{8}{\delta^2}\cdot\left\|(\rho_A\otimes\sigma_B)^{-\frac{1}{2}}\tilde{\rho}_{AB}(\rho_A\otimes\sigma_B)^{-\frac{1}{2}}\right\|_{\infty}+1\leq\frac{8+\delta^2}{\delta^2}\left\|(\rho_A\otimes\sigma_B)^{-\frac{1}{2}}\tilde{\rho}_{AB}(\rho_A\otimes\sigma_B)^{-\frac{1}{2}}\right\|_{\infty}\,.
\end{align}
Hence, the claim follows as soon as we establish that $\hat{\rho}_{AB}$ is close enough to $\rho_{AB}$ in purified distance. Now, notice that $\mathrm{Tr}[\sigma_{AB}]=1-\mathrm{Tr}[\tau_{AB}]$ and hence $\bar{\tau}_{AB}:=\frac{\tau_{AB}}{\mathrm{Tr}[\tau_{AB}]}$ and $\bar{\sigma}_{AB}:=\frac{\sigma_{AB}}{\mathrm{Tr}[1-\tau_{AB}]}$ are normalized. We can then write
\begin{align}
\hat{\rho}_{AB}=\mathrm{Tr}[\tau_{AB}]\cdot\bar{\tau}_{AB}+\big(1-\mathrm{Tr}[\tau_{AB}]\big)\cdot\bar{\sigma}_{AB}\,.
\end{align}
Since the fidelity $\bar{F}^2(\rho,\sigma)$ is concave in each argument (this follows from the operator concavity of the logarithm) we can estimate
\begin{align}
\bar{F}^2\left(\hat{\rho}_{AB},\bar{\rho}_{AB}\right)&\geq\mathrm{Tr}[\tau_{AB}]\cdot F^2\left(\bar{\tau}_{AB},\bar{\rho}_{AB}\right)+\big(1-\mathrm{Tr}[\tau_{AB}]\big)\cdot\bar{F}^2\left(\bar{\sigma}_{AB},\bar{\rho}_{AB}\right)\\
&\geq\mathrm{Tr}[\tau_{AB}]\cdot\bar{F}^2\left(\bar{\tau}_{AB},\bar{\rho}_{AB}\right)\\
&=\bar{F}^2\left(\tau_{AB},\bar{\rho}_{AB}\right)\,.\label{eq:previous}
\end{align}
By the triangle inequality for the purified distance we get for the quantity of interest
\begin{align}\label{eq:interest}
P\left(\hat{\rho}_{AB},\rho_{AB}\right)\leq P\left(\hat{\rho}_{AB},\bar{\rho}_{AB}\right)+P\left(\rho_{AB},\bar{\rho}_{AB}\right)\,,
\end{align}
and since $\hat{\rho}_{AB}$ is normalized we get for the first term on the rhs that
\begin{align}
P\left(\hat{\rho}_{AB},\bar{\rho}_{AB}\right)=\sqrt{1-\bar{F}^2\left(\hat{\rho}_{AB},\bar{\rho}_{AB}\right)}\,.
\end{align}
We continue with
\begin{align}
\bar{F}^2\left(\hat{\rho}_{AB},\bar{\rho}_{AB}\right)\geq\bar{F}^2\left(\tau_{AB},\bar{\rho}_{AB}\right)\geq\bar{F}^2\left(\tau_{ABC},\bar{\rho}_{ABC}\right)\,,
\end{align}
where the first step is Eq.~\eqref{eq:previous} and the second step follows since the fidelity is monotone under partial trace (this holds for general non-negative operators) together with choosing $\tau_{ABC}$ as an extension of $\tau_{AB}$ and $\bar{\rho}_{ABC}$ as an extension of $\bar{\rho}_{AB}$. We choose the purification of $\bar{\rho}_{AB}$ on $ABC$ defined through the pure state vector
\begin{align}
|\bar{\rho}_{ABC}\rangle:=\bar{\rho}_A^{\frac{1}{2}}|\Phi_{A:BC}\rangle\,,
\end{align}
where $|\Phi\rangle_{A:BC}$ denotes the non-normalized maximally entangled pure state vector in the cut $A:BC$ (on the subspace on $A$ spanned by the projector $P_A^\gamma$). Furthermore, we take the purification of $\tau_{AB}$ on $ABC$ given by
\begin{align}
|\tau_{ABC}\rangle:=\rho_A^{\frac{1}{2}}V_A\bar{\rho}_A^{-\frac{1}{2}}|\bar{\rho}_{ABC}\rangle.
\end{align}
In order to verify that it is a purification, we compute
\begin{align}
\tau_{AB}=\mathrm{Tr}_C\Big[|\tau_{ABC}\rangle\langle{\tau}_{ABC}|\Big]=\rho^{\frac{1}{2}}_AV_A\bar{\rho}_A^{-\frac{1}{2}}\bar{\rho}_{AB}\bar{\rho}_A^{-\frac{1}{2}}V_A^\dagger\rho^{\frac{1}{2}}_A\,.
\end{align}
We calculate
\begin{align}
&\bar{F}^2\left(\tau_{ABC},\bar{\rho}_{ABC}\right)=\left|\langle\bar{\rho}_{ABC}\middle|\tau_{ABC}\rangle\right|^2=\left|\langle\Phi_{A:BC}|\bar{\rho}_A^{\frac{1}{2}}|\tau_{ABC}\rangle\right|^2=\left|\langle\Phi_{A:BC}|\bar{\rho}_A^{\frac{1}{2}}\rho_A^{\frac{1}{2}}V_AP_A^\delta|\Phi_{A:BC}\rangle\right|^2\\
&=\left|\mathrm{Tr}\left[\bar{\rho}_A^{\frac{1}{2}}\rho_A^{\frac{1}{2}}V_AP_A^\delta\right]\right|^2=\left|\mathrm{Tr}\left[P_A^\delta\bar{\rho}_A^{\frac{1}{2}}\rho_A^{\frac{1}{2}}V_A\right]\right|^2=\left|\mathrm{Tr}\left[\bar{\rho}_A^{\frac{1}{2}}\rho_A^{\frac{1}{2}}V_A\right]\right|^2=\bar{F}^2\left(\bar{\rho}_A,\rho_A\right)=F^2\left(\bar{\rho}_A,\rho_A\right)\,.
\end{align}
Hence, together with Eq.~\eqref{eq:interest} we arrive at
\begin{align}\label{eq:arrive-at}
P\left(\hat{\rho}_{AB},\rho_{AB}\right)\leq P\left(\bar{\rho}_A,\rho_A\right)+P\left(\bar{\rho}_{AB},\rho_{AB}\right)\leq2\cdot P\left(\bar{\rho}_{AB},\rho_{AB}\right)\,,
\end{align}
where the last step follows from the monotonicity of the purified distance under partial trace. Using again the triangle inequality for the purified distance we then bound
\begin{align}
P\left(\bar{\rho}_{AB},\rho_{AB}\right)&\leq P\left(\bar{\rho}_{AB},\tilde{\rho}_{AB}\right)+P\left(\tilde{\rho}_{AB},\rho_{AB}\right)\\
&\leq P\left(P_A^\gamma\tilde{\rho}_{AB}P_A^\gamma,\tilde{\rho}_{AB}\right)+\eps\\
&\leq\sqrt{2\cdot\mathrm{Tr}\big[(1_A-P_A^\gamma)\tilde{\rho}_A\big]}+\eps\\
&\leq\sqrt{2\cdot\frac{\delta^2}{8}}+\eps=\frac{\delta}{2}+\eps\,,
\end{align}
where the third inequality follows from~\cite[Lem.~3.8]{mythesis}. Together with Eq.~\eqref{eq:arrive-at} we conclude that $P\left(\hat{\rho}_{AB},\rho_{AB}\right)\leq2\eps+\delta$.
\end{proof}

The standard asymptotic equipartition property for the max-divergence from~\cite[Thm.~6.3]{mybook} gives the asymptotic first-order expansion
\begin{align}\label{eq:Imax-asymptotic}
\lim_{n\to\infty}\frac{1}{n}I_{\max}^{\eps,P}(\dot{A}^n; B^n)_{\rho^{\otimes n}} = I(A \!:\! B)_\rho\,,
\end{align}
with the quantum mutual information defined as $I(A \!:\! B)_\rho := D(\rho_{AB}\|\rho_A\otimes\rho_B)$. We also find the following equivalence result for the smooth conditional min-entropy.

\begin{theorem}\label{thm:min-entropy}
Let $\rho_{AB}\in\cS_{\circ}(AB)$ and $0\leq2\eps+\delta\leq1$ with $\delta>0$. Then, we have\footnote{Similar equivalence results for alternative min-entropy definitions based on a maximization over $\sigma_B \in \cS_{\bullet}(B)$ can be derived by additionally employing~\cite[Lem.~21]{tomamichel10}.}
\begin{align}
H^{2\eps+\delta,P}_{\min}(A|\dot{B})_\rho\geq H_{\min}^{\eps,P}(A|B)_\rho-\log\frac{8+\delta^2}{\delta^2}\,,
\end{align}
and by definition we also have the opposite inequality $H_{\min}^{\eps,P}(A|\dot{B})_\rho\leq H_{\min}^{\eps,P}(A|B)_\rho$.
\end{theorem}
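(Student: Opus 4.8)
The plan is to mirror the proof of \cref{lem:quantum_eq}, but adapted so that the marginal on $B$ is controlled rather than the marginal on $A$. The opposite inequality $H_{\min}^{\eps,P}(A|\dot B)_\rho \le H_{\min}^{\eps,P}(A|B)_\rho$ is immediate, since the fixed-$B$ optimization in \cref{def:smoothminentropy} merely adds the constraint $\tilde{\rho}_B \le \rho_B$ to the supremum defining $H_{\min}^{\eps,P}(A|B)_\rho$, shrinking the feasible set. All the work is in the other direction. Starting from a (near-)optimizer $\tilde{\rho}_{AB}$ for $H := H_{\min}^{\eps,P}(A|B)_\rho$, so that $P(\tilde{\rho}_{AB},\rho_{AB}) \le \eps$ and $\tilde{\rho}_{AB} \le 2^{-H}\,\id_A\otimes\rho_B$, I want to manufacture a feasible point $\hat{\rho}_{AB}$ for $H_{\min}^{2\eps+\delta,P}(A|\dot B)_\rho$ whose divergence is only slightly worse.

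First I would set $\gamma := \delta^2/8$ and cut down the $B$-marginal using the projector
\begin{align}
P_B^\gamma := \Big\{\tfrac1\gamma\tilde{\rho}_B - \rho_B\Big\}_+, \qquad \bar{\rho}_{AB} := (\id_A\otimes P_B^\gamma)\,\tilde{\rho}_{AB}\,(\id_A\otimes P_B^\gamma)\,,
\end{align}
which by construction satisfies $P_B^\gamma\rho_B P_B^\gamma \le \tfrac1\gamma\bar{\rho}_B = \tfrac{8}{\delta^2}\bar{\rho}_B$ and $\mathrm{Tr}[(\id_B-P_B^\gamma)\tilde{\rho}_B]\le\gamma$. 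With $V_B$ the unitary from the polar decomposition of $\rho_B^{1/2}\bar{\rho}_B^{1/2}$, I would then define the transplanted state
\begin{align}
\hat{\rho}_{AB} := \big(\id_A\otimes\rho_B^{1/2}V_B\bar{\rho}_B^{-1/2}\big)\,\bar{\rho}_{AB}\,\big(\id_A\otimes\bar{\rho}_B^{-1/2}V_B^\dagger\rho_B^{1/2}\big)\,,
\end{align}
which is exactly the $\tau_{AB}$-part of the construction in \cref{lem:quantum_eq} with $A$ and $B$ interchanged. The one genuine simplification over \cref{lem:quantum_eq} is that I would \emph{not} add the completion term analogous to $\sigma_{AB}$: since \cref{def:smoothminentropy} asks only for $\hat{\rho}_B \le \rho_B$ rather than equality, the bare transplant already suffices, because $\hat{\rho}_B = \rho_B^{1/2}\,V_B\Pi_{\bar{\rho}_B}V_B^\dagger\,\rho_B^{1/2}\le\rho_B$, where $\Pi_{\bar{\rho}_B}$ is the projector onto the support of $\bar{\rho}_B$ and $V_B\Pi_{\bar{\rho}_B}V_B^\dagger$ is therefore a projector bounded by $\id_B$.

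The divergence bound then follows by chaining the two operator inequalities $\bar{\rho}_{AB}\le 2^{-H}\id_A\otimes(P_B^\gamma\rho_B P_B^\gamma)$ and $P_B^\gamma\rho_B P_B^\gamma\le\tfrac{8}{\delta^2}\bar{\rho}_B$ through the transplant map, which yields $\hat{\rho}_{AB}\le 2^{-H}\tfrac{8}{\delta^2}\,\id_A\otimes\rho_B$ and hence $-D_{\max}(\hat{\rho}_{AB}\|\id_A\otimes\rho_B)\ge H-\log\tfrac{8}{\delta^2}$. For closeness I would reuse the fidelity estimate of \cref{lem:quantum_eq} essentially verbatim with $A\leftrightarrow B$: choosing the purifications $|\bar{\rho}_{ABC}\rangle=\bar{\rho}_B^{1/2}|\Phi_{B:AC}\rangle$ and $|\tau_{ABC}\rangle=\rho_B^{1/2}V_B\bar{\rho}_B^{-1/2}|\bar{\rho}_{ABC}\rangle$, monotonicity of fidelity under partial trace lower-bounds $\bar{F}^2(\hat{\rho}_{AB},\bar{\rho}_{AB})$ by $F^2(\bar{\rho}_B,\rho_B)$, so that $P(\hat{\rho}_{AB},\bar{\rho}_{AB})\le P(\bar{\rho}_B,\rho_B)\le P(\bar{\rho}_{AB},\rho_{AB})$. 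Combined with the triangle inequality, $P(\bar{\rho}_{AB},\tilde{\rho}_{AB})\le\sqrt{2\gamma}=\delta/2$, and $P(\tilde{\rho}_{AB},\rho_{AB})\le\eps$, this gives $P(\hat{\rho}_{AB},\rho_{AB})\le 2\,P(\bar{\rho}_{AB},\rho_{AB})\le 2\eps+\delta$. Thus $\hat{\rho}_{AB}$ is feasible and $H_{\min}^{2\eps+\delta,P}(A|\dot B)_\rho\ge H-\log\tfrac{8}{\delta^2}$.

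I expect the main obstacle to be juggling the three requirements on $\hat{\rho}_{AB}$ simultaneously\,---\,the marginal bound $\hat{\rho}_B\le\rho_B$, the $D_{\max}$ bound, and the purified-distance bound\,---\,while recognizing that, unlike in \cref{lem:quantum_eq}, one must \emph{not} try to absorb a ``$+1$'' into the prefactor. In \cref{lem:quantum_eq} the completion term contributed a $+1$ that was absorbed using $D_{\max}(\tilde{\rho}_{AB}\|\rho_A\otimes\sigma_B)\ge0$; here the reference $\id_A\otimes\rho_B$ is not sub-normalized, so $D_{\max}(\tilde{\rho}_{AB}\|\id_A\otimes\rho_B)$ can be negative and that step is unavailable. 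Dropping the completion term sidesteps the issue entirely (and, $\hat{\rho}_{AB}$ being sub-normalized, the generalized fidelity only helps, $F\ge\bar{F}$, so the purified-distance estimate goes through unchanged). As a bonus this produces the sharper constant $\log\tfrac{8}{\delta^2}$, from which the stated bound follows a fortiori via $\log\tfrac{8}{\delta^2}\le\log\tfrac{8+\delta^2}{\delta^2}$, the latter form being retained only for symmetry with \cref{lem:quantum_eq}.
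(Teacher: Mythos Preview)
Your argument is correct and follows the paper's template of mirroring \cref{lem:quantum_eq} with $A\leftrightarrow B$, but with one genuine simplification. The paper retains the completion term, defining
\[
\hat{\rho}_{AB}=\rho_B^{1/2}V_B\bar{\rho}_B^{-1/2}\bar{\rho}_{AB}\bar{\rho}_B^{-1/2}V_B^\dagger\rho_B^{1/2}+\tfrac{\id_A}{|A|}\otimes\rho_B^{1/2}(\id_B-V_BP_B^\gamma V_B^\dagger)\rho_B^{1/2}\,,
\]
so that $\hat{\rho}_B=\rho_B$ exactly; the resulting $+\tfrac{1}{|A|}$ in the operator bound is then absorbed via $D_{\max}(\tilde{\rho}_{AB}\|\id_A\otimes\rho_B)\ge -\log|A|$, producing the stated constant $\log\tfrac{8+\delta^2}{\delta^2}$. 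You instead drop the completion term, exploiting that \cref{def:smoothminentropy} only demands $\hat{\rho}_B\le\rho_B$; this is legitimate since $\hat{\rho}_B=\rho_B^{1/2}V_B\Pi_{\bar{\rho}_B}V_B^\dagger\rho_B^{1/2}\le\rho_B$, and the purified-distance estimate survives because with $\hat{\rho}_{AB}$ sub-normalized one has $F\ge\bar{F}$ and so $P(\hat{\rho}_{AB},\bar{\rho}_{AB})\le\sqrt{1-\bar{F}^2(\hat{\rho}_{AB},\bar{\rho}_{AB})}$, after which the purification calculation from \cref{lem:quantum_eq} applies verbatim. Your route avoids the dimension-bound step entirely and yields the sharper constant $\log\tfrac{8}{\delta^2}$, which indeed implies the theorem as stated; the paper's version buys only a normalized $\hat{\rho}_{AB}$ at the price of a slightly worse bound and one extra lemma invocation.
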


\begin{proof}
The first part of the proof is very similar to the proof of Thm.~\ref{lem:quantum_eq}, just with the roles of the systems $A$ and $B$ interchanged. In the following we only sketch the steps which are different. For $\tilde{\rho}_{AB}\in \cS_{\bullet}(AB)$ the optimizer in $H_{\min}^{\eps,P}(A|B)_\rho$ we define the bipartite quantum state
\begin{align}
\hat{\rho}_{AB}:=\rho^{\frac{1}{2}}_BV_B\bar{\rho}_B^{-\frac{1}{2}}\bar{\rho}_{AB}\bar{\rho}_B^{-\frac{1}{2}}V_B^\dagger\rho^{\frac{1}{2}}_B+\frac{1_A}{|A|}\otimes\left(\rho^{\frac{1}{2}}_B(1_B-V_BP^\gamma_BV_B^\dagger)\rho^{\frac{1}{2}}_B\right)\,,
\end{align}
with $P_B^\gamma$, $\bar{\rho}_{AB}$, and $V_B$ as in the proof of Thm.~\ref{lem:quantum_eq} (where $A\leftrightarrow B)$. We then find similarly as in the proof of Thm.~\ref{lem:quantum_eq} that
\begin{align}
\hat{\rho}_{AB}\leq\left(\frac{8}{\delta^2}\cdot\left\|\rho_B^{-\frac{1}{2}}\tilde{\rho}_{AB}\rho_B^{-\frac{1}{2}}\right\|_{\infty}+\frac{1}{|A|}\right)\cdot1_A\otimes\rho_B\,,
\end{align}
and using that $D_{\max}(\tilde{\rho}_{AB}\|1_A\otimes\rho_B)\geq-\log|A|$~\cite[App.~A]{tomamichel09} we get
\begin{align}
\frac{8}{\delta^2}\left\|\rho_B^{-\frac{1}{2}}\tilde{\rho}_{AB}\rho_B^{-\frac{1}{2}}\right\|_{\infty}+\frac{1}{|A|}\leq\frac{8+\delta^2}{\delta^2}\left\|\rho_B^{-\frac{1}{2}}\tilde{\rho}_{AB}\rho_B^{-\frac{1}{2}}\right\|_{\infty}\,.
\end{align}
As in the proof of Thm.~\ref{lem:quantum_eq} this leads to the statement
\begin{align}
H^{2\eps+\delta,P}_{\min}(A|\dot{B})\geq -D_{\max}(\tilde{\rho}_{AB}\|1_A\otimes\rho_B)-\log\frac{8+\delta^2}{\delta^2}\,,
\end{align}
concluding the proof.
\end{proof}

Employing the standard asymptotic equipartition property from~\cite[Thm.~7]{tomamichel08} this implies the asymptotic first-order expansion
\begin{align}\label{eq:Hmin-asymptotic}
\lim_{n\to\infty}\frac{1}{n}H_{\min}^{\eps,P}(A^n|\dot{B}^n)_{\rho^{\otimes n}} = H(A|B)_\rho\,,
\end{align}
with the conditional entropy defined as $H(A|B)_\rho:=-D(\rho_{AB}\|1_A\otimes \rho_B)$.


\section{Operational Examples}\label{sec:op}

\subsection{Overview}

It is generally neat that existing proofs and protocols readily apply and give tight bounds when combined with our novel restricted smoothing. In the following we discuss various basic classical and quantum examples in bipartite settings. 


\subsection{Classical state splitting}

Let $\eps \in (0,1]$ be the error parameter. There are two parties Alice and Bob. Alice possesses random variable $X$, taking values over a finite set $\mathcal{X}$ and  a random variable $Y$, taking values over a finite set $\mathcal{Y}$. Alice sends a message to Bob and at the end Bob outputs random variable $\hat{Y}$ such that $T(P_{XY},P_{X\hat{Y}})\leq \eps$. They are allowed to use shared randomness between them which is independent of $XY$ at the beginning of the protocol.

We note that a generalization of this task (with additional side information) was studied in \cite[Thm.~1]{Anshu17}. These results together with~\cite{BravermanR11} imply that the minimal number $R(P_{XY},\eps)$ of bits communicated from Alice to Bob to achieve classical state splitting with error $\eps\in(0,1]$ in generalized trace distance is bounded as
\begin{align}\label{eq:state-splitting_previous}
I_s^{\eps/(1-\delta)}(X; Y)_P - \log\frac{1}{\delta}\leq R(P_{XY},\eps)\leq I_s^{\eps - 3\delta}(X; Y)_P + 2\log\frac{1}{\delta}\,,
\end{align}
for $\delta \in (0,1)$ small enough. We show an even tighter characterization in terms of the smooth max-information.

\begin{theorem}\label{thm:state-splitting}
Let $P_{XY}\in\cS_{\circ}(XY)$. Then, the minimal number $R(P_{XY},\eps)$ of bits communicated from Alice to Bob to achieve classical state splitting with error $\eps\in(0,1]$ in generalized trace distance is bounded as
\begin{align}\label{eq:state-splitting}
I_{\max}^{\eps,T}(\dot{X};Y)_P\leq R(P_{XY},\eps)\leq I_{\max}^{\eps-\delta,T}(\dot{X};Y)_P+\log\log\frac{1}{\delta}+1,\quad\text{for any $\delta\in(0,\eps]$.}
\end{align}
In particular, this implies the asymptotic second-order expansion\footnote{Alternatively this expansion can also directly be deduced from Eq.~\eqref{eq:state-splitting_previous}.}
\begin{align}\label{eq:state-splitting2}
\frac{1}{n}R\left(P_{XY}^{\times n},\eps\right)=I(X;Y)_P+\sqrt{\frac{V(X;Y)_P}{n}}\cdot\Phi^{-1}(\eps)+O\left(\frac{\log n}{n}\right)\,.
\end{align}
\end{theorem}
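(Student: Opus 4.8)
The plan is to prove the two one-shot inequalities in~\eqref{eq:state-splitting} by a matching converse and achievability argument, and then to read off the second-order expansion~\eqref{eq:state-splitting2} by inserting these bounds into the expansion for $I_{\max}^{\eps,T}(\dot{X};Y)_P$ already recorded in~\cref{thm:classical_equivalence}.

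For the converse $I_{\max}^{\eps,T}(\dot{X};Y)_P\leq R(P_{XY},\eps)$, I would fix any protocol that communicates $R$ bits and achieves error $\eps$, and analyse its output distribution $P_{X\hat{Y}}$ directly. Two structural facts are key. First, Alice never touches $X$, so the output marginal on $X$ equals $P_X$ exactly; together with the promise $T(P_{XY},P_{X\hat{Y}})\leq\eps$ this makes $P_{X\hat{Y}}$ an admissible smoothing of $P_{XY}$ in the sense of~\cref{def:smoothmaxinfo}. Second, $\hat{Y}$ is produced by Bob from the message $M$, which takes at most $2^R$ values, and the shared randomness $S$, which is independent of $XY$; absorbing Bob's local randomness into the channels $P_{\hat{Y}\mid M=m,S=s}$ and using $P_{M\mid X=x,S=s}(m)\leq1$ gives the envelope bound $P_{\hat{Y}\mid X=x,S=s}\leq 2^{R}\,Q_{\hat{Y}}^{s}$ with $Q_{\hat{Y}}^{s}:=2^{-R}\sum_m P_{\hat{Y}\mid M=m,S=s}$. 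Averaging over $s$ yields $P_{X\hat{Y}}\leq 2^{R}\,P_X\times\bar{Q}_{\hat{Y}}$ with $\bar{Q}_{\hat{Y}}:=\sum_s P_S(s)Q_{\hat{Y}}^{s}$ a sub-normalized distribution on $\mathcal{Y}$ that may be renormalized without increasing $D_{\max}$, so that $(P_{X\hat{Y}},\bar{Q}_{\hat{Y}})$ is feasible in~\cref{def:smoothmaxinfo} with $D_{\max}(P_{X\hat{Y}}\|P_X\times\bar{Q}_{\hat{Y}})\leq R$. This is precisely the claim.

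For the achievability $R(P_{XY},\eps)\leq I_{\max}^{\eps-\delta,T}(\dot{X};Y)_P+\log\log\frac1\delta+1$, I would take an optimizer $(P'_{XY},Q_Y)$ for $I_{\max}^{\eps-\delta,T}(\dot{X};Y)_P=:R_0$, so that $P'_X=P_X$, $T(P'_{XY},P_{XY})\leq\eps-\delta$, and $P'_{Y\mid X=x}\leq 2^{R_0}Q_Y$ for all $x$, and then simulate $P'_{X\hat{Y}}$ by truncated rejection sampling. Using the shared randomness, Alice and Bob draw i.i.d.\ candidates $Y_1,\dots,Y_N\sim Q_Y$ with $N=\lceil 2^{R_0}\ln\frac1\delta\rceil$; Alice, who knows $x$, accepts each $Y_i$ independently with probability $P'_{Y\mid X=x}(Y_i)/(2^{R_0}Q_Y(Y_i))\in[0,1]$ and transmits the index $K\in\{1,\dots,N\}$ of the first acceptance (defaulting to $K=1$ if none occurs). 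Conditioned on a successful acceptance the accepted sample is distributed exactly as $P'_{Y\mid X=x}$, so Bob's output $\hat{Y}=Y_K$ reproduces $P'_{X\hat{Y}}=P'_{XY}$ on that event. The per-round acceptance probability is exactly $2^{-R_0}$, hence the truncation fails with probability $(1-2^{-R_0})^N\leq\delta$, contributing at most $\delta$ in trace distance; combined with $T(P'_{XY},P_{XY})\leq\eps-\delta$ and the triangle inequality this gives total error $\eps$. The transmitted index costs $\lceil\log_2 N\rceil\leq R_0+\log\log\frac1\delta+1$ bits, where the slack from $\log_2\ln 2<0$ absorbs the ceiling.

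Finally, the expansion~\eqref{eq:state-splitting2} follows by applying the one-shot bounds to $P_{XY}^{\times n}$ and dividing by $n$: the converse already gives the right-hand side of~\eqref{eq:state-splitting2} through the expansion of $\tfrac1n I_{\max}^{\eps,T}(\dot{X}^n;Y^n)_{P^{\times n}}$ in~\cref{thm:classical_equivalence}, while for the upper bound I would choose $\delta_n=n^{-1/2}$, so that $\tfrac1n(\log\log\frac1{\delta_n}+1)=O(\frac{\log n}{n})$ and, by continuity of $\Phi^{-1}$ at $\eps$, the shift $\eps\mapsto\eps-\delta_n$ perturbs the Gaussian term by only $\sqrt{V/n}\cdot O(\delta_n)=O(1/n)$. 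I expect the main obstacle to be the achievability bookkeeping, namely balancing the truncation length $N$ against the communication cost so that the error stays within the budget $\delta$ while the index encoding costs no more than $\log\log\frac1\delta+1$ extra bits; the converse, by contrast, is essentially immediate once one observes that $X$ is untouched and the message alphabet has size $2^R$.
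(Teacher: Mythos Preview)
Your proposal is correct and follows essentially the same approach as the paper: the converse exploits that $X$ is untouched together with the $2^R$-valued message to exhibit $(P_{X\hat Y},\bar Q_{\hat Y})$ as a feasible point for $I_{\max}^{\eps,T}(\dot X;Y)_P$ (the paper phrases this via $D_{\max}(P_{XST}\|P_X\times P_S\times U_T)\leq R$ and data processing, which unwinds to exactly your pointwise bound), and the achievability is truncated rejection sampling against the optimizer $(P'_{XY},Q_Y)$ with the same $2^{-R_0}$ per-round acceptance probability and $\leq\delta$ truncation-failure bound. The only cosmetic differences are that the paper sends a dedicated failure symbol $2^R+1$ (and has Bob output a fresh $Q_Y$-sample) rather than defaulting to $K=1$, and it takes $2^{R_0+\log\log(1/\delta)}$ rounds instead of your $\lceil 2^{R_0}\ln(1/\delta)\rceil$; both lead to the same $+\log\log\tfrac{1}{\delta}+1$ overhead.
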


Note that due to Thm.~\ref{thm:classical_equivalence} the bounds in Eq.~\eqref{eq:state-splitting} can be further bounded in terms of the information spectrum max-information. Whereas the lower bound then basically becomes the same as in Eq.~\eqref{eq:state-splitting_previous}, the upper bounds improves the fudge term $O\left(\log\frac{1}{\delta}\right)$ from Eq.~\eqref{eq:state-splitting_previous} to $O\left(\log\log\frac{1}{\delta}\right)$.

\begin{proof}
Eq.~\eqref{eq:state-splitting2} immediately follow from Eq.~\eqref{eq:state-splitting} and Thm.~\ref{thm:classical_equivalence}. The converse in Eq.~\eqref{eq:state-splitting} is as follows, which uses the converse argument from ~\cite[Thm.~2]{Anshu17}. Let $T$ be Alice's message and $S$ be shared randomness. Observe that $P_{XS} = P_{X}\times P_S$, let $\mathcal{D}: ST \rightarrow Y$ be Bob's decoding operation, $U_T$ be the random variable uniformly distributed over $T$, and let the output random variable after Alice's message and Bob's decoding be $P'_{XY}:= (\id_X\times\mathcal{D})(P_{XST})$. Now, we consider that
\begin{align}
R &\geq D_{\max}(P_{XST}\| P_{XS}\times U_T) = D_{\max}(P_{XST}\| P_{X}\times P_S\times U_T)\nonumber\\
&\geq D_{\max}((\id_X\times\mathcal{D})(P_{XST})\| P_{X}\times \mathcal{D}(P_S\times U_T))\nonumber\\
&=D_{\max}(P'_{XY}\| P_X \times \mathcal{D}(P_S\times U_T)) \geq \min_{Q_Y}D_{\max}(P'_{XY}\| P_X \times Q_Y)\,.
\end{align}
We then use the fact that $P'_X=P_X$ and $T(P'_{XY}, P_{XY}) \leq \eps$ to further lower bound $R$ by $I_{\max}^{\eps,T}(\dot{X};Y)_P$.\\

The achievability in Eq.~\eqref{eq:state-splitting} uses the rejection sampling argument \cite{jain02, HarshaJMR10}. Let $P'_{XY}, Q_Y$ be random variables achieving the infimum in the definition of $I_{\max}^{\eps - \delta}(\dot{X};Y)_P$. Let $K:= I_{\max}^{\eps - \delta}(\dot{X};Y)_P$ and $R:= K+ \log\log\frac{1}{\delta}$. By definition, it holds that $P'_{XY} \leq 2^K P_X \times Q_Y$ and $P'_X=P_X$. Thus we conclude that for all $x$ satisfying $P_X(x)>0$, $P'_{Y\mid X=x} \leq 2^K Q_Y$.
 
\vspace{0.1in}

{\it The protocol $\cP$}: Alice and Bob share the random variable $P_X \times Q_{Y_1}\times \ldots Q_{Y_{2^R}}$, with $X$ belonging to Alice and $Y_1, \ldots Y_{2^R}$ acting as shared randomness between Alice and Bob. They proceed with the following steps, with Alice obtaining a sample $x$ from $P_X(x)$.
\begin{enumerate}
\item Alice sets $i=1$.
\item (While $i\leq 2^R$):
\item Alice takes a sample $y$ from $Q_{Y_i}$. 
\item With probability $\frac{P'_{Y\mid X=x}(y)}{2^K Q_Y(y)}$ she accepts this sample, sends $i$ to Bob and exits the while loop.
\item With probability $1-\frac{P'_{Y\mid X=x}(y)}{2^K Q_Y(y)}$ she updates $i\rightarrow i+1$ and goes to Step $2$ (End While). 
\item If $i> 2^R$, Alice sends $2^R+1$ to Bob.  
\item Bob receives Alice's message, which we call $j$. If $j> 2^R$, Bob outputs a sample distributed as $Q_Y$. Else he outputs the sample from $Q_{Y_j}$. 
\end{enumerate}
Let the output of Bob be $P''_{Y\mid X=x}$.

\vspace{0.1in}

{\it Analysis of the protocol}: The probability of Alice's acceptance on Step $4$ is $$\sum_y Q_Y(y)\frac{P'_{Y\mid X=x}(y)}{2^K Q_Y(y)} = 2^{-K}\sum_y P'_{Y\mid X=x}(y) = 2^{-K}.$$
Conditioned on Alice's acceptance, the distribution of $Y_i$ is equal to $P'_{Y\mid X=x}$. To argue this, observe that the probability of any $y$, conditioned on acceptance, is equal to $$\frac{1}{2^{-K}}\cdot Q_Y(y)\cdot\frac{P'_{Y\mid X=x}(y)}{2^K Q_Y(y)} = P'_{Y\mid X=x}(y).$$
Thus, conditioned on the event that Alice accepts an $i$, the sample output by Bob (which is the same as that observed by Alice) is distributed as $P'_{Y\mid X=x}$. Let $\gamma$ be the probability that Alice does not find any sample, that is, $i> 2^R$. Then Bob's output $P''_{Y\mid X=x}$ is equal to $(1-\gamma)P'_{Y\mid X=x} + \gamma Q_{Y}$. Let $P''_{XY}:= P_XP''_{Y\mid X}$ be the overall output distribution. 

Since probability of acceptance at any step is equal to $2^{-K}$, we have 
$$\gamma = (1- 2^{-K})^{2^R} \leq \left(2^{-2^{-K}}\right)^{2^R} = 2^{-2^{R-K}} = 2^{- 2^{\log\log\frac{1}{\delta}}}= 2^{- \log\frac{1}{\delta}} = \delta.$$ 
Consider,
\begin{eqnarray*}
T(P''_{XY}, P_{XY}) &=& \sum_xP_X(x)T(P''_{Y\mid X=x}, P_{Y\mid X=x})\\ 
&=& \sum_xP_X(x)T((1-\gamma)P'_{Y\mid X=x} + \gamma Q_{Y}, P_{Y\mid X=x}))\\
&\leq& (1-\gamma)\sum_xP_X(x)T(P'_{Y\mid X=x}, P_{Y\mid X=x})) + \gamma\sum_xP_X(x)T(Q_{Y}, P_{Y\mid X=x})\\
&\leq& (1-\gamma) T(P'_{XY}, P_{XY}) + \gamma \leq \eps-\delta + \gamma \leq \eps.
\end{eqnarray*}
Furthermore, the number of bits communicated is $\log(2^R+1)\leq R+1$, which completes the proof.
\end{proof}


\subsection{Strong privacy amplification against side information}

The goal of privacy amplification against side information is to extract, from a random variable $X$ correlated with (quantum) side information $B$, a uniformly distributed random variable $Z$ that is independent of $B$. Common privacy amplification schemes use a uniformly random and independent seed $S$ to apply a function $\{f_{X\to Z}^s\}_{s\in S}$ to $X$, and they are called strong if the seed remains independent of the side information after the function is applied.

For a set of functions $\{f_{X\to Z}^s\}_{s\in S}$ and a classical-quantum state
\begin{align}
\rho_{XB}=\sum_{X\in\mathcal{X}}|x\rangle\langle x|\otimes\rho_B^x\in\cS_\circ(XB)
\end{align}
we use the same composable security criterion for $\eps$-random and secret bits as, e.g, in~\cite[Sect.~7.3]{mybook}, namely we say that the resulting random variable $Z$ is \emph{$\eps$-random and secret} if
\begin{align}\label{eq:security-criterion}
\Delta \Bigg(\underbrace{\frac{1}{|S|}\sum_{\substack{s\in S\\z\in\mathcal{Z}}}|s\rangle\langle s|_S\otimes|z\rangle\langle z|\otimes\Bigg(\sum_{x:f^s(x)=z}\rho_B^x\Bigg)}_{=:\;\omega_{SZB}},\frac{1_S}{|S|}\otimes\frac{1_Z}{|Z|}\otimes\rho_B\Bigg)\leq\eps\,.
\end{align}
Note that, in contrast to the setting studied in~\cite[Sect. III]{tomamichel12} or~\cite{tomamichel10}, we have a composable security definition by putting the reduced state on $B$ on the lhs of Eq.~\eqref{eq:security-criterion}. We refer to~\cite[App.~B]{portmann14} for a more detailed discussion. Our goal is now to find bounds on the maximal number of $\eps$-random and secret bits, denoted $\ell^{\Delta}(\rho_{XB},\eps)$ for $\Delta \in \{T, P\}$, that can be extracted from $\rho_{XB}$ by any privacy amplification scheme of the above general form.

\begin{theorem}\label{thm:pa-qsi}
Let $\rho_{XB}\in\cS_{\circ}(XB)$ be classical-quantum on $XB$ and $\eps\in(0,1]$. Then, 
\begin{align}\label{eq:pa-qsi}
H_{\min}^{\eps-\delta,P}(X|\dot{B})_\rho-\log\frac{1}{\delta^4}\leq\ell^{P}(\rho_{XB},\eps)\leq H_{\min}^{\eps,P}(X|\dot{B})_\rho,\quad\text{for any $\delta\in(0,\eps]$.}
\end{align}
Moreover, when $B = Y$ is classical then we also have
\begin{align}\label{eq:pa-cl}
H_{\min}^{\eps-\delta,T}(X|\dot{Y})_\rho-\log\frac{1}{4\delta^2}\leq\ell^{T}(\rho_{XY},\eps)\leq H_{\min}^{\eps,T}(X|\dot{Y})_\rho\,.
\end{align}
\end{theorem}

The upper bound in Eq.~\eqref{eq:pa-qsi} is tighter than previously known bounds since only partial smoothing is allowed and the marginal $\rho_B$ is fixed, in contrast to~\cite[Sect. III]{tomamichel12}. This is crucial, as at least in the classical special case this allows us to recover tight second-order bounds. The lower bound is incomparable to existing results and provided solely to show that it is equivalent to the upper bound up to terms that vanish in second order in the number of copies $n$.
The bound for classical side information in Eq.~\eqref{eq:pa-cl} in particular implies the asymptotic second-order expansion
\begin{align}
\frac{1}{n}\ell^T \left(P_{XY}^{\times n},\eps\right)=H(X|Y)_P+\sqrt{\frac{V(X|Y)_P}{n}}\cdot\Phi^{-1}(\eps)+O\left(\frac{\log n}{n}\right)
\end{align}
as first given in~\cite[Thm.~25]{hayashi16} (see also~\cite[Thm.~3]{watanabe12}).

\begin{proof}
We first prove the lower bound in Eq.~\eqref{eq:pa-qsi}. Let $\tilde{\rho}_{XB}\in\cS_{\bullet}(XB)$ be the optimizer in the definition of $H_{\min}^{\eps-\delta,P}(X|\dot{B})_\rho$ and let
\begin{align}\label{eq:pa_tripartite}
\tilde{\omega}_{SZB}:=\frac{1}{|S|}\sum_{\substack{s\in S\\z\in\mathcal{Z}}}|s\rangle\langle s|_S\otimes|z\rangle\langle z|\otimes\Bigg(\sum_{x:f^s(x)=z}\tilde{\rho}_B^x\Bigg)\,.
\end{align}
Since by definition $\rho_B\geq\tilde{\rho}_B$ and by data-processing $P\left(\omega_{SZB},\tilde{\omega}_{SZB}\right)\leq\eps-\delta$ we get that
\begin{align}
P\left(\omega_{SZB},\frac{1_S}{|S|}\otimes\frac{1_Z}{|Z|}\otimes\rho_B\right)&\leq P\left(\omega_{SZB},\frac{1_S}{|S|}\otimes\frac{1_Z}{|Z|}\otimes\tilde{\rho}_B\right)\\
&\leq P\left(\tilde{\omega}_{SZB},\frac{1_S}{|S|}\otimes\frac{1_Z}{|Z|}\otimes\tilde{\rho}_B\right)+P\left(\omega_{SZB},\tilde{\omega}_{SZB}\right)\\
&\leq P\left(\tilde{\omega}_{SZB},\frac{1_S}{|S|}\otimes\frac{1_Z}{|Z|}\otimes\tilde{\rho}_B\right)+\eps-\delta\\
&\leq \sqrt{2T\left(\tilde{\omega}_{SZB},\frac{1_S}{|S|}\otimes\frac{1_Z}{|Z|}\otimes\tilde{\rho}_B\right)}+\eps-\delta\,,
\end{align}
where the first step is based on the operator monotonicity of the square root, and in the last step we employed the equivalence of generalized trace distance and purified distance from Eq.~\eqref{eq:pt-equivalence}. Now, standard achievability proofs such as~\cite[Thm.~6]{tomamichel10} applied to $\tilde{\rho}_{XB}\in\cS_{\bullet}(XB)$ give
\begin{align}
T\left(\tilde{\omega}_{SZB},\frac{1_S}{|S|}\otimes\frac{1_Z}{|Z|}\otimes\tilde{\rho}_B\right)\leq\frac{1}{2}\sqrt{|Z|\cdot2^{-H_{\min}^{\eps-\delta,P}(X|\dot{B})_\rho}}\,,
\end{align}
and choosing $\log|Z|=H_{\min}^{\eps-\delta,P}(X|\dot{B})_\rho-\log\frac{1}{\delta^4}$ leads to the claim.

For the upper bound in Eq.~\eqref{eq:pa-qsi} we follow~\cite[Sect.~7.3.3]{mybook} but adapted to our partially smoothed conditional min-entropy. Namely, assume by contradiction that there exists a protocol which extracts $\ell>H_{\min}^{\eps,P}(X|\dot{B})_\rho$ bits of $\eps$-random and secret bits. Then, since applying a function on $X$ cannot increase the smooth conditional min-entropy (Lem.~\ref{lem:classical-function}) we have for all $s\in\cS$ that
\begin{align}
\ell>H_{\min}^{\eps,P}(X|\dot{B})_\rho\geq H_{\min}^{\eps,P}(Z|\dot{B})_{\rho^s},\quad\text{with $\rho_{ZB}^s:=\sum_{z\in\mathcal{Z}}|z\rangle\langle z|_Z\otimes\left(\sum_{x:f^s(x)=z}\rho_B^x\right)$.}
\end{align}
Hence, for all $\tilde{\rho}_{ZB}\in\cS_{\bullet}(ZB)$ with $P\left(\tilde{\rho}_{ZB},\rho^s_{ZB}\right)\leq\eps$ we have $H_{\min}(Z|B)_{\tilde{\rho}}<\ell$. This in turn implies
\begin{align}
P\left(\rho_{ZB}^s,\frac{1_Z}{|Z|}\otimes\rho_B\right)>\eps\quad\implies\quad P\left(\omega_{SZB},\frac{1_S}{|S|}\otimes\frac{1_Z}{|Z|}\otimes\rho_B\right)>\eps\,,
\end{align}
which is in contradiction to Eq.~\eqref{eq:security-criterion} and thus concludes the proof.

The upper bound in Eq.~\eqref{eq:pa-cl} follows in the same way as the upper bound in Eq.~\eqref{eq:pa-qsi}, just by noting that in the classical case Lem.~\ref{lem:classical-function} also holds for the generalized trace distance. For the lower bound in Eq.~\eqref{eq:pa-cl}, denote in the security criterion Eq.~\eqref{eq:security-criterion} the state $\omega_{SZB}$ for $B=Y$ classical by $Q_{SZY}$, let $\tilde{P}_{XY}\in\cS_{\bullet}(XY)$ be the optimizer in the definition of $H_{\min}^{\eps-\delta,T}(X|\dot{Y})_P$, and let $\tilde{Q}_{SZY}$ be defined as in Eq.~\eqref{eq:pa_tripartite}. Since by definition $P_Y\geq\tilde{P}_Y$ and by data-processing $T\left(Q_{SZY},\tilde{Q}_{SZY}\right)\leq\eps-\delta$ we get that
\begin{align}
T\left(Q_{SZY},\frac{1_S}{|S|}\times\frac{1_Z}{|Z|}\times P_Y\right)&\leq T\left(\tilde{Q}_{SZY},\frac{1_S}{|S|}\times\frac{1_Z}{|Z|}\times\tilde{P}_Y\right)+T\left(Q_{SZY},\tilde{Q}_{SZY}\right)\\
&\leq T\left(\tilde{Q}_{SZY},\frac{1_S}{|S|}\times\frac{1_Z}{|Z|}\times\tilde{P}_Y\right)+\eps-\delta\,.
\end{align}
Now, standard achievability proofs such as~\cite[Thm.~6]{tomamichel10} applied to $\tilde{P}_{XY}\in\cS_{\bullet}(XY)$ lead to the claim for $\log|Z|=H_{\min}^{\eps-\delta,T}(X|\dot{Y})_P-\log\frac{1}{4\delta^2}$.
\end{proof}

For the case of quantum side information we do not have the asymptotic second-order expansion of $H_{\min}^{\eps,P}(X|\dot{B})_\rho$ and thus we cannot give the asymptotic second-order expansion of $\ell(\rho_{XB},\eps)$. This seems to be an open problem for the composable security definition used here (see~\cite{hayashi16-2} for a discussion of the Markovian case). However, note that by Thm.~\ref{thm:pa-qsi} finding the asymptotic second-order expansion of $H_{\min}^{\eps,P}(X|\dot{B})_\rho$ is now equivalent to finding the asymptotic second-order expansion of $\ell(\rho_{XB},\eps)$. Hence, we believe that our ideas provide a promising approach to study the question.


\subsection{Quantum state merging}

A pure tripartite state $\rho_{ABR}$ is shared between parties Alice (A), Bob (B), and the reference $R$. The goal is to send the $A$-marginal from Alice to Bob using classical communication and entanglement assistance while not changing the overall state~\cite{horodecki05,horodecki06,berta08,dupuis10}. More precisely, for $\rho_{ABR}\in\cS_\circ(ABR)$ of rank-one and $A_{0}B_{0}$ additional quantum systems, a quantum channel
\begin{align}
\mathcal{E}:AA_{0}\otimes BB_{0}\rightarrow A_{1}\otimes B_{1}\bar{B}B
\end{align}
is a quantum state merging of $\rho_{ABR}$ with error $\eps\in[0,1]$, if it is a local operation and classical forward communication process for the bipartition $AA_{0}\rightarrow A_{1}$ versus $BB_{0}\rightarrow B_{1}\bar{B}B$, and
\begin{align}
P\Big((\mathcal{E}_{AA_{0}BB_{0}\to A_{1}B_{1}\bar{B}B}\otimes\mathcal{I}_R)(\Phi_{A_{0}B_{0}}\otimes\rho_{ABR}),\Phi_{A_{1}B_{1}}\otimes\rho_{B\bar{B}R}\Big)\leq\eps\,,
\end{align}
where $\rho_{B\bar{B}R}=(\mathcal{I}_{A\to\bar{B}}\otimes\mathcal{I}_{BR})(\rho_{ABR})$, and $\Phi_{A_{0}B_{0}}$, $\Phi_{A_{1}B_{1}}$ are maximally entangled states on $A_{0}B_{0}$, $A_{1}B_{1}$, respectively. The difference $\log|A_0|-\log|A_1|$ quantifies the entanglement cost.

\begin{theorem}
Let $\rho_{ABR}\in\cS_\circ(ABR)$ be a pure state. For free classical communication assistance the minimal entanglement cost $E(\rho_{ABR},\eps)$ for quantum state merging of $\rho_{ABR}$ with error $\eps\in(0,1]$ in purified distance is bounded as
\begin{align}\label{eq:entanglement-cost}
-H_{\min}^{\eps,P}(A|\dot{R})_\rho\leq E(\rho_{ABR},\eps)\leq -H_{\min}^{\eps-\delta,P}(A|\dot{R})_\rho+\log\frac{1}{\delta^4},\quad\text{for any $\delta\in(0,\eps]$.}
\end{align}
Alternatively, for unlimited entanglement assistance\,---\,not necessarily constraint to the form of maximally entangled states\,---\,the minimal classical communication cost $C(\rho_{ABR},\eps)$ for quantum state merging of $\rho_{ABR}$ with error $\eps\in(0,1]$ in purified distance is bounded as
\begin{align}\label{eq:classical-cost}
I_{\max}^{\eps,P}(\dot{R};A)_\rho\leq C(\rho_{ABR},\eps)\leq I_{\max}^{\eps-\delta,P}(\dot{R};A)_\rho+\log\frac{1}{\delta^4},\quad\text{for any $\delta\in(0,\eps]$.}
\end{align}
\end{theorem}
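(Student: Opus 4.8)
The conceptual point driving both bounds is that in state merging the reference $R$ is held by a third party and is never acted upon: every allowed operation is local on Alice's and Bob's registers (plus classical communication), so the $R$-marginal stays exactly $\rho_R$ throughout. This is precisely why the partially smoothed measures with fixed $\dot R$ are the right figures of merit, in the same way that $\dot B$ appears in privacy amplification (Thm.~\ref{thm:pa-qsi}) and $\dot X$ in classical state splitting (Thm.~\ref{thm:state-splitting}). Accordingly, my plan is to adapt the known one-shot decoupling-based merging protocols and their converses to this setting, letting the invariance of $R$ convert the usual fully-smoothed bounds into partially-smoothed ones.

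For the entanglement cost Eq.~\eqref{eq:entanglement-cost} I would proceed in close analogy to Thm.~\ref{thm:pa-qsi}. For achievability, take $\tilde\rho_{ABR}\in\cS_\bullet(ABR)$ optimal for $H_{\min}^{\eps-\delta,P}(A|\dot R)_\rho$, so that $\tilde\rho_R\leq\rho_R$ and $P(\tilde\rho_{ABR},\rho_{ABR})\leq\eps-\delta$, and run a standard one-shot merging protocol on $\tilde\rho$. Such a protocol consumes $-H_{\min}(A|R)_{\tilde\rho}+\log\frac{1}{\delta^4}$ ebits and outputs a state within purified distance $\delta$ of the target; since all operations are local on $AB$ and leave $R$ intact, the triangle inequality together with $P(\tilde\rho_{ABR},\rho_{ABR})\leq\eps-\delta$ gives total error $\leq\eps$, and the fixed-$R$ constraint $\tilde\rho_R\leq\rho_R$ is harmless because the target $R$-marginal is already $\rho_R$. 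For the converse I would invoke the one-shot merging converse while tracking that the $R$-marginal is never modified: the output is $\eps$-close to $\Phi_{A_1B_1}\otimes\rho_{B\bar BR}$, whose Alice-versus-$R$ reduced state is $\frac{1_{A_1}}{|A_1|}\otimes\rho_R$ with conditional min-entropy $\log|A_1|$, and relating this to the initial $\log|A_0|+H_{\min}^{\eps,P}(A|\dot R)_\rho$ through the protocol yields $\log|A_0|-\log|A_1|\geq -H_{\min}^{\eps,P}(A|\dot R)_\rho$.

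For the classical communication cost Eq.~\eqref{eq:classical-cost} I would run the parallel argument with $I_{\max}^{\eps,P}(\dot R;A)$ in the role played by $I_{\max}^{\eps,T}(\dot X;Y)$ in Thm.~\ref{thm:state-splitting}. Because the entanglement assistance is unlimited and need not be maximally entangled, a one-shot state-splitting (quantum reverse Shannon) protocol applied to the optimizer $\tilde\rho$ for $I_{\max}^{\eps-\delta,P}(\dot R;A)_\rho$ transmits the classical message at cost $I_{\max}(\dot R;A)_{\tilde\rho}+\log\frac{1}{\delta^4}$, again with the fixed-$R$ smoothing matching the untouched reference. The matching converse comes from the data-processing monotonicity of $I_{\max}^{\eps,P}(\dot R;A)$ under the local channels composing the protocol (Lem.~\ref{lm:cptp}), exactly mirroring the converse of Thm.~\ref{thm:state-splitting}.

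I expect the main obstacle to be the converse bookkeeping rather than the achievability. Because Alice's initial register $A_0A$ is entangled with Bob through $\Phi_{A_0B_0}$ and is partly communicated away, one cannot naively treat the protocol as a single local channel on Alice and read off min-entropy monotonicity; the correct comparison of $H_{\min}^{\eps,P}(A|\dot R)$ before and after must carefully use the invariance of $R$ and the structure of the consumed entanglement. The accompanying technical crux, as in Thm.~\ref{thm:pa-qsi}, is checking that the standard decoupling achievability bounds apply verbatim to the sub-normalized optimizers forced by the partial-smoothing constraint, and that the purified-to-trace-distance conversions compose to give exactly the $\log\frac{1}{\delta^4}$ overhead.
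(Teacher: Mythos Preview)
Your proposal is correct and takes essentially the same approach as the paper: achievability by running the standard decoupling (resp.\ convex-split) protocol with the partially-smoothed optimizer and closing the error via the triangle inequality together with $\tilde\rho_R\leq\rho_R$, and converses by tracking the Alice--$R$ correlations through a purified model of the protocol. The converse bookkeeping you rightly flag as the obstacle is precisely where the paper invests its effort, proving dedicated lemmas---a dimension bound for $H_{\min}^{\eps,P}$, a projective-measurement lemma for $D_{\max}$, and for the classical-cost converse a bound $I_{\max}^{\eps,P}(\dot R;BXX')\leq I_{\max}^{\eps,P}(\dot R;BX')+\log|X|$ on coherently classical registers that accounts for the transmitted message and is not covered by Lem.~\ref{lm:cptp} alone.
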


Note that the bounds in Eq.~\eqref{eq:entanglement-cost} and Eq.~\eqref{eq:classical-cost} improve on previously known bounds on one-shot quantum state merging (see, e.g., \cite{berta08,Anurag17,Majenz17} and follow-up works). For the converse direction, our bounds in terms of partially smoothed entropies are tighter because globally smoothed entropies allow a larger neighbourhood to smooth over. For the achievability direction, previous bounds all featured smoothing parameters of at least $2\eps$ for $\eps$-error protocols (which does not seem to be second-order tight). The asymptotic second-order expansions of $E$ and $C$ are an open problem but are now reduced to giving the asymptotic second-order expansions of $H_{\min}^{\eps,P}(A|\dot{R})_\rho$ and $I_{\max}^{\eps,P}(\dot{R};A)_\rho$, respectively.

\begin{proof}
For the lower bounds in Eq.~\eqref{eq:entanglement-cost} and Eq.~\eqref{eq:classical-cost} we first model the form of a general quantum state merging protocol.

On Alice's side, we consider an arbitrary local operation from $A A_0$ to a classical register $X_{A}$ (to be sent to Bob) and $A_1$. We consider an isometric purification of this operation in two steps. First, Alice performs an isometry $U$ from $AA_{0}$ to $A_{1}X_{A}A'$, where $X_{A}$ is a register that is to be measured and $A'$ an arbitrary garbage register to be discarded. Then, the measurement of $X_{A}$ and the communication to Bob is modelled by the isometry $V=\sum_{x}|xxx\rangle_{X_{A}X_{B}X_{R}}\langle x|_{X_{A}}$ from $X_{A}$ to $X_{A}X_{B}X_{R}$, where $X_B$ is a classical copy of $X_A$ to be sent to Bob and $X_R$ is a coherent copy of $X_A$ and $X_B$ that is used to purify this operation.

On Bob's side, we consider an arbitrary local operation from $B_{0}X_{B}B$ to $B_{1}\bar{B}BX_{B}$, where $\bar{B}B$ is the merged state. This can again be purified to an isometry $W$ from $B_{0}BX_{B}$ to $B_{1}\bar{B}BX_{B}B'$, with $B'$ an arbitrary register to be discarded. Overall, the pure state vector $|\Phi\rangle_{A_{0}B_{0}}\otimes|\rho\rangle_{ABR}$ is taken to some pure state vector
\begin{align}
\text{$|\omega\rangle_{A_{1}B_{1}\bar{B}BRX_{A}X_{B}X_{R}A'B'}$ $\eps$-close in purified distance to $|\Phi\rangle_{A_{1}B_{1}}\otimes|\rho\rangle_{\bar{B}BR}\otimes|\xi\rangle_{X_{A}X_{B}X_{R}A'B'}$,}
\end{align}
where $|\xi\rangle_{X_{A}X_{B}X_{R}A'B'}$ denotes another pure state vector. By basic properties of the purified distance~\cite[Chap.~3]{mythesis}, this latter vector has without loss of generality the form
\begin{align}
|\xi\rangle_{X_{A}X_{B}X_{R}A'B'}=\sum_{x}\sqrt{p_{x}}|xxx\rangle_{X_{A}X_{B}X_{R}}\otimes|\xi^{x}\rangle_{A'B'}\,,
\end{align}
where $\{p_{x}\}$ denotes some probability distribution and the $|\xi^{x}\rangle_{A'B'}$ are pure state vectors on $A'B'$.

For the lower bound in Eq.~\eqref{eq:entanglement-cost} we follow~\cite[Sect.~4.2]{berta08} and analyse the correlations between Alice and the reference system, measured in terms of the smooth conditional min-entropy. For the modelling as above we estimate
\begin{align}
\log|A_0|+H_{\min}^{\eps,P}(A|\dot{R})_{\rho}&\geq H_{\min}^{\eps,P}(A_{0}A|\dot{R})_{\Phi\otimes\rho}\label{eq:dimension-upper}\\
&=H_{\min}^{\eps,P}(A_{1}A'X_{A}|\dot{R})_{U(\Phi\otimes\rho)U^{\dagger}}\label{eq:unintary-invariance}\\
&\geq H_{\min}(A_{1}A'X_{A}|R)_{V^\dagger(\Phi\otimes\rho\otimes\xi)V}\\
&=-D_{\max}\left(V^\dagger\left(\frac{1_{A_1}}{|A_1|}\otimes\rho_{R}\otimes\xi_{X_{A}X_{B}X_{R}A'}\right)V\middle\|1_{A_1A'X_A}\otimes\rho_{R}\right)\\
&\geq -D_{\max}\left(\frac{1_{A_1}}{|A_1|}\otimes\rho_{R}\otimes\xi_{X_{A}X_{R}A'}\middle\|1_{A_1A'X_A}\otimes\xi_{X_{R}}\otimes\rho_{R}\right)\label{eq:projective-measurement}\\
&=\log|A_1|-D_{\max}(\xi_{X_{A}X_{R}A'}\|1_{A'X_A}\otimes\xi_{X_{R}})\\
&\geq\log|A_1|\label{eq:classical-quantum}\,,
\end{align}
where in Eq.~\eqref{eq:dimension-upper} we used the dimension upper bound for the smooth conditional min-entropy from Lem.~\ref{lem:tripartite-dimension}, in Eq.~\eqref{eq:unintary-invariance} the isometric invariance of the smooth conditional min-entropy (Lem.~\ref{lm:iso}), in Eq.~\eqref{eq:projective-measurement} the monotonicity property under projective measurements when conditioned on the measurement outcomes from Lem.~\ref{lem:projective-measurement}, and in Eq.~\eqref{eq:classical-quantum} that $D_{\max}(\xi_{X_{A}X_{R}A'}\|1_{A'X_A}\otimes\xi_{X_{R}})\leq 0$ following from the classical-quantum structure of $\xi_{X_{A}X_{R}A'}$~\cite[Lem.~3.1.9]{renner05}. 

For the lower bound in Eq.~\eqref{eq:classical-cost} we analyse the correlations between Alice and the reference system, measured in terms of the smooth max-information. For the modelling as above we find
\begin{align}
I_{\max}^{\eps,P}(\dot{R};A)_{\rho}&\leq I_{\max}^{\eps,P}(\dot{R};A_{0}A)_{\Phi\otimes\rho}\\
&=I_{\max}^{\eps,P}(\dot{R};A_{1}A'X_{A}X_{B}X_{R})_{\omega}\\
&\leq I_{\max}^{\eps,P}(\dot{R};A_{1}A'X_{A}X_{R})_{\omega}+\log|X|\\
&\leq\log|X|\,,
\end{align}
where the first step is due to the monotonicity of the smooth max-information under local quantum operations (Lem.~\ref{lm:cptp}), the second step due to the invariance of the smooth max-information under isometries (Lem.~\ref{lm:iso}), the third step due to the dimension upper bound on the smooth max-information of coherent classical states from Lem.~\ref{lem:coherent}, and the forth step follows because the output state has to be $\eps$-close in purified distance to the perfect state (which has no correlations to $R$). Note that this chain of arguments does not depend on the structure of the entanglement assistance and thus also applies to assistance that is not necessarily constraint to the form of maximally entangled states.

The upper bound in Eq.~\eqref{eq:entanglement-cost} follows from the analysis in~\cite[Prop.~4.7]{berta08} adapted to our partially smoothed conditional min-entropy. The protocol is such that Alice applies a Haar random rank-$|A_1|$ projective measurement to decouple her systems from the reference, sends the resulting classical measurement outcomes to Bob, who then recovers the full state by Uhlmann's theorem. In particular, fix $N$ orthogonal subspaces of dimension $|A_1|$ on $AA_0$, denote the projectors on these subspaces followed by a fixed unitary mapping it to $A_1$ by $P^x_{A_0A\to A_1}$, and define the isometry
\begin{align}
W_{A_{0}A\to A_{1}X_{A}X_{B}}=\sum_{x}P^{x}_{A_{0}A\to A_{1}}\otimes|x\rangle_{X_{A}}\otimes|x\rangle_{X_{B}}\,.
\end{align}
Now, by standard one-shot decoupling results as for example outlined in~\cite[Thm.~5.2]{dupuis10}, there exists a unitary operator $U_{A_{0}A}$ such that for
\begin{align}
\omega_{A_{1}X_{A}X_{B}B_{0}BR}=W_{A_{0}A\to A_{1}X_{A}X_{B}}U_{A_{0}A} \left(\Phi_{A_0B_0} \otimes \rho_{ABR}\right) U_{A_{0}A}^{\dagger}W^\dagger_{A_{0}A\to A_{1}X_{A}X_{B}}\,,
\end{align}
we have
\begin{align}
T\left(\omega_{A_1X_AR},\tau_{A_1X_A}\otimes\rho_R\right)\leq\frac{1}{2}\sqrt{|A_1|\cdot2^{-H_{\min}(A_0A|R)_{\Phi\otimes\rho}}}=\frac{1}{2}\sqrt{\frac{|A_1|}{|A_0|}\cdot2^{-H_{\min}(A|R)_\rho}}\,,
\end{align}
where $\tau_{A_1X_AX_B}=W_{A_{0}A\to A_{1}X_{A}X_{B}}\left(\frac{1_{A_0}}{|A_0|}\otimes\frac{1_{X_A}}{|X_A|}\right)W^\dagger_{A_{0}A\to A_{1}X_{A}X_{B}}$ and we have used the additivity of the conditional min-entropy~\cite{koenig08}. By the equivalence of generalized trace distance and purified distance from Eq.~\eqref{eq:pt-equivalence} this implies
\begin{align}
P\left(\omega_{A_1X_AR},\tau_{A_1X_A}\otimes\rho_R\right)\leq\left(\frac{|A_1|}{|A_0|}\cdot2^{-H_{\min}(A|R)_\rho}\right)^{\frac{1}{4}}\,.
\end{align}
Moreover, by Uhlmann's theorem there exists an isometry $V_{BB_{0}X_{B}\to BB'B_{1}X_{B}}$ with
\begin{align}\label{eq:uhlmann-merging}
&P\left(\omega_{A_1X_AR},\tau_{A_1X_A}\otimes\rho_R\right)\notag\\
&=P\left(V_{BB_{0}X_{B}\rightarrow BB'B_{1}X_{B}}\big(\omega_{A_{1}X_{A}X_{B}B_{0}BR}\big)V^\dagger_{BB_{0}X_{B}\to BB'B_{1}X_{B}},\tau_{X_AX_B}\otimes\Phi_{A_1B_1}\otimes\rho_{BB'R}\right)\,.
\end{align}
We conclude that applying the isometry $W_{A_{0}A\to A_{1}X_{A}X_{B}}$, sending $X_B$ to Bob, and then applying the isometry $V_{BB_{0}X_{B}\rightarrow BB'B_{1}X_{B}}$, realises quantum state merging for an
\begin{align}
\text{entanglement cost}\quad\log|A_0|-\log|A_1|\quad\text{and error}\quad\left(\frac{|A_1|}{|A_0|}\cdot2^{-H_{\min}(A|R)_\rho}\right)^{\frac{1}{4}}\,.
\end{align}
Now, let $\tilde{\rho}_{AR}\in\cS_{\bullet}(AR)$ be the optimizer in $H_{\min}^{\eps-\delta,P}(A|\dot{R})_\rho$ and choose $A_0B_0$ and $A_1B_1$ such that
\begin{align}\label{eq:choice-merging}
\log|A_0|-\log|A_1|=-H_{\min}^{\eps-\delta,P}(A|\dot{R})_\rho+\log\frac{1}{\delta^4}\,.
\end{align}
For the quantum state
\begin{align}
\tilde{\omega}_{A_{1}X_{A}X_{B}B_{0}BR}:=W_{A_{0}A\to A_{1}X_{A}X_{B}}U_{A_{0}A} \left(\Phi_{A_0B_0} \otimes \tilde{\rho}_{ABR}\right) U_{A_{0}A}^{\dagger}W^\dagger_{A_{0}A\to A_{1}X_{A}X_{B}}
\end{align}
with $\tilde{\rho}_{ABR}$ a purification of $\tilde{\rho}_{AR}$ we estimate
\begin{align}
P\left(\omega_{A_1X_AR},\tau_{A_1X_A}\otimes\rho_R\right)&\leq P\left(\omega_{A_1X_AR},\tau_{A_1X_A}\otimes\tilde{\rho}_R\right)\label{eq:first}\\
&\leq P\left(\tilde{\omega}_{A_1X_AR},\tau_{A_1X_A}\otimes\tilde{\rho}_R\right)+P\left(\tilde{\omega}_{A_1X_AR},\omega_{A_1X_AR}\right)\label{eq:second-triangle}\\
&\leq P\left(\tilde{\omega}_{A_1X_AR},\tau_{A_1X_A}\otimes\tilde{\rho}_R\right)+\eps-\delta\label{eq:third-data}\\
&\leq\left(\frac{|A_1|}{|A_0|}\cdot2^{-H_{\min}^{\eps-\delta,P}(A|\dot{R})_\rho}\right)^{\frac{1}{4}}+\eps-\delta\\
&\leq\eps\,,\label{eq:last}
\end{align}
where in Eq.~\eqref{eq:first} we used that by definition $\rho_R\geq\tilde{\rho}_R$, in Eq.~\eqref{eq:second-triangle} the triangle inequality, in Eq.~\eqref{eq:third-data} that by data-processing $P\left(\omega_{A_{1}X_{A}R},\tilde{\omega}_{A_{1}X_{A}R}\right)\leq\eps-\delta$, and in Eq.~\eqref{eq:last} we applied the choice from Eq.~\eqref{eq:choice-merging}. By Uhlmann's theorem as in Eq.~\eqref{eq:uhlmann-merging} this concludes the proof.

The upper bound in Eq.~\eqref{eq:classical-cost} follows from the protocol given in~\cite[Thm.~1]{Anurag17} based on the convex split lemma (cf.~Lem.~\ref{convsplitlemm}).\footnote{Analogously to the protocol minimizing the entanglement cost, this protocol can also be thought of in terms of decoupling~\cite{Majenz17}.} We sketch the argument in our context. Given $\tilde{\rho}_{AR}\in\cS_{\circ}(AR)$ and $\sigma_A\in\cS_{\circ}(A)$ the optimizers in $I_{\max}^{\eps-\delta,P}(\dot{R};A)_\rho$, the protocol from~\cite[Thm.~1]{Anurag17} together with quantum teleportation realises a $\delta$-error quantum state merging of a purification $\tilde{\rho}_{ABR}$ with $P(\tilde{\rho}_{ABR},\rho_{ABR})\leq\eps-\delta$, for a classical communication cost of $D_{\max}\left(\tilde{\rho}_{AR}\middle\|\sigma_A\otimes\rho_R\right)+\log\frac{1}{\delta^4}$. Following the same line of argument as in Eq.~\eqref{eq:first} - \eqref{eq:last} based on $\rho_R=\tilde{\rho}_R$ and $P(\tilde{\rho}_{ABR},\rho_{ABR})\leq\eps-\delta$ this leads to the desired statement.
\end{proof}

The asymptotic first-order expansions then follow from Eq.~\eqref{eq:Imax-asymptotic} and Eq.~\eqref{eq:Hmin-asymptotic} and we recover the original results on quantum state merging~\cite{horodecki05,horodecki06}. As shown in these references in first-order asymptotically the entanglement cost and classical communication cost can actually be simultaneously minimised\,---\,whereas this becomes unclear in the one-shot setting.


\section{Outlook}\label{sec:outlook}

As we have seen our locally smoothed information measures naturally appear in a plethora of operational tasks in quantum information theory. It might be insightful to study mathematical properties of these measures that go beyond what we presented in \cref{sec:def}. The main open problem raised by our work is to give asymptotic second-order expansions of the partially smoothed information measures
\begin{align}
H_{\min}^{\eps,\Delta}(A|\dot{B})_\rho\quad\text{and}\quad I_{\max}^{\eps,\Delta}(\dot{A};B)_\rho
\end{align}
for the quantum case. This, however, seems to require new ideas as the classical proof technique from Thm.~\ref{thm:classical_equivalence} does not directly translate to the quantum setting and the quantum proof technique from Thm.~\ref{lem:quantum_eq} is not tight enough.

Locally smoothed information measures also appear naturally when defining smooth entropies for quantum channels, as realised in the recent works~\cite{Fang18,Faist18}. We especially point to~\cite{Fang18} where our equivalence results from \cref{sec:equi} already found applications in the context of quantum channel simulations. Finally, another question our approach might shine some light on is the quantum joint typicality conjecture~\cite{Sen18}.\\

{\it Acknowledgements.} We thank Kun Fang and Xi Wang for discussions and Joseph M.~Renes for pointing out a gap in the proof of a previous version of Theorem~\ref{thm:pa-qsi}. MT acknowledges an Australian Research Council Discovery Early Career Researcher Awards, project DE160100821. Part of the work done when R.J. was visiting Tata Institute of Fundamental Research (TIFR), Mumbai, India as a ``VAJRA Adjunt Faculty'' under the ``Visiting Advanced Joint Research (VAJRA) Faculty Scheme'' of the Science and Engineering Research Board (SERB), Department of Science and Technology (DST), Government of India. This work is supported by the Singapore Ministry of Education and the National Research Foundation, also through the Tier 3 Grant “Random numbers from quantum processes” MOE2012-T3-1-009.


\appendix

\section{More properties of partially smoothed information measures}

\begin{lemma}\label{lem:classical-function}
Let $\rho_{XB}=\sum_x|x\rangle\langle x|_X\otimes\rho_B^x\in\cS_\circ(XB)$, $\eps\in[0,1]$, and $f:X\to Z$ be a function. Then, we have for $\omega_{ZB}:=\sum_x|f(x)\rangle\langle f(x)|_Z\otimes\rho_B^x\in\cS_\circ(ZB)$ that
\begin{align}
H_{\min}^{\eps,P}(Z|\dot{B})_\omega\leq H_{\min}^{\eps,P}(X|\dot{B})_\rho\,.
\end{align}
Moreover, when $B = Y$ is classical then we also have $H_{\min}^{\eps,T}(Z|\dot{Y})_\omega\leq H_{\min}^{\eps,T}(X|\dot{Y})_\rho$.
\end{lemma}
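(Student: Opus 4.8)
The plan is to prove the inequality by taking an optimizer $\tilde{\omega}_{ZB}$ for $H_{\min}^{\eps,P}(Z|\dot{B})_\omega$ and constructing from it a \emph{feasible} state $\tilde{\rho}_{XB}$ for $H_{\min}^{\eps,P}(X|\dot{B})_\rho$ whose value $-D_{\max}(\tilde{\rho}_{XB}\|\id_X\otimes\rho_B)$ is no smaller. Throughout I write $\rho_B^z:=\sum_{x:f(x)=z}\rho_B^x$, so that $\omega_{ZB}=\sum_z|z\rangle\langle z|_Z\otimes\rho_B^z$ and, crucially, $\omega_B=\rho_B$. First I would argue that the optimizer may be taken diagonal on $Z$: dephasing $Z$ is a cptp map leaving $\omega_{ZB}$ and $\id_Z\otimes\rho_B$ invariant and the $B$-marginal untouched, so by Property~3 and the data-processing inequality for $D_{\max}$ it does not increase $P(\cdot,\omega_{ZB})$ nor $D_{\max}(\cdot\|\id_Z\otimes\rho_B)$ and keeps the state feasible. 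Writing $\mu:=D_{\max}(\tilde{\omega}_{ZB}\|\id_Z\otimes\rho_B)$ for $\tilde{\omega}_{ZB}=\sum_z|z\rangle\langle z|_Z\otimes\tilde{\omega}_B^z$, feasibility then reads $\tilde{\omega}_B^z\le2^\mu\rho_B$ for all $z$ and $\sum_z\tilde{\omega}_B^z=\tilde{\omega}_B\le\rho_B$.

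For the classical case ($B=Y$, metric $T$) I would build a fixed recovery channel. Define the stochastic map $\mathcal{R}$ that sends diagonal mass at $(z,y)$ to $(x,y)$ with $f(x)=z$ using the weights $\rho_{XY}(x,y)/\omega_{ZY}(z,y)$ (these sum to $1$ over the preimage); by construction $\mathcal{R}(\omega_{ZY})=\rho_{XY}$. Setting $\tilde{\rho}_{XY}:=\mathcal{R}(\tilde{\omega}_{ZY})$, the marginal is preserved exactly since $\tilde{\rho}_Y=\tilde{\omega}_Y\le\rho_Y$, the bound $\tilde{\omega}_{ZY}(z,y)\le2^\mu\rho_Y(y)$ together with $\rho_{XY}(x,y)\le\omega_{ZY}(f(x),y)$ yields $D_{\max}(\tilde{\rho}_{XY}\|\id_X\otimes\rho_Y)\le\mu$, and $T(\tilde{\rho}_{XY},\rho_{XY})\le T(\tilde{\omega}_{ZY},\omega_{ZY})\le\eps$ by monotonicity of the generalized trace distance under $\mathcal{R}$ (Property~3). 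This settles the classical claim.

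The quantum case is the crux, and here a single fixed recovery channel does \emph{not} suffice: one would want a completely positive, trace- and sum-preserving way to split each block $\tilde{\omega}_B^z$ into pieces indexed by $x\in f^{-1}(z)$ that reduces to $\rho_B^x$ on the input $\rho_B^z$, but no fixed cp map can be simultaneously sum-preserving and consistent, since the identity superoperator has Choi rank one and decomposing it into cp summands forces scalar Kraus operators, i.e.\ $\Psi_x=p_x\cdot\mathrm{id}$, which cannot satisfy $\Psi_x(\rho_B^z)=\rho_B^x$ unless all $\rho_B^x$ are proportional. I would instead choose the decomposition \emph{adaptively}, via the following lemma that I would establish from Uhlmann's theorem~\cite{uhlmann85} and joint concavity plus homogeneity of the unnormalized fidelity: for psd $\sigma$ and any splitting $\tau=\sum_i\tau_i$ there is a psd splitting $\sigma=\sum_i\sigma_i$ with $\sum_i\bar{F}(\sigma_i,\tau_i)=\bar{F}(\sigma,\tau)$ (purify $\tau$ with a flag register, measure the flag in the optimal Uhlmann purification of $\sigma$, and combine the resulting inequality with superadditivity of $\bar{F}$). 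Applying this block-by-block with $\sigma=\tilde{\omega}_B^z$ and $\tau_i=\rho_B^x$ gives psd operators $\tilde{\rho}_B^x$ with $\sum_{x:f(x)=z}\tilde{\rho}_B^x=\tilde{\omega}_B^z$ and $\sum_{x:f(x)=z}\bar{F}(\tilde{\rho}_B^x,\rho_B^x)=\bar{F}(\tilde{\omega}_B^z,\rho_B^z)$.

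Finally I would verify that $\tilde{\rho}_{XB}:=\sum_x|x\rangle\langle x|_X\otimes\tilde{\rho}_B^x\in\cS_{\bullet}(XB)$ is feasible with value at least $-\mu$. Sub-normalization and the marginal constraint $\tilde{\rho}_B=\sum_z\tilde{\omega}_B^z=\tilde{\omega}_B\le\rho_B$ follow because the split is trace- and sum-preserving; the bound $\tilde{\rho}_B^x\le\tilde{\omega}_B^{f(x)}\le2^\mu\rho_B$ gives $D_{\max}(\tilde{\rho}_{XB}\|\id_X\otimes\rho_B)\le\mu$; and since both families are block-diagonal, $\bar{F}(\tilde{\rho}_{XB},\rho_{XB})=\sum_x\bar{F}(\tilde{\rho}_B^x,\rho_B^x)=\sum_z\bar{F}(\tilde{\omega}_B^z,\rho_B^z)=\bar{F}(\tilde{\omega}_{ZB},\omega_{ZB})$, while the traces match and $\rho_{XB},\omega_{ZB}$ are normalized so that the generalized-fidelity correction terms vanish, whence $P(\tilde{\rho}_{XB},\rho_{XB})=P(\tilde{\omega}_{ZB},\omega_{ZB})\le\eps$. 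This gives $H_{\min}^{\eps,P}(X|\dot{B})_\rho\ge-\mu=H_{\min}^{\eps,P}(Z|\dot{B})_\omega$. I expect the main obstacle to be exactly the quantum construction: recognizing that the exact-marginal and $D_{\max}$ constraints are incompatible with any fixed channel, and that the adaptive fidelity-decomposition lemma is the right device because it preserves both trace and $\bar{F}$ exactly and hence the purified distance on the nose.
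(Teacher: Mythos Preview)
Your proof is correct, and the classical part coincides almost verbatim with the paper's: they too set $\tilde{\omega}_{XZY}(x,z,y):=\omega_{X|ZY}(x|z,y)\,\tilde{\omega}_{ZY}(z,y)$, which is exactly your stochastic recovery map $\mathcal{R}$ written out.

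For the quantum part the paper takes a different route. Instead of splitting $\tilde{\omega}_B^z$ directly, it first embeds via the isometry $|x\rangle\mapsto|x\rangle|f(x)\rangle$ to get $H_{\min}^{\eps,P}(X|\dot{B})_\rho=H_{\min}^{\eps,P}(XZ|\dot{B})_\omega$, then invokes Uhlmann's theorem in the packaged form of~\cite[Cor.~3.1]{mybook} to extend the optimizer $\tilde{\omega}_{ZB}$ to some $\tilde{\omega}_{XZB}$ that is $\eps$-close to $\omega_{XZB}$, dephases it to be classical on $Z$ (and implicitly on $X$), and finally applies the unsmoothed structural inequality $H_{\min}(XZ|B)_{\tilde{\omega}}\geq H_{\min}(Z|B)_{\tilde{\omega}}$ from~\cite[Lem.~5.3]{mybook}. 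Your fidelity-splitting lemma is essentially the same Uhlmann step, unpacked: if one dephases the paper's extension $\tilde{\omega}_{XZB}$ on $X$ and reads off the blocks over $x\in f^{-1}(z)$, one recovers precisely your $\tilde{\rho}_B^x$ with the same fidelity and marginal identities. The difference is organizational: the paper leans on two off-the-shelf lemmas (isometric invariance and the $H_{\min}(XZ|B)\geq H_{\min}(Z|B)$ chain rule), while your argument is self-contained and makes the construction and all three feasibility checks explicit. A small bonus of your route is that it shows directly why the $B$-marginal constraint $\tilde{\rho}_B\leq\rho_B$ and the $D_{\max}$ bound survive, without passing through an auxiliary $XZ$ system.
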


\begin{proof}
  For purified distance the proof follows along similar lines as~\cite[Prop.~6.4]{mybook}. We first use of the invariance of the smooth conditional min-entropy under isometries (Lem.~\ref{lm:iso}) to assert that
  \begin{align}
    H_{\min}^{\eps,P}(X|\dot{B})_\rho = H_{\min}^{\eps,P}(XZ|\dot{B})_\omega \,, \label{eq:smoothed-iso}
  \end{align}
  where $\omega_{XZB}:=\sum_x |x\rangle\!\langle x|_X \otimes |f(x)\rangle\!\langle f(x)|_Z\otimes\rho_B^x$. Moreover, by~\cite[Lem.~5.3]{mybook}, we have
  \begin{align}
  H_{\min}(XZ|B)_\omega \geq H_{\min}(Z|B)_\omega \,. \label{eq:unsmoothed}
  \end{align}
  To lift this to smooth entropies, let us assume that $\tilde{\omega}_{ZB}$ achieves the maximum in the definition of $H_{\min}^{\eps,P}(Z|\dot{B})_\omega$. Then, by Uhlmann's theorem (specifically by~\cite[Cor.~3.1]{mybook}),  there exists a state $\tilde{\omega}_{XZB}$ that extends $\tilde{\omega}_{ZB}$ and is $\eps$-close to $\omega_{XZB}$. Moreover, by the monotonicity of the purified distance under measurements we can ensure that $\omega_{XZB}$ is classical on $Z$. Finally, Eq.~\eqref{eq:unsmoothed} applied to the state $\omega_{XZB}$ then yields
  \begin{align}
  H_{\min}^{\eps, P}(XZ|\dot{B})_{\omega} \geq H_{\min}(XZ|B)_{\tilde{\omega}} \geq H_{\min}(Z|B)_{\tilde{\omega}} = H_{\min}^{\eps,P}(Z|\dot{B})_{\omega} \,, \label{eq:concludehere}
  \end{align}
   concluding the proof of the first statement.
    
  The proof for the classical case and generalized trace distance is adapted from~\cite[Lem.~2]{watanabe12}. First, note that \eqref{eq:smoothed-iso} holds for generalized trace distance. Given~\eqref{eq:unsmoothed}, it thus remains to extend the optimizer $\tilde{\omega}_{ZY}$ to a suitable $\tilde{\omega}_{XZY}$ without invoking Uhlmann's theorem. As in~\cite{watanabe12}, we define
  \begin{align}
    \tilde{\omega}_{XZY}(x,z,y) := \frac{\omega_{XZY}(x,z,y)}{\omega_{ZY}(z,y)} \, \tilde{\omega}_{ZY}(z,y) = \omega_{X|ZY}(x|z,y) \, \tilde{\omega}_{ZY}(z,y) \,.
  \end{align}
  Using the definition of the generalized trace distance we then find that
  \begin{align}
    T(\tilde{\omega}_{XZY}, \omega_{XZY}) &= \sum_{x,y,z:\atop \omega_{XZY}(x,z,y) \geq \tilde{\omega}_{XZY}(x,z,y)} \omega_{XZY}(x,z,y) - \tilde{\omega}_{XZY}(x,z,y)  \\
    &= \sum_{x,y,z: \atop \omega_{ZY}(z,y) \geq \tilde{\omega}_{ZY}(z,y)} \omega_{X|ZY}(x|z,y) \Big( \omega_{ZY}(z,y) - \tilde{\omega}_{ZY}(z,y) \Big) \\
    &= \sum_{y,z: \atop \omega_{ZY}(z,y) \geq \tilde{\omega}_{ZY}(z,y)} \omega_{ZY}(z,y) - \tilde{\omega}_{ZY}(z,y) \\
    &= T(\tilde{\omega}_{ZY}, \omega_{ZY}) \,.
  \end{align}
  The proof concludes with the same argument given in~\eqref{eq:concludehere}.
\end{proof}

We would also like to note here that for generalized trace distance the above lemma does in fact not hold in the quantum case (when asking for identical smoothing parameters for both min-entropies) and a counter-example can be constructed using the states given in~\cite[p.~5]{renes17}. 

\begin{lemma}\label{lem:tripartite-dimension}
Let $\rho_{ABR}\in\cS_{\circ}(ABR)$ and $\eps\in[0,1]$. Then, we have
\begin{align}
H_{\min}^{\eps,P}(AB|\dot{R})_\rho\leq H_{\min}^{\eps,P}(A|\dot{R})_\rho+\log|B|\,.
\end{align}
\end{lemma}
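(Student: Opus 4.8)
The plan is to lift the elementary non-smooth dimension bound $H_{\min}(AB|R)_\rho \le H_{\min}(A|R)_\rho + \log|B|$ to the partially smoothed setting by showing that tracing out $B$ sends a feasible perturbation for the left-hand side to a feasible perturbation for the right-hand side. First I would record the non-smooth estimate at the level of operator inequalities: if $\tilde{\rho}_{ABR} \le 2^{\lambda}\,\id_{AB}\otimes\rho_R$, then applying $\tr_B$ to both sides and using $\tr_B \id_{AB} = |B|\,\id_A$ gives $\tr_B\tilde{\rho}_{ABR} \le 2^{\lambda}|B|\,\id_A\otimes\rho_R$, whence $D_{\max}(\tr_B\tilde{\rho}_{ABR}\,\|\,\id_A\otimes\rho_R) \le D_{\max}(\tilde{\rho}_{ABR}\,\|\,\id_{AB}\otimes\rho_R) + \log|B|$.

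Next I would take $\tilde{\rho}_{ABR}$ to be an optimizer in the definition of $H_{\min}^{\eps,P}(AB|\dot{R})_\rho$, so that $\tilde{\rho}_{ABR}\in\cS_{\bullet}(ABR)$, $P(\tilde{\rho}_{ABR},\rho_{ABR})\le\eps$, and $\tilde{\rho}_R\le\rho_R$, and set $\tilde{\rho}_{AR}:=\tr_B\tilde{\rho}_{ABR}$. I then check that $\tilde{\rho}_{AR}$ is feasible for $H_{\min}^{\eps,P}(A|\dot{R})_\rho$ on three counts: partial trace preserves sub-normalization, so $\tilde{\rho}_{AR}\in\cS_{\bullet}(AR)$; the partial trace is a cptp map, so the strong monotonicity of the purified distance (Property~3) yields $P(\tilde{\rho}_{AR},\rho_{AR})\le P(\tilde{\rho}_{ABR},\rho_{ABR})\le\eps$; and the $R$-marginal of $\tilde{\rho}_{AR}$ equals that of $\tilde{\rho}_{ABR}$, so it still satisfies $\tilde{\rho}_R\le\rho_R$. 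Combining feasibility with the operator-inequality estimate above gives
\begin{align}
H_{\min}^{\eps,P}(A|\dot{R})_\rho \;\ge\; -D_{\max}(\tilde{\rho}_{AR}\,\|\,\id_A\otimes\rho_R) \;\ge\; -D_{\max}(\tilde{\rho}_{ABR}\,\|\,\id_{AB}\otimes\rho_R) - \log|B| \;=\; H_{\min}^{\eps,P}(AB|\dot{R})_\rho - \log|B|\,,
\end{align}
which is the claim after rearranging.

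The argument is essentially obstacle-free; the only point requiring care\,---\,and precisely the reason the restricted smoothing causes no difficulty here\,---\,is that tracing out $B$ leaves the $R$-marginal untouched, so the fixed-$R$ constraint $\tilde{\rho}_R\le\rho_R$ is inherited automatically by $\tilde{\rho}_{AR}$. (Had we instead been asked to bound the other factor, or to fix a system that is being traced out, this compatibility would fail and a different construction would be needed.) Should one wish to avoid invoking existence of an optimizer, the same chain applies verbatim to an arbitrary feasible $\tilde{\rho}_{ABR}$, after which one takes the supremum over all such perturbations.
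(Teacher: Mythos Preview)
Your proof is correct and self-contained. The paper does not give its own argument but simply cites an external reference (\cite[Lem.~3.9]{berta08}); your proof is precisely the standard route one would expect there, namely pushing the elementary non-smooth operator-inequality bound through the partial trace and verifying that the feasibility constraints\,---\,including the fixed-$R$ condition $\tilde{\rho}_R\le\rho_R$\,---\,survive, which they do because tracing out $B$ leaves the $R$-marginal intact.
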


\begin{proof}
This is implied by~\cite[Lem.~3.9]{berta08}.
\end{proof}

\begin{lemma}\label{lem:coherent}
Let $\rho_{ABXX'}\in\cS_{\circ}(ABXX')$ be coherently classical on $XX'$ and $\eps\in[0,1]$. Then, we have
\begin{align}
I_{\max}^{\eps,P}(BXX';\dot{A})_{\rho}\leq I_{\max}^{\eps,P}(BX';\dot{A})_\rho+\log|X|\,.
\end{align}
\end{lemma}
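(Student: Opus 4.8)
The plan is to combine the coherent-classical structure with the isometric invariance of Lem.~\ref{lm:iso} to reduce the statement to a pure ``dephasing penalty'' inequality, and then to produce the additive $\log|X|$ through a single application of the pinching inequality. Since $\rho_{ABXX'}$ is coherently classical on $XX'$, there is a coherent-copy isometry $V_{X'\to XX'}\colon|x\rangle_{X'}\mapsto|x\rangle_X\otimes|x\rangle_{X'}$ with $\rho_{ABXX'}=V\mu_{ABX'}V^\dagger$ for $\mu_{ABX'}:=V^\dagger\rho_{ABXX'}V\in\cS_\circ(ABX')$, and this $\mu$ satisfies $\mu_A=\rho_A$ together with $\mathcal{D}_{X'}(\mu)=\rho_{ABX'}$, where $\mathcal{D}_{X'}$ is the dephasing map on $X'$ in the classical basis and $\rho_{ABX'}=\tr_X\rho_{ABXX'}$ is exactly the marginal appearing on the right-hand side. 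Applying Lem.~\ref{lm:iso} with $V$ acting on the (non-dotted) $B$-side gives $I_{\max}^{\eps,P}(BXX';\dot A)_\rho=I_{\max}^{\eps,P}(BX';\dot A)_\mu$, so it remains to establish $I_{\max}^{\eps,P}(BX';\dot A)_\mu\le I_{\max}^{\eps,P}(BX';\dot A)_{\mathcal{D}_{X'}(\mu)}+\log|X|$.

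For this core bound I would start from an optimizer $(\tilde\rho_{ABX'},\sigma_{BX'})$ of the right-hand side, i.e.\ a state with $\tilde\rho_A=\rho_A$ and $P(\tilde\rho_{ABX'},\mathcal{D}_{X'}(\mu))\le\eps$ together with $\sigma_{BX'}$ obeying $\tilde\rho_{ABX'}\le 2^k\,\sigma_{BX'}\otimes\rho_A$ for $k:=I_{\max}^{\eps,P}(BX';\dot A)_{\mathcal{D}_{X'}(\mu)}$. Because $\mathcal{D}_{X'}(\mu)$ is diagonal on $X'$, the dephasing argument from Section~\ref{sec:def} lets me take both $\tilde\rho_{ABX'}$ and $\sigma_{BX'}$ block-diagonal on $X'$, so that the spectral data of $(\sigma_{BX'}\otimes\rho_A)^{-1/2}\tilde\rho_{ABX'}(\sigma_{BX'}\otimes\rho_A)^{-1/2}$ is itself block-diagonal on $X'$. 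The mechanism producing the additive $\log|X|$ is then the pinching inequality $\zeta_{ABX'}\le|X|\cdot\mathcal{D}_{X'}(\zeta_{ABX'})$, valid for every $\zeta$: for any coherent candidate $\theta_{ABX'}$ that I build for $\mu$, the off-diagonal coherences on $X'$ are controlled by its dephasing at the cost of a single factor $|X|$, and since $\mathcal{D}_{X'}(\theta)$ feeds into the same operator inequality $\le 2^k\sigma_{BX'}\otimes\rho_A$ that $\tilde\rho$ satisfies, the max-divergence upgrades from $k$ to $k+\log|X|$ against the unchanged $\sigma_{BX'}$.

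The hard part is to build $\theta_{ABX'}$ so that it is simultaneously $\eps$-close to the coherent $\mu$, has the exact fixed marginal $\theta_A=\rho_A$, and dephases back to the smoothing encoded in $\tilde\rho$. The naive lift, namely coherently copying $\tilde\rho$ through $V$, keeps the $D_{\max}$ bound but destroys closeness to $\mu$, whereas a generic Uhlmann extension of $\tilde\rho$ (as in the proof of Lem.~\ref{lem:classical-function}) keeps closeness and the marginal $\theta_A=\rho_A$ but need not respect the $X'$-block-diagonal structure needed for the pinching step; reconciling these competing demands is the crux. I expect the resolution to be to transport the optimal smoothing of $\mathcal{D}_{X'}(\mu)$ \emph{directly} to $\mu$: the smoothing is implemented by an $X'$-block-diagonal operation, and its cut weight is evaluated only against the diagonal of $\mu$, hence is identical on $\mu$ and on $\mathcal{D}_{X'}(\mu)$, so the purified distance is preserved with the same $\eps$ while the pinching inequality supplies the $\log|X|$. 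Checking that this transported operation leaves the marginal $\theta_A=\rho_A$ intact is the delicate point on which I would concentrate the detailed argument.
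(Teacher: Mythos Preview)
Your isometric reduction in the first paragraph is correct and is a legitimate reformulation: by Lem.~\ref{lm:iso} the problem is equivalent to bounding $I_{\max}^{\eps,P}(BX';\dot A)_{\mu}$ by $I_{\max}^{\eps,P}(BX';\dot A)_{\mathcal D_{X'}(\mu)}+\log|X|$, and the pinching inequality is indeed the right source of the additive $\log|X|$. But from that point on the proposal stalls exactly where you say it does. The ``transport the smoothing via an $X'$-block-diagonal operation'' heuristic is not a mechanism: smoothing is an optimization over nearby states, not a map you can apply to $\mu$, and there is no reason a block-diagonal contraction that cuts $\mathcal D_{X'}(\mu)$ down to $\tilde\rho$ should, when applied to $\mu$, produce something $\eps$-close to $\mu$ in purified distance. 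You have correctly diagnosed the tension (closeness to $\mu$ vs.\ control of the dephased state vs.\ the fixed $A$-marginal) but not resolved it.

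The paper's argument supplies precisely the missing construction, and it does so by \emph{going up} to $ABXX'$ rather than staying on $ABX'$. Starting from the optimizer $\tilde\rho_{ABX'}$ (classical on $X'$) for the right-hand side, Uhlmann's theorem produces an extension $\tilde\rho_{ABXX'}$ with $P(\tilde\rho_{ABXX'},\rho_{ABXX'})\le\eps$. The crucial point you are missing is that for \emph{any} extension one has the universal dimension bound $\tilde\rho_{ABXX'}\le|X|\cdot\id_X\otimes\tilde\rho_{ABX'}$; this replaces your requirement that $\mathcal D_{X'}(\theta)$ coincide with $\tilde\rho$, which is what made your approach brittle. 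Chaining with $\tilde\rho_{ABX'}\le 2^{k}\rho_A\otimes\sigma_{BX'}$ and then sandwiching with $\Pi_{XX'}=\sum_x|x\rangle\langle x|_X\otimes|x\rangle\langle x|_{X'}$ gives a feasible pair $(\hat\rho_{ABXX'},\hat\sigma_{BXX'})$ with $P(\hat\rho_{ABXX'},\rho_{ABXX'})\le\eps$ and the desired $D_{\max}$ bound. If you pull this back through your isometry $V$ you get exactly the $\theta_{ABX'}:=V^\dagger\tilde\rho_{ABXX'}V$ you were looking for, and one checks $\mathcal D_{X'}(\theta)\le\tilde\rho_{ABX'}$, so your pinching step then goes through. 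The fixed-marginal constraint is handled at the level $\hat\rho_A\le\tilde\rho_A=\rho_A$ after projection (the paper works with $\hat\rho\in\cS_\bullet$ here); your worry about this ``delicate point'' is legitimate, but it is a normalization detail rather than the conceptual obstruction you were stuck on.
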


\begin{proof}
Let $\tilde{\rho}_{ABX'}\in\cS_{\bullet}(ABX')$ and $\sigma_{BX'}\in\cS_{\circ}(BX')$ be the optimizers in $I_{\max}^{\eps,P}(BX';\dot{A})_\rho$, where both can assumed to be classical on $X'$. Now taking an extension $\tilde{\rho}_{ABXX'}$ of $\tilde{\rho}_{ABX'}$ with $P(\tilde{\rho}_{ABXX'},\rho_{ABXX'})\leq\eps$ we get
\begin{align}\label{eq:apply}
\tilde{\rho}_{ABXX'}\leq|X|\cdot\id_{X}\otimes\tilde{\rho}_{ABX'}\leq|X|\cdot2^{D_{\max}(\tilde{\rho}_{ABX'}\|\rho_{A}\otimes\sigma_{BX'})}\cdot\id_{X}\otimes\rho_{A}\otimes\sigma_{BX'}\,.
\end{align}
Applying $\Pi_{XX'}:=\sum_{X}|x\rangle\langle x|_{X}\otimes|x\rangle\langle x|_{X'}$ to both sides of Eq.~\eqref{eq:apply} leads to
\begin{align}\label{eq:apply2}
\Pi_{XX'}\tilde{\rho}_{ABXX'}\Pi_{XX'}\leq|X|\cdot2^{D_{\max}(\tilde{\rho}_{ABX'}\|\rho_{A}\otimes\sigma_{BX'})}\cdot\rho_{A}\otimes\Big(\Pi_{XX'}(\id_{X}\otimes\sigma_{BX'})\Pi_{XX'}\Big)\,.
\end{align}
For $\hat{\rho}_{ABXX'}:=\Pi_{XX'}\tilde{\rho}_{ABXX'}\Pi_{XX'}$ and $\hat{\sigma}_{BXX'}:=\Pi_{XX'}(\id_{X}\otimes\sigma_{BX'})\Pi_{XX'}$ this is 
\begin{align}
\hat{\rho}_{ABXX'}\leq|X|\cdot2^{D_{\max}(\tilde{\rho}_{ABX'}\|\rho_A\otimes\sigma_{BX'})}\cdot\rho_{A}\otimes\hat{\sigma}_{BXX'}\,,
\end{align}
which in turn implies
\begin{align}
2^{D_{\max}(\hat{\rho}_{ABXX'}\|\rho_A\otimes\hat{\sigma}_{BXX'})}\leq|X|\cdot2^{D_{\max}(\tilde{\rho}_{ABX'}\|\rho_A\otimes\sigma_{BX'})}\,.
\end{align}
Since $\Pi_{XX'}$ is trace non-increasing we have $\hat{\rho}_{ABXX'}\in\cS_{\bullet}(ABXX')$ with $P(\hat{\rho}_{ABXX'},\rho_{ABXX'})\leq\eps$ and together with $\hat{\sigma}_{BXX'}\in\cS_{\bullet}(BXX')$ this finishes the proof.
\end{proof}

\begin{lemma}\label{lem:projective-measurement}
Let $\rho_{AB}\in\cS_\circ(AB)$, $\sigma_B\in\cS_{\circ}(B)$, $\{ P_A^x \}$ be a projective measurement on $A$, and $\omega_{ABX}:=\sum_x |x\rangle\langle x|_X\otimes P_A^x\rho_{AB}P_A^x$. Then, we find that
\begin{align}\label{eq:lemma-first}
D_{\max}(\rho_{AB}\|1_A\otimes\sigma_B)\leq D_{\max}(\omega_{ABX}\|1_A\otimes\sigma_B\otimes\omega_X)\,.
\end{align}
Moreover, for $\eps\in[0,1]$ we have
\begin{align}\label{eq:lemma-second}
H_{\min}^{\eps,P}(A|\dot{B})_\rho\geq\sup_{\tilde{\omega}_{ABX}}-D_{\max}(\tilde{\omega}_{ABX}\|1_A\otimes\rho_B\otimes\tilde{\omega}_X)\,,
\end{align}
where the supremum is over all classical-quantum $\tilde{\omega}_{ABX}\in\cS_{\circ}(ABX)$ with $P(\tilde{\omega}_{ABX},\omega_{ABX})\leq\eps$.
\end{lemma}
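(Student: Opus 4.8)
The engine behind both inequalities is a single Cauchy--Schwarz estimate for the quadratic form of a positive operator, and the first (unsmoothed) statement is essentially the special case of the second in which the smoothing is trivial. For the \textbf{first inequality} I would begin by noting that $\omega_{ABX}$ and $1_A\otimes\sigma_B\otimes\omega_X$ are both block-diagonal in $X$, with $x$-blocks $P_A^x\rho_{AB}P_A^x$ and $p_x\,1_A\otimes\sigma_B$ respectively (where $p_x:=\tr[P_A^x\rho_A]$), so that $D_{\max}(\omega_{ABX}\|1_A\otimes\sigma_B\otimes\omega_X)=\max_x D_{\max}(P_A^x\rho_{AB}P_A^x\|p_x\,1_A\otimes\sigma_B)=:\Lambda$. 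The key lemma is the operator inequality $\rho_{AB}\leq\sum_x p_x^{-1}\,P_A^x\rho_{AB}P_A^x$. To prove it I would test against an arbitrary $|\psi\rangle$, decompose $|\psi\rangle=\sum_x|\psi_x\rangle$ with $|\psi_x\rangle:=(P_A^x\otimes1_B)|\psi\rangle$, and use positivity of $\rho_{AB}$ via Cauchy--Schwarz on the off-diagonal terms $\langle\psi_x|\rho_{AB}|\psi_{x'}\rangle$, followed by a weighted Cauchy--Schwarz with weights $p_x$ and $\sum_x p_x\leq1$. Feeding in the per-block bound $P_A^x\rho_{AB}P_A^x\leq2^\Lambda p_x\,P_A^x\otimes\sigma_B$ (valid since the left-hand side is supported on $\mathrm{range}(P_A^x)\otimes B$) and summing over $x$ gives $\rho_{AB}\leq2^\Lambda\,1_A\otimes\sigma_B$, i.e.\ $D_{\max}(\rho_{AB}\|1_A\otimes\sigma_B)\leq\Lambda$.

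For the \textbf{second inequality} it suffices to exhibit, for each feasible classical-quantum $\tilde\omega_{ABX}=\sum_x|x\rangle\!\langle x|_X\otimes\tilde\omega_{AB}^x\in\cS_\circ(ABX)$ with $P(\tilde\omega_{ABX},\omega_{ABX})\leq\eps$, a state $\tilde\rho_{AB}$ feasible for $H_{\min}^{\eps,P}(A|\dot B)_\rho$ with $D_{\max}(\tilde\rho_{AB}\|1_A\otimes\rho_B)\leq\mu:=D_{\max}(\tilde\omega_{ABX}\|1_A\otimes\rho_B\otimes\tilde\omega_X)$. I would dilate the measure-and-record channel $\rho_{AB}\mapsto\omega_{ABX}$ to the isometry $W:=\sum_x P_A^x\otimes|x\rangle_X|x\rangle_{X'}$, so the coherent state $\Omega_{ABXX'}:=W\rho_{AB}W^\dagger$ obeys $\tr_{X'}\Omega_{ABXX'}=\omega_{ABX}$ and $W^\dagger\Omega_{ABXX'}W=\rho_{AB}$. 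Uhlmann's theorem then supplies an extension $\tilde\Omega_{ABXX'}$ of $\tilde\omega_{ABX}$ with $P(\tilde\Omega_{ABXX'},\Omega_{ABXX'})\leq\eps$, and I would set $\tilde\rho_{AB}:=W^\dagger\tilde\Omega_{ABXX'}W$. Monotonicity of the purified distance under the trace-non-increasing map $W^\dagger(\cdot)W$ immediately yields $\tilde\rho_{AB}\in\cS_\bullet(AB)$ and $P(\tilde\rho_{AB},\rho_{AB})\leq\eps$.

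For the max-divergence bound I would reuse the Cauchy--Schwarz estimate, now applied to $\langle\psi|\tilde\rho_{AB}|\psi\rangle=\langle W\psi|\tilde\Omega_{ABXX'}|W\psi\rangle$ with $|W\psi\rangle=\sum_x|\psi_x\rangle|xx\rangle_{XX'}$ and weights $\tilde q_x:=\tr[\tilde\omega_{AB}^x]$. Crucially, this estimate sees $\tilde\Omega_{ABXX'}$ only through its diagonal blocks $\langle xx|\tilde\Omega_{ABXX'}|xx\rangle$, which are dominated by $\tilde\omega_{AB}^x$, the $x$-block of $\tr_{X'}\tilde\Omega_{ABXX'}=\tilde\omega_{ABX}$. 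Combined with the block form $\tilde\omega_{AB}^x\leq2^\mu\tilde q_x\,1_A\otimes\rho_B$ of the definition of $\mu$, it yields $\tilde\rho_{AB}\leq2^\mu\,1_A\otimes\rho_B$, and taking the supremum over $\tilde\omega_{ABX}$ gives the claim.

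The hard part is precisely this last estimate: Uhlmann's theorem secures closeness to $\rho_{AB}$ but gives no control on $D_{\max}(\tilde\Omega_{ABXX'}\|\cdot)$, so a direct data-processing bound through $W^\dagger(\cdot)W$ is unavailable and the recoherence could a priori inflate the max-divergence; what rescues the argument is that the Cauchy--Schwarz bound depends on $\tilde\Omega_{ABXX'}$ only through the controlled marginal $\tilde\omega_{AB}^x$. A secondary technical point is the marginal constraint $\tilde\rho_B\leq\rho_B$ of Def.~\ref{def:smoothminentropy}: the construction directly gives only $\tilde\rho_B\leq\tilde\omega_B$, so I would finish by arguing this down to $\rho_B$ through a marginal-correction step in the spirit of the proof of Lem.~\ref{lm:iso}.
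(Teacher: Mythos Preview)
Your argument for the first inequality is correct and more explicit than the paper's, which simply cites an external reference; your pinching-type bound $\rho_{AB}\leq\sum_x p_x^{-1}P_A^x\rho_{AB}P_A^x$ via Cauchy--Schwarz is essentially the content of that citation. For the second inequality the paper takes a different, more structural route: it works with the single-register isometry $V=\sum_x|x\rangle_X\otimes P_A^x$, observes that conjugating any feasible $\tilde\omega_{ABX}$ by the projector $P_V=VV^\dagger$ onto the range of $V$ does not increase the relevant $D_{\max}$, and then concludes by combining the first inequality with isometry invariance of $H_{\min}^{\eps,P}(A|\dot B)$ (Lem.~\ref{lm:iso}). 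Your route --- doubling the classical register via $W=\sum_x P_A^x\otimes|xx\rangle_{XX'}$, invoking Uhlmann to extend $\tilde\omega_{ABX}$ to $\tilde\Omega_{ABXX'}$ close to $W\rho_{AB}W^\dagger$, and reusing the Cauchy--Schwarz estimate through the diagonal blocks $\langle xx|\tilde\Omega|xx\rangle\leq\tilde\omega_{AB}^x$ --- is more computational but makes the purified-distance constraint $P(\tilde\rho_{AB},\rho_{AB})\leq\eps$ completely transparent, something the paper's terse sketch leaves implicit.

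There is, however, a genuine gap at the point you flag: the marginal constraint $\tilde\rho_B\leq\rho_B$ required by Def.~\ref{def:smoothminentropy}. Your construction yields only $\tilde\rho_B\leq\tilde\omega_B$, and nothing in the hypotheses forces $\tilde\omega_B\leq\rho_B=\omega_B$. Invoking ``the spirit of Lem.~\ref{lm:iso}'' does not close this: the contraction argument behind that lemma (via Lem.~\ref{lm:someineq}) only compares $\tilde\rho_B$ to the $B$-marginal of the state being conjugated down, namely $\tilde\Omega_B=\tilde\omega_B$, not to $\rho_B$. Closing the gap would require either an additional hypothesis on $\tilde\omega$, or a separate marginal-correction step that replaces $\tilde\rho_{AB}$ by a sub-state whose $B$-marginal is dominated by $\rho_B$ without spoiling the $D_{\max}$ bound or the $\eps$-ball membership --- and neither of these is routine.
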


\begin{proof}
Eq.~\eqref{eq:lemma-first} is~\cite[Prop.~4.9]{berta08}. For Eq.~\eqref{eq:lemma-second} note that the isometric purification of $\{P_{A}^{x}\}$, $V=\sum_{x}|x\rangle_{X}\otimes P_{A}^{x}$, can be inverted on the image of $V$
\begin{align}
P_{V}=\sum_{x}P_{A}^{x}\otimes|x\rangle\langle x|_X\,.
\end{align}
Now, for every classical-quantum $\tilde{\omega}_{ABX}\in\cS_{\bullet}(ABX)$ with $P(\tilde{\omega}_{ABX},\omega_{ABX})\leq\eps$ we have for $\tilde{\omega}^V_{ARX}:=P_{V}\tilde{\omega}_{ABX}P_V$ that
\begin{align}
D_{\max}(\tilde{\omega}^V_{ABX}\|1_A\otimes\rho_B\otimes\tilde{\omega}_{X}^V)\leq D_{\max}(\tilde{\omega}_{ARX}\|1_A\otimes\rho_B\otimes\tilde{\omega}_{X})\,.
\end{align}
Hence, we can restrict the supremum in Eq.~\eqref{eq:lemma-second} to states in the image of $V$ and the claim now follows by Eq.~\eqref{eq:lemma-first} together with the invariance under isometries.
\end{proof}


\section{Assorted additional lemmas}\label{app:lemmas}

\begin{lemma}[Special case of Lem.~A.1 in~\cite{mythesis}]\label{lm:someineq}
Let $\rho_{AB} \in \cS_{\bullet}(AB)$ and $L: B \to B'$ be a contraction. Then, we have
\begin{align}
\tr_{B'} \left[ ( \id_A \otimes L ) \rho_{AB} ( \id_A \otimes L )^{\dag} \right] \leq \rho_A\,.
\end{align}
\end{lemma}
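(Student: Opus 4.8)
The plan is to reduce the operator inequality to a statement about a single partial trace by first establishing the algebraic identity
\begin{align}
\tr_{B'}\big[(\id_A\otimes L)\rho_{AB}(\id_A\otimes L)^\dagger\big] = \tr_B\big[(\id_A\otimes L^\dagger L)\rho_{AB}\big]\,.
\end{align}
This identity holds for \emph{any} operator in place of $\rho_{AB}$ and is nothing but the cyclicity of the trace applied only to the subsystem being traced out: writing the partial trace over $B'$ in an orthonormal basis and moving $(\id_A\otimes L^\dagger)$ from the right of $(\id_A\otimes L)$ around to the left collapses $L^\dagger L$ onto the original system $B$. First I would verify this identity (a one-line basis computation), since its role is to convert the map $L:B\to B'$ between different spaces into a single operator $M:=L^\dagger L$ acting on $B$.

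The contraction hypothesis enters only through $M=L^\dagger L\le\id_B$, equivalently $N:=\id_B-M\ge0$. It then remains to prove the monotonicity statement that $\tr_B[(\id_A\otimes N)\rho_{AB}]\ge0$ whenever $N\ge0$ and $\rho_{AB}\ge0$; combined with the identity and $\tr_B[\rho_{AB}]=\rho_A$ this yields exactly $\tr_{B'}[\,\cdots\,]=\rho_A-\tr_B[(\id_A\otimes N)\rho_{AB}]\le\rho_A$. The key point is that one must not handle the partial trace of a product of two positive operators naively; instead I would symmetrise using $\sqrt N$. Using cyclicity of the partial trace over $B$ for operators supported on $B$, one has $\tr_B[(\id_A\otimes N)\rho_{AB}]=\tr_B[(\id_A\otimes\sqrt N)\rho_{AB}(\id_A\otimes\sqrt N)]$, and the argument inside is a congruence $A\rho_{AB}A^\dagger$ of a positive operator with $A=\id_A\otimes\sqrt N$ Hermitian, hence positive; since the partial trace of a positive operator is positive, the inequality closes.

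I do not expect any serious obstacle, as the lemma is elementary; the only place that needs care is the symmetrisation step, since positivity of $\tr_B$ of a product of two positive operators must be obtained by first passing to the manifestly positive form $(\id_A\otimes\sqrt N)\rho_{AB}(\id_A\otimes\sqrt N)$ rather than assumed directly. An equivalent and perhaps more transparent route I would keep in mind as a cross-check is to regard $X_B\mapsto LX_BL^\dagger$ as a completely positive trace-non-increasing map on $B$ (trace-non-increasing precisely because $L^\dagger L\le\id_B$); the claim is then the single-Kraus instance of the general fact that applying a local completely positive trace-non-increasing map on $B$ and tracing out its output can only decrease the $A$-marginal in the L\"owner order.
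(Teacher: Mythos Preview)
Your argument is correct. The paper itself does not supply a proof of this lemma; it simply records it as a special case of Lem.~A.1 in the cited thesis, so there is no in-paper proof to compare against. Your route via the identity $\tr_{B'}\big[(\id_A\otimes L)\rho_{AB}(\id_A\otimes L)^\dagger\big]=\tr_B\big[(\id_A\otimes L^\dagger L)\rho_{AB}\big]$, followed by the symmetrisation $\tr_B\big[(\id_A\otimes N)\rho_{AB}\big]=\tr_B\big[(\id_A\otimes\sqrt N)\rho_{AB}(\id_A\otimes\sqrt N)\big]\ge0$ with $N=\id_B-L^\dagger L\ge0$, is sound and fully self-contained; the alternative viewpoint you mention (regarding $X\mapsto LXL^\dagger$ as a single-Kraus completely positive trace-non-increasing map on $B$) is exactly the perspective under which the cited thesis lemma is stated.
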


\begin{lemma}[Def.~3.3 \& 3.4 in~\cite{mybook}]\label{lem:tracedistprob}
Let $P_X, Q_X \in \cS_{\bullet}(X)$. Then, for $\mathcal{X}$ the set associated to $X$ we have
\begin{align}
T(P_X, Q_X) = \max_{S\in \mathcal{X}}|P_X(S) - Q_X(S)|\,.
\end{align}
\end{lemma}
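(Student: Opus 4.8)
The plan is to express both sides in terms of the positive and negative parts of the signed difference $g(x) := P_X(x) - Q_X(x)$. Since $P_X$ and $Q_X$ are diagonal (classical), I would first set $S_+ := \{x \in \mathcal{X} : g(x) > 0\}$ with complement $S_- := \mathcal{X} \setminus S_+$, and introduce the nonnegative totals $a := \sum_{x \in S_+} g(x)$ and $b := -\sum_{x \in S_-} g(x)$, so that $\sum_x |g(x)| = a + b$ and $\sum_x g(x) = a - b$.

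For the left-hand side I would expand the definition of the generalized trace distance in the diagonal case and substitute these totals,
\begin{align}
T(P_X, Q_X) = \frac12 \sum_x |g(x)| + \frac12 \Big| \sum_x g(x) \Big| = \frac12 (a + b) + \frac12 |a - b| = \max\{a, b\}\,,
\end{align}
where the final equality is the elementary identity $\tfrac12(u+v) + \tfrac12|u-v| = \max\{u, v\}$. For the right-hand side, I would observe that $P_X(S) - Q_X(S) = \sum_{x \in S} g(x)$ for every $S \subseteq \mathcal{X}$, so this quantity is maximized by $S = S_+$ (value $a$) and minimized by $S = S_-$ (value $-b$); hence $\max_S |P_X(S) - Q_X(S)| = \max\{a, b\}$ as well, and the two sides agree.

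This argument is essentially elementary, so I do not anticipate a genuine obstacle; the only point requiring care is the sub-normalization term $\tfrac12|\tr P_X - \tr Q_X|$, which is absent from the familiar normalized statement but is exactly what the $\tfrac12|a-b|$ contribution on the left accounts for. If one prefers to reduce cleanly to the standard (normalized) total-variation identity, an alternative is to embed $P_X$ and $Q_X$ into normalized distributions $\bar P, \bar Q$ on the enlarged alphabet $\mathcal{X} \cup \{\perp\}$ carrying the deficits $1 - \tr P_X$ and $1 - \tr Q_X$ on a fresh symbol $\perp$; then $T(P_X, Q_X)$ equals the ordinary total-variation distance of $\bar P$ and $\bar Q$, the textbook maximization identity applies on $\mathcal{X} \cup \{\perp\}$, and one drops $\perp$ from the optimal event by passing to its complement (which leaves the absolute difference unchanged since $\bar P, \bar Q$ are both normalized), recovering the supremum over $S \subseteq \mathcal{X}$.
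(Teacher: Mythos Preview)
Your argument is correct. The decomposition of $T(P_X,Q_X)$ into $\max\{a,b\}$ via the identity $\tfrac12(u+v)+\tfrac12|u-v|=\max\{u,v\}$ is sound, and the observation that $\sum_{x\in S}g(x)$ is maximized on $S_+$ and minimized on $S_-$ yields $\max_S|P_X(S)-Q_X(S)|=\max\{a,b\}$ as claimed. The alternative embedding into normalized distributions on $\mathcal{X}\cup\{\perp\}$ is also valid, and the complementation trick to eliminate $\perp$ from the optimal event is exactly the right move.

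There is nothing to compare here: the paper does not supply its own proof of this lemma but merely cites it as following from Definitions~3.3 and~3.4 of the referenced monograph. Your direct computation is the standard elementary argument one would expect for this fact, and it stands on its own.
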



\begin{lemma}[Variation of convex-split lemma from~\cite{Anurag17}]\label{convsplitlemm}
Let $\eps, \delta\in (0,1)$ and $\rho_{AB}, \rho'_{AB}\in \cS_{\circ}(AB),  \sigma_B\in \cS_{\circ}(B)$ such that $\Delta(\rho_{AB},\rho'_{AB})\leq \eps$. Then, for the quantum state
\begin{align}
\tau_{AB_1\ldots B_{2^R}} = \frac{1}{2^R}\sum_{j=1}^{2^R}\rho_{AB_j}\otimes \sigma_{B_1}\otimes \ldots \sigma_{B_{j-1}}\otimes\sigma_{B_{j+1}}\otimes\ldots \sigma_{B_{2^R}}&\\
\text{with $R\geq\left\lceil D_{\max}(\rho'_{AB}\|\rho'_A\otimes \sigma_B)+2\log\frac{2}{\delta}\right\rceil$}&
\end{align}
we have $\Delta(\tau_{AB_1\ldots B_{2^R}},\rho'_A\otimes\sigma_{B_1}\otimes\ldots\sigma_{B_{2^R}})\leq\eps+\delta$.
\end{lemma}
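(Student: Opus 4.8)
The plan is to reduce the statement to the original convex-split lemma of \cite{Anurag17} by a triangle-inequality argument that isolates the discrepancy between $\rho_{AB}$ and $\rho'_{AB}$. To that end I would introduce the auxiliary ``primed'' convex-split state
\begin{align}
\tau'_{AB_1\ldots B_{2^R}} := \frac{1}{2^R}\sum_{j=1}^{2^R}\rho'_{AB_j}\otimes\sigma_{B_1}\otimes\cdots\sigma_{B_{j-1}}\otimes\sigma_{B_{j+1}}\otimes\cdots\sigma_{B_{2^R}}\,,
\end{align}
obtained from $\tau$ by replacing every occurrence of $\rho_{AB}$ with $\rho'_{AB}$. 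The target product state is $\rho'_A\otimes\sigma_{B_1}\otimes\cdots\sigma_{B_{2^R}}$, whose $A$-marginal matches that of $\tau'$ (and not that of $\tau$), so routing through $\tau'$ is the natural choice.

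First I would control $\Delta\big(\tau',\rho'_A\otimes\sigma_{B_1}\otimes\cdots\sigma_{B_{2^R}}\big)$. Since $\tau'$ is exactly the convex-split state built from the \emph{single} state $\rho'_{AB}$ and the fixed $\sigma_B$, the original convex-split lemma applies verbatim: the choice $R\geq\lceil D_{\max}(\rho'_{AB}\|\rho'_A\otimes\sigma_B)+2\log\frac{2}{\delta}\rceil$ is precisely calibrated so that this distance is at most $\delta$. The slack $2\log\frac{2}{\delta}$ in $R$ is what drives the relative-entropy (equivalently fidelity) gap below the level that translates into a bound of $\delta$ in the metric $\Delta$.

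Next I would bound $\Delta(\tau,\tau')$. The key observation is that $\tau$ and $\tau'$ are the images of $\rho_{AB}$ and $\rho'_{AB}$ under one and the same channel: let $\mathcal{N}$ be the CPTP map that, on input $\xi_{AB}$, selects $j\in\{1,\ldots,2^R\}$ uniformly at random, relabels the $B$-system of $\xi$ as $B_j$, and tensors the fixed state $\sigma$ onto every remaining register $B_i$. Being a uniform mixture of isometric placements followed by tensoring with a fixed state, $\mathcal{N}$ is completely positive and trace preserving, and by construction $\tau=\mathcal{N}(\rho_{AB})$ and $\tau'=\mathcal{N}(\rho'_{AB})$. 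Strong monotonicity (Property~3) then gives $\Delta(\tau,\tau')\leq\Delta(\rho_{AB},\rho'_{AB})\leq\eps$.

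Finally, the triangle inequality (Property~2) combines the two bounds,
\begin{align}
\Delta\big(\tau,\rho'_A\otimes\sigma_{B_1}\otimes\cdots\sigma_{B_{2^R}}\big)\leq\Delta(\tau,\tau')+\Delta\big(\tau',\rho'_A\otimes\sigma_{B_1}\otimes\cdots\sigma_{B_{2^R}}\big)\leq\eps+\delta\,,
\end{align}
which is the claim. The only genuinely new ingredient beyond \cite{Anurag17} is the monotonicity step, so the main point to get right is that $\mathcal{N}$ treats the two input states symmetrically\,---\,it never refers to $\rho$ or $\rho'$\,---\,so that the original lemma can be invoked as a black box for $\tau'$ while the $\eps$-perturbation is absorbed cleanly. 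I expect the only technical care needed is matching the constant in the first step to whichever metric $\Delta$ is in play (purified distance versus generalized trace distance) when converting the original bound into a bound on $\Delta$.
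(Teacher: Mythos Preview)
Your proposal is correct and follows essentially the same route as the paper: introduce the primed convex-split state $\tau'$, invoke the original lemma from \cite{Anurag17} to bound $\Delta(\tau',\rho'_A\otimes\sigma^{\otimes 2^R})\leq\delta$, then control $\Delta(\tau,\tau')\leq\eps$ and finish with the triangle inequality. The one minor difference is that for the bound $\Delta(\tau,\tau')\leq\Delta(\rho,\rho')$ the paper argues via concavity of the fidelity (for $P$) and convexity (for $T$) separately, whereas your single-channel monotonicity argument handles both metrics uniformly via Property~3; this is a slight streamlining but not a different idea.
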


\begin{proof}
We only sketch the minor additional steps compared to the proof in~\cite{Anurag17}. For
\begin{align}
\tau'_{AB_1\ldots B_{2^R}} = \frac{1}{2^R}\sum_{j=1}^{2^R}\rho'_{AB_j}\otimes \sigma_{B_1}\otimes \ldots \sigma_{B_{j-1}}\otimes\sigma_{B_{j+1}}\otimes\ldots \sigma_{B_{2^R}}
\end{align}
we have from~\cite{Anurag17} that 
$$P(\tau'_{AB_1\ldots B_{2^R}},  \rho'_A\otimes \sigma_{B_1} \otimes\ldots \sigma_{B_{2^R}})\leq \delta\,,$$
which implies by the Fuchs-Van de Graaf inequality that
$$T(\tau'_{AB_1\ldots B_{2^R}},  \rho'_A\otimes \sigma_{B_1} \otimes\ldots \sigma_{B_{2^R}})\leq \delta\,.$$
Now, by the concavity of fidelity
$$F(\tau_{AB_1\ldots B_{2^R}}, \tau'_{AB_1\ldots B_{2^R}})\geq F(\rho_{AB}, \rho'_{AB})\quad\implies\quad P(\tau_{AB_1\ldots B_{2^R}}, \tau'_{AB_1\ldots B_{2^R}})\leq P(\rho_{AB}, \rho'_{AB})\,,$$
and by the triangle inequality $T(\tau_{AB_1\ldots B_{2^R}}, \tau'_{AB_1\ldots B_{2^R}})\leq T(\rho_{AB}, \rho'_{AB})$. The proof now follows by the triangle inequality for either $P$ or $T$.
\end{proof}

\suppress{
Following~\cite[Section 4.3]{mybook}, we define the minimal quantum Renyi divergence as
$$\tilde{D}_{\alpha}(\rho\|\sigma) := \frac{1}{\alpha-1}\log\left(\frac{1}{\tr[\rho]}\cdot\tr\left[(\sigma^{\frac{1-\alpha}{2\alpha}}\rho\sigma^{\frac{1-\alpha}{2\alpha}})^\alpha\right]\right).$$
As usual, we define
$$D(\rho\|\sigma):= \lim_{\alpha\rightarrow 1}\tilde{D}_{\alpha}(\rho\|\sigma).$$
Defining $\rho'= \frac{\rho}{\tr[\rho]}$, it can be verified that
\begin{equation}\label{eq:normalizedDalpha}
\tilde{D}_{\alpha}(\rho\|\sigma) = \tilde{D}_{\alpha}(\rho'\|\sigma) +\log\tr[\rho], \quad D(\rho\|\sigma) = D(\rho'\|\sigma) + \log\tr[\rho].
\end{equation}
Following is a version of convex-split lemma~\cite{Anurag17} for subnormalized states. 
\begin{lemma}\label{convsplitlemm}
Let $\delta\in (0,1)$ and $\rho_{AB}\in \cS_{\bullet}(AB), \rho^*_A\in \cS_{\circ}(A), \sigma_B\in \cS_{\circ}(B)$ such that $\rho_A<\rho^*_A$. Let $R$ be a natural number such that
\begin{align}
R \geq  D_{\max}(\rho_{AB} \| \rho^*_A\otimes \sigma_B) +   2\log\frac{1}{\delta}\,.
\end{align}
Define the following quantum state   
\begin{align}
& \tau_{AB_1\ldots B_{2^R}} = \frac{1}{2^R}\sum_{j=1}^{2^R}\rho_{AB_j}\otimes \sigma_{B_1}\otimes \ldots \sigma_{B_{j-1}}\otimes\sigma_{B_{j+1}}\otimes\ldots \sigma_{B_{2^R}}
\end{align}
Then, we have
\begin{align}
F(\tau_{AB_1\ldots B_{2^R}}, \rho^*_A\otimes \sigma_{B_1} \otimes\ldots \sigma_{B_{2^R}}) \geq \tr[\rho_{AB}]\cdot 2^{-\frac{\delta^2}{2}}\,.
\end{align}
\end{lemma}
\begin{proof}
Let $\beta:=\tr[\rho_{AB}]$ and define $\rho'_{AB}:= \frac{1}{\beta}\rho_{AB}$, $\tau'_{AB_1\ldots B_{2^R}} := \frac{1}{\beta}\tau_{AB_1\ldots B_{2^R}}$. Observe that
$$\rho'_{AB} < \frac{2^{D_{\max}(\rho_{AB} \| \rho^*_A\otimes \sigma_B)}}{\beta}\rho^*_{A}\otimes\sigma_B, \quad \rho'_A < \frac{1}{\beta}\rho^*_A. $$
From \cite{Anurag17}, we conclude that
$$D(\tau'_{AB_1\ldots B_{2^R}}\| \rho^*_A\otimes \sigma_{B_1} \otimes\ldots \sigma_{B_{2^R}}) \leq \log(1+ \frac{2^{D_{\max}(\rho_{AB} \| \rho^*_A\otimes \sigma_B)}}{2^R}) + \log\frac{1}{\beta}.$$
Since $\tr[\tau_{AB_1\ldots B_{2^R}}] = \tr[\rho_{AB}]=\beta$, this can be rewritten as
$$D(\tau'_{AB_1\ldots B_{2^R}}\| \rho^*_A\otimes \sigma_{B_1} \otimes\ldots \sigma_{B_{2^R}}) + \tr[\tau_{AB_1\ldots B_{2^R}}]\leq \log(1+ \frac{2^{D_{\max}(\rho_{AB} \| \rho^*_A\otimes \sigma_B)}}{2^R}).$$
From Eq.~\ref{eq:normalizedDalpha} and the definition of $R$, we have
$$D(\tau_{AB_1\ldots B_{2^R}}\| \rho^*_A\otimes \sigma_{B_1} \otimes\ldots \sigma_{B_{2^R}}) \leq \log(1+ \frac{2^{D_{\max}(\rho_{AB} \| \rho^*_A\otimes \sigma_B)}}{2^R}) \leq \delta^2.$$
As shown in~\cite[Corollary 4.2]{mybook}, $\tilde{D}_{\alpha}$ is monotonically increasing in $\alpha$. Thus,
$$\tilde{D}_{\frac{1}{2}}(\tau_{AB_1\ldots B_{2^R}}\| \rho^*_A\otimes \sigma_{B_1} \otimes\ldots \sigma_{B_{2^R}}) \leq \delta^2.$$ Again using the identity $\tr[\tau_{AB_1\ldots B_{2^R}}] = \tr[\rho_{AB}]$ and the definition of generalized fidelity,
we conclude that
$$-2\log\left(\frac{1}{\tr[\rho_{AB}]}F(\tau_{AB_1\ldots B_{2^R}}\| \rho^*_A\otimes \sigma_{B_1} \otimes\ldots \sigma_{B_{2^R}})\right)  \leq \delta^2.$$
This implies the lemma.
\end{proof}
}


\bibliographystyle{ultimate}
\bibliography{library}

\end{document}